  \providecommand\BibTeX{{%
    \normalfont B\kern-0.5em{\scshape i\kern-0.25em b}\kern-0.8em\TeX}}}
\newtheorem{thm}{Theorem}
\newtheorem{prop}[thm]{Proposition}
\newtheorem{lem}[thm]{Lemma}
\theoremstyle{remark}
\newcommand{\ds}{\displaystyle}
\newcounter{num}
\newcommand{\ZZ}{\mathbb{Z}} 
\newcommand{\RR}{\mathbb{R}} 
\newcommand{\ca}[1]{\mathcal{#1}} 
\newcommand{\ol}[1]{\overline{#1}} 
\newcommand{\wt}[1]{\widetilde{#1}} 
\newcommand{\vol}{\mathrm{vol}}
\newcommand{\sweep}{\mathrm{sweep}}
\newcommand{\1}{\mathbf{1}}
\newcommand{\argmin}{\mathop{\mathrm{argmin}}}
\newcommand{\argmax}{\mathop{\mathrm{argmax}}}
\newcommand{\pr}{\bm{\mathrm{pr}}}
\def\inprod<#1>{\left\langle #1 \right\rangle} 
\begin{document}

\title{Hypergraph Clustering Based on PageRank} 

\author{Yuuki Takai}
\authornote{Both authors contributed equally to this research.}
\affiliation{
\institution{RIKEN AIP}
\city{Tokyo}
\country{Japan}
}
\email{yuuki.takai@riken.jp}

\author{Atsushi Miyauchi}
\authornotemark[1]
\affiliation{
\institution{University of Tokyo}
\city{Tokyo}
\country{Japan}
}
\email{miyauchi@mist.i.u-tokyo.ac.jp}

\author{Masahiro Ikeda}
\affiliation{
\institution{RIKEN AIP}
\city{Tokyo}
\country{Japan}
}
\email{masahiro.ikeda@riken.jp}

\author{Yuichi Yoshida}
\affiliation{%
  \institution{National Institute of Informatics}
  \city{Tokyo}
  \country{Japan}
}
\email{yyoshida@nii.ac.jp}

\begin{abstract}
  A hypergraph is a useful combinatorial object to model ternary or higher-order relations among entities.
  Clustering hypergraphs is a fundamental task in network analysis.
  In this study, we develop two clustering algorithms based on personalized PageRank on hypergraphs.
  The first one is local in the sense that its goal is to find a tightly connected vertex set with a bounded volume including a specified vertex.
  The second one is global in the sense that its goal is to find a tightly connected vertex set.
  For both algorithms, we discuss theoretical guarantees on the conductance of the output vertex set.
  Also, we experimentally demonstrate that our clustering algorithms outperform existing methods in terms of both the solution quality and running time.
  To the best of our knowledge, ours are the first practical algorithms for hypergraphs with theoretical guarantees on the conductance of the output set.
\end{abstract}

\begin{CCSXML}
<ccs2012>
<concept>
<concept_id>10002951.10003260.10003277</concept_id>
<concept_desc>Information systems~Web mining</concept_desc>
<concept_significance>500</concept_significance>
</concept>
<concept>
<concept_id>10002950.10003624.10003633.10003645</concept_id>
<concept_desc>Mathematics of computing~Spectra of graphs</concept_desc>
<concept_significance>300</concept_significance>
</concept>
</ccs2012>
\end{CCSXML}

\ccsdesc[500]{Information systems~Web mining}
\ccsdesc[300]{Mathematics of computing~Spectra of graphs}

\keywords{hypergraph, clustering, personalized PageRank, Laplacian}

\maketitle


\section{Introduction}

Graph clustering is a fundamental task in network analysis, where the goal is to find a tightly connected vertex set in a graph.
It has been used in many applications such as community detection~\cite{Fortunato:2010iw} and image segmentation~\cite{Felzenszwalb:2004bx}.
Because of its importance, graph clustering has been intensively studied, and a plethora of clustering methods have been proposed.
See, e.g.,~\cite{Fortunato:2010iw} for a survey.

There is no single best criterion to measure the quality of a vertex set as a cluster; however, one of the most widely used is the notion of \emph{conductance}.
Roughly speaking, the conductance of a vertex set is small if many edges reside within the vertex set, whereas only a few edges leave it (see Section~\ref{sec:pre} for details).
Many clustering methods have been shown to have theoretical guarantees on the conductance of the output vertex set~\cite{Alon:1986gz,Alon:1985jg,Andersen:2007tsa,Kloster:2014wq,Chung:2007ep,Spielman:2013cc,Gharan:2012fy,Andersen:2009fj,Kwok:2017ga}.

Although a graph is suitable to represent binary relations over entities, many relations, such as coauthorship, goods that are 
purchased together, and chemicals involved in metabolic reactions inherently comprise three or more entities. 
A hypergraph is a more natural 
object to model such relations. 
Indeed, if we represent coauthorship network by a graph, 
an edge only indicates the existence of co-authored paper, and information about co-authors in one paper is lost. 
However, hypergraphs do not lose that information.



As with graphs, hypergraph clustering is also an important task,  
and it is natural to use an extension of conductance for hypergraphs~\cite{Chan:2018eu,Yoshida:2019zz} to measure the quality of a vertex set as a cluster.
Although several methods have been shown to have theoretical guarantees on the conductance of the output vertex set~\cite{Chan:2018eu,Yoshida:2019zz,ikeda2018finding}, none of them are efficient in practice.

To obtain efficient hypergraph clustering algorithms with a theoretical guarantee, we utilize the graph clustering algorithm developed by Andersen~\emph{et~al.}~\cite{Andersen:2007tsa}.
Their algorithm is based on \emph{personalized PageRank} (PPR), which is the stationary distribution of a random walk that jumps back to a specified vertex with a certain probability in each step of the walk.
Roughly speaking, their algorithm first computes the PPR from a specified vertex, and then returns a set of vertices with high PPR values.
They showed a theoretical guarantee on the conductance of the returned vertex set.

In this work, we propose two hypergraph clustering algorithms by extending Andersen~\emph{et~al.}'s algorithm~\cite{Andersen:2007tsa}.
The first one is \emph{local} in the sense that its goal is to find a vertex set of a small conductance and a bounded volume including a specified vertex.
The second one is \emph{global} in the sense that its goal is to find a vertex set of the smallest conductance.

Although there could be various definitions of random walk on hypergraphs and corresponding PPRs, we adopt the PPR recently introduced by Li and Milenkovic~\cite{Li:2018we} using a hypergraph Laplacian~\cite{Chan:2018eu,Yoshida:2019zz}.
Because the hypergraph Laplacian well captures the cut information of hypergraphs, the corresponding PPR can be used to find vertex sets with small conductance.

Because we can efficiently compute (a good approximation to) the PPR, our algorithms are efficient.
Moreover, as with Andersen~\emph{et~al.}'s algorithm~\cite{Andersen:2007tsa}, our algorithms have theoretical guarantees on the conductance of the output vertex set.
To the best of our knowledge, no local clustering algorithm with a theoretical guarantee on the conductance of the output set is known for hypergraphs.
Although, as mentioned previously, several algorithms have been proposed for global clustering for hypergraphs~\cite{Chan:2018eu,Yoshida:2019zz,ikeda2018finding}, we stress here that our global clustering algorithm is the first practical algorithm with a theoretical guarantee.

We experimentally demonstrate that our clustering algorithms outperform other baseline methods in terms of both the solution quality and running time.

Our contribution can be summarized as follows.
\begin{itemize}
\leftskip=-15pt
\itemsep=0pt
\item We propose local and global hypergraph clustering algorithms based on PPR for hypergraphs given by Li and Milenkovic~\cite{Li:2018we}.
\item We give theoretical guarantees on the conductance of the vertex set output by our algorithms.
\item Our algorithms are more efficient than existing algorithms with theoretical guarantees~\cite{Chan:2018eu,Yoshida:2019zz,ikeda2018finding} 
(see Section~\ref{sec:related} for the details).
\item We experimentally demonstrate that our clustering algorithms outperform other baseline methods in terms of both the solution quality and running time.
\end{itemize}

This paper is organized as follows.
We discuss related work in Section~\ref{sec:related} and then introduce notions used throughout the paper in Section~\ref{sec:pre}.
After discussing basic properties of PPR for hypergraphs in Section~\ref{sec:ppr}, we introduce a key lemma
(Lemma~\ref{lem:upperbound-of-sweep-conductance-local-2}) that connects conductance and PPR in Section~\ref{sec:conductance}.
On the basis of the key lemma, we show our local and global clustering algorithms in Sections~\ref{sec:local-clustering} and~\ref{sec:global-clustering}, respectively.
We discuss our experimental results in Section~\ref{sec:experiments}, and conclude in Section~\ref{sec:conclusion}. In appendix, we 
give the proofs which could not be included in the main body 
because of the page restrictions.


\section{Related Work}\label{sec:related}

As previously mentioned, even if we restrict our attention to graph clustering methods with theoretical guarantees on the conductance of the output vertex set, a plethora of methods  have been proposed~\cite{Alon:1986gz,Alon:1985jg,Andersen:2007tsa,Kloster:2014wq,Chung:2007ep,Spielman:2013cc,Gharan:2012fy,Andersen:2009fj,Kwok:2017ga}.
Some of them run in time nearly linear in the size of the output vertex set, i.e., we do not have to read the entire graph.

Although several hypergraph clustering algorithms have been proposed~\cite{Liu:2010va,Zhou:2006vj,Leordeanu:2012tj,RotaBulo:2013da,Agarwal:2005go}, most of them are merely heuristics and have no guarantees on the conductance of the output vertex set.
Indeed, we are not aware of any local clustering algorithms with a theoretical guarantee, and to the best of our knowledge, there are only two global clustering algorithms with theoretical guarantees.

The first one~\cite{Chan:2018eu,Yoshida:2019zz} is based on Cheeger's inequality for hypergraphs.
Given a hypergraph $H=(V,E,w)$, it first computes an approximation $\bm{x} \in \mathbb{R}^V$ to the second eigenvector of the normalized Laplacian $\mathcal{L}_H$ (see Section~\ref{subsec:pagerank-hypergraph} for details), and then returns a sweep cut, i.e., a vertex set of the form $\{v \in V : \bm{x}(v) > \tau\}$ for some $\tau \in \mathbb{R}$.
We can guarantee that the output vertex set has conductance roughly $O\left(\sqrt{\phi_H \log n}\right)$, where $\phi_H$ is the conductance of the hypergraph (see Section~\ref{subsec:hypergraph}).
However, to run the algorithm we need to approximately compute the second eigenvalue of the normalized Laplacian, which requires solving SDP\@; hence, is impractical.

The second algorithm was developed by 
Ikeda~\emph{et~al.}~\cite{ikeda2018finding}.
Their algorithm is based on simulating the heat equation:
\[
  \frac{d\bm{\rho}_t}{d t} \in  -\mathcal{L}_H(\bm{\rho}_t), \quad \bm{\rho}_0 = \bm{\chi}_v, 
\]
where $\bm{\chi}_v \in \mathbb{R}^V$ is a vector with $\bm{\chi}_v(u) = 1$ if $u = v$ and $0$ otherwise.
Then, they showed that one of the sweep cuts of the form $\{v \in V : \bm{\rho}_t(v) > \tau\}$ for some $\tau \in \mathbb{R}$ and $t \geq 0$, has conductance roughly $O\left(\sqrt{\phi_H}\right)$, under a mild condition.
A drawback of their algorithm is that we need to calculate sweep cuts induced by $\bm{\rho}_t$ for every $t \geq 0$, which cannot be efficiently implemented on Turing machines.
Although our algorithm also computes sweep cuts, we need only consider one specific vector, i.e., the PPR\@.

If we do not need any theoretical guarantee on the quality of the output vertex set, we can think of a heuristic in which we first construct a graph from the input hypergraph $H=(V,E,w)$ and then apply a known clustering algorithm for graphs on the resulting graph, e.g., Andersen~\emph{et~al.}'s algorithm~\cite{Andersen:2007tsa}.
There are two popular methods for constructing graphs:
\begin{description}
\leftskip=-5pt
\item[Clique expansion~\cite{Agarwal:2005go,Zhou:2006vj}:] For each hyperedge $e \in E$ and each $u,v \in e$ with $u \neq v$, we add an edge $uv$ of weight $w(e)$ or $w(e)/\binom{|e|}{2}$.
\item[Star expansion~\cite{Zien:1999cm}:] For each hyperedge $e \in E$, we introduce a new vertex $v_e$, and then for each vertex $u \in e$, we add an edge $uv_e$ of weight $w(e)$ or $w(e)/|e|$.
\end{description}
The obvious drawback of the clique expansion is that it introduces $\Theta(k^2)$ edges for a hyperedge of size $k$, and hence the resulting graph is huge.
The relation of these expansions and various Laplacians defined for hypergraphs~\cite{bolla1993,Rodriguez:2002bz,Zhou:2005tc} were discussed in~\cite{Agarwal:2006ci}.

Ghoshdastidar and Dukkipati~\cite{Ghoshdastidar:2014wv} proposed a spectral method utilizing a tensor defined from the input hypergraph, and 
analyzed its performance 
when the hypergraph is generated from
 a planted partition model or a stochastic block model.
A drawback of their method is that the 
hypergraph must be \emph{uniform}, i.e., each hyperedge must be of the same size.
Chien~\emph{et~al.}~\cite{Chien:2018up} proposed another statistical method that also assumes uniformity of the 
hypergraph, and analyzed its performance on a stochastic block model.

Other iterative procedures for clustering hypergraphs have also been proposed~\cite{Leordeanu:2012tj,RotaBulo:2013da,Liu:2010va}.
However, these methods do not have any theoretical guarantee on the quality of the output vertex set.

\section{Preliminaries}\label{sec:pre}

We say that a vector $\bm{x} \in \mathbb{R}^V$ is a \emph{distribution} if $\sum_{v \in V}\bm{x}(v) = 1$ and $\bm{x}(v) \geq 0$ for every $v \in V$.
For a vector $\bm{x} \in \RR^V$ and a set $S \subseteq V$, we define $\bm{x}(S) = \sum_{v \in S}\bm{x}(v)$.

We often use the symbols $n$ and $m$ to denote the number of vertices and number of (hyper)edges of the (hyper)graph we are concerned with, which should be clear from the context.

\subsection{Personalized PageRank}\label{subsec:pagerank-graph}
Let $G=(V,E,w)$ be a weighted graph, where $w\colon E \to \mathbb{R}_+$ is a weight function over edges.
For convenience, we define $w(uv) = 0$ when $uv \not \in E$.
The \emph{degree} of a vertex $u \in V$, denoted by $d_u$, is
$\sum_{e\in E:u\in e}w(e)$. We assume $d_u>0$ for any $u\in V$.
The degree matrix $D_G \in \mathbb{R}^{V \times V}$ and adjacency matrix $A_G \in \mathbb{R}^{V \times V}$ are defined as
$D_G(u,u)=d_u$ and $D_G(u,v) = 0$ for $u\neq v$, and
$A_G(u,v) = w(uv)$.
\emph{Lazy random walk} is a stochastic process, which stays at the current vertex $u \in V$ with probability $1/2$ and moves to a neighbor $v \in V$ with probability $1/2 \cdot w(uv)/d_u$.
The \emph{transition matrix} of a lazy random walk can be written as $W_G = (I + A_G D_G^{-1})/2$.

For a graph $G=(V,E,w)$, a seed distribution $\bm{s} \in \mathbb{R}^V$ and a parameter $\alpha \in (0,1]$, the \emph{personalized PageRank} (PPR) is a vector $\mathbf{pr}_\alpha(\bm{s}) \in \mathbb{R}^V$~\cite{Jeh:2003kz} defined as the stationary distribution of a random walk that, with probability $\alpha$, jumps to a vertex $v \in V$ with probability $\bm{s}(v)$, and with the remaining probability $1-\alpha$, moves as the lazy random walk.
We can rephrase $\mathbf{pr}_\alpha(\bm{s}) \in \mathbb{R}^V$ as the solution to the following equation:
\[
  \mathbf{pr}_\alpha(\bm{s}) = \alpha \bm{s} + (1-\alpha)W_G \mathbf{pr}_\alpha(\bm{s}).
\]
Using this equation, we can also define PPR for any vector
$\bm{s} \in \mathbb{R}^V$.

For later convenience, we rephrase PPR in terms of \emph{Laplacian}.
For a graph $G=(V,E,w)$, its Laplacian $L_G\in \mathbb{R}^{V \times V}$ and (random-walk) normalized Laplacian $\mathcal{L}_G\in \mathbb{R}^{V \times V}$ are defined as
  $L_G = D_G - A_G$
  \text{and}
  $\mathcal{L}_G = L_G D_G^{-1} = I - A_G D_G^{-1}$,
respectively. Then, we have $W_G = I - \ca{L}_G/2$.
It is easy to check that the PPR $\bm{\pr}_\alpha(\bm{s})$ satisfies the equation 
\begin{align}
   \left(I + \frac{1-\alpha}{2\alpha}\ca{L}_G\right) \bm{\pr}_\alpha(\bm{s}) =  \bm{s}. \label{eq:pagerank-via-laplacian}
\end{align}

\subsection{Hypergraphs}\label{subsec:hypergraph}
In this section, we introduce several notions for hypergraphs.
We note that these notions can also be used for graphs, because a hypergraph is a generalization of a graph.

Let $H = (V, E, w)$ be a (weighted) hypergraph, where $w\colon E \to \mathbb{R}_+$ is a weight function over hyperedges.
The degree of a vertex $u \in V$, denoted by $d_u$, is $\sum_{e \in E: u \in e}w(e)$. We assume $d_u>0$ for any $u\in V$.
The degree matrix $D_H \in \mathbb{R}^{V \times V}$ is defined as
$D_H(u,u) = d_u$ and $D_H(u,v) = 0$ for $u\neq v$.
Let $S \subseteq V$ be a vertex set and
$\ol{S} = V\!\setminus\! S$. We define the set of the
boundary hyperedges $\partial S$ by $\partial S = \left\{ e\in E \mid e\cap S \neq \emptyset \  \mbox{and} \ e\cap \ol{S} \neq \emptyset \right\}$.
We define $S^\circ$ as the \emph{interior} of $S$, that is, $v \in S^\circ$ if and only if $v \in S$ and $v \not\in e$ for any 
$e \in\partial S$.

We define $\mathrm{cut}(S)$ as the total weight of hyperedges intersecting both $S$ and $\ol{S}$, that is, $\mathrm{cut}(S) = \sum_{e \in \partial S}w(e)$.
We say that $H$ is \emph{connected} if $\mathrm{cut}(S) > 0$ for every $\emptyset \subsetneq S \subsetneq V$.

Next, we define the \emph{volume} of a vertex set $S$ as $\mathrm{vol}(S) = \sum_{v \in V}d_v$.
Then, we define the \emph{conductance} of $S$ as
\[
  \phi_H(S) = \frac{\mathrm{cut}(S)}{\min\{\mathrm{vol}(S),\mathrm{vol}(V \setminus S)\}}.
\]
We can say that a smaller $\phi_H(S)$ implies that the vertex set $S$ is a better cluster.
The \emph{conductance} of $H$ is $\phi_H = \min_{\emptyset \subsetneq S \subsetneq V}\phi_H(S)$.

Finally for $S\subseteq V$, we define the distribution
$\bm{\pi}_S \in \mathbb{R}^V$ as
\[
 \bm{\pi}_S(v) = \begin{cases}
  \frac{d_v}{\vol(S)} & \text{if} \ v \in S, \\
  0 & \text{otherwise.}
 \end{cases}
\]
We define $\bm{\chi}_v$ as the indicator vector for $v\in V$, i.e.,
$\bm{\chi}_v =\bm{\pi}_{\{v \}}$.

\subsection{Personalized PageRank for hypergraphs}\label{subsec:pagerank-hypergraph}
In this section, we explain the notion of PPR for hypergraphs recently introduced by Li and Milenkovic~\cite{Li:2018we}.

We start by defining the Laplacian for hypergraphs and then use~\eqref{eq:pagerank-via-laplacian} to define PPR\@.
For a hypergraph $H = (V, E, w)$, the \emph{(hypergraph) Laplacian} $L_H\colon \RR^V \to 2^{\RR^V}$ of $H$ is defined as
\begin{align}
 L_H(\bm{x}) = \left. \left\{ \ds\sum_{e \in E} w(e)
 \bm{b}_e\bm{b}_e^\top \bm{x} \ \right| \ \bm{b}_e \in \argmax_{\bm{b}\in B_e} \bm{b}^\top \bm{x} \right\} \subseteq \RR^V, \label{eq:hypergraph-laplacian}
\end{align}
where $B_e$ for $e\in E$ is the convex hull of the vectors $\{ \bm{\chi}_u - \bm{\chi}_v \mid u,v \in e \}$~\cite{Chan:2018eu,Yoshida:2016ig,Yoshida:2019zz}.
Note that $L_H$ is no longer a matrix, as opposed to the Laplacian of graphs.
We then define the \emph{normalized Laplacian} $\mathcal{L}_H\colon \mathbb{R}^V \to 2^{\mathbb{R}^V}$ as $\bm{x} \mapsto L_H(D_H^{-1}\bm{x})$.
We also define $A_H\colon \mathbb{R}^V \to 2^{\mathbb{R}^V}$ as $A_H = D_H-L_H$, or more formally, $\bm{x} \mapsto \{D_H\bm{x} - \bm{y} \mid \bm{y} \in L_H(\bm{x})\}$ and $W_H\colon \mathbb{R}^V \to 2^{\mathbb{R}^V}$ as
$\bm{x} \mapsto (\bm{x} + A_H(D_H^{-1}\bm{x}))/2$.
Note that when $H$ is a graph, each of $L_H(\bm{x})$, $\mathcal{L}_H(\bm{x})$, $A_H(\bm{x})$, and $W_H(\bm{x})$ defined above is a singleton consisting of the vector obtained by applying the corresponding matrix to $\bm{x}$.

To understand how the Laplacian $L_H$ acts on $\bm{x}$ when $H$ is not a graph, suppose that the entries of the vector $\bm{x} \in \mathbb{R}^V$ are pairwise distinct.
Then for each hyperedge $e \in E$, we have $\bm{b}_e = \bm{\chi}_{s_e} - \bm{\chi}_{t_e}$, where $s_e = \argmax_{v \in e}\bm{x}(v)$ and $t_e = \argmin_{v \in e}\bm{x}(v)$.
Then, we construct a graph $H_{\bm{x}} = (V,E_{\bm{x}},w_{\bm{x}})$ on the vertex set $V$ by, for each hyperedge $e \in E$, adding the edge $s_e t_e$ of weight $w(e)$ and the self-loop of weight $w(e)$ at $v $ for each $v \in e \setminus \{s_e,t_e\}$.
Note that the degree of a vertex is unchanged between $H$ and $H_{\bm{x}}$.
Then, $L_H(\bm{x}) = \{L_{H_{\bm{x}}}\bm{x}\}$, where $L_{H_{\bm{x}}}$ is the Laplacian of the graph $H_{\bm{x}}$.
An example of the construction of $H_{\bm{x}}$ is shown in Figure~\ref{fig:hypergraph}.

\begin{figure}[t!]
  \centering
  \includegraphics[width=0.85\hsize]{./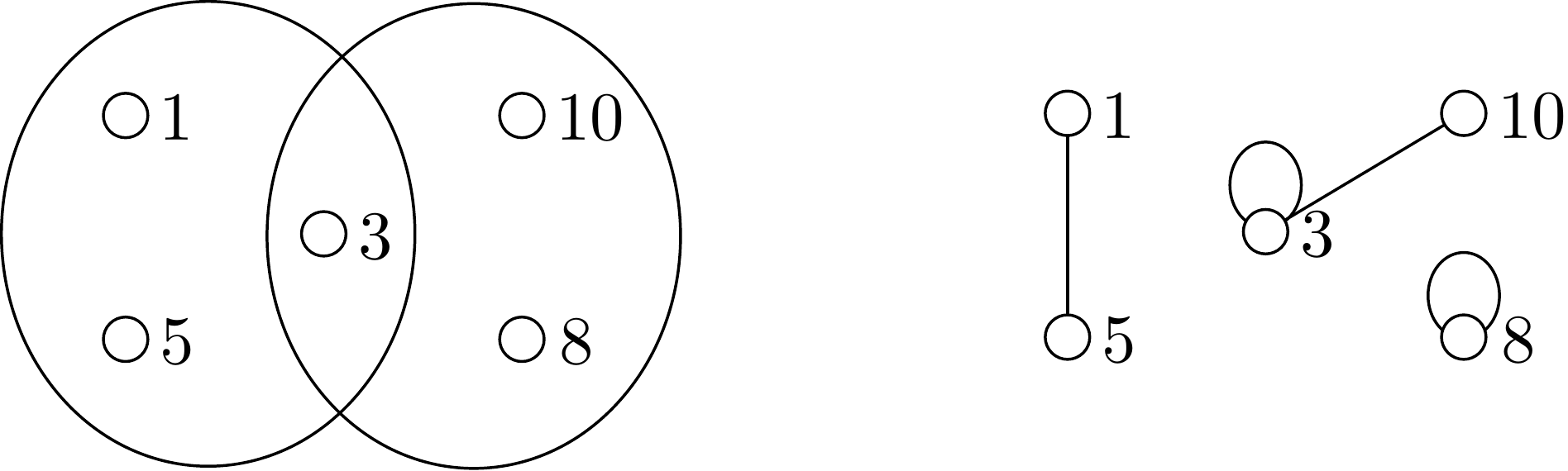}
  \caption{An example of a hypergraph $H$ and the graph $H_{\bm{x}}$. The values next to vertices represent a vector $\bm{x} \in \mathbb{R}^V$.}\label{fig:hypergraph}
\end{figure}

For general $\bm{x} \in \mathbb{R}^V$, there could be many choices for $\bm{b}_e\;(e\in E)$ in~\eqref{eq:hypergraph-laplacian}.
For any choice of $\bm{b}_e \in \argmax_{\bm{b} \in B_e}\bm{b}^\top \bm{x}$ for each $e \in E$, we can define a graph
$H_{\bm{x}} = (V,E_{\bm{x}},w_{\bm{x}})$ so that
$L_{H_{\bm{x}}} =  \sum_{e \in E}w(e) \bm{b}_e \bm{b}_e^\top$.
We remark that $w_{\bm{x}}$ can output negative values naturally.
However, it is easy to see that there exists a graph
$\wt{H}_{\bm{x}} = (V, \wt{E}_{\bm{x}}, \wt{w}_{\bm{x}})$ such that
$\wt{w}_{\bm{x}}$ is non-negative and
$\ca{L}_{\wt{H}_{\bm{x}}}\bm{x} = \ca{L}_{H_{\bm{x}}}\bm{x}$. This
$\wt{H}_{\bm{x}}$ is the same as the graph introduced in~\cite{Chan:2018eu}. In the following, we assume that when we choose
a graph $H_{\bm{x}}$, the weight function $w_{\bm{x}}$ is
non-negative.

Following~\eqref{eq:pagerank-via-laplacian}, we define the PPR $\bm{\pr}_\alpha(\bm{s})$ of $H$ as a solution (if exists) to
\begin{align}
   \left(I + \frac{1-\alpha}{2\alpha}\ca{L}_H\right)(\bm{x}) \ni  \bm{s}. \label{eq:hypergraph-pagerank-via-laplacian}
\end{align}
We can observe that $\bm{\pr}_\alpha(\bm{s}) = \bm{s}$ when $\alpha = 1$ and $\bm{\pr}_\alpha(\bm{s}) \to \bm{\pi}_V$ when $\alpha \to 0$.
In Section~\ref{sec:ppr}, we show that~\eqref{eq:hypergraph-pagerank-via-laplacian} indeed has a solution and it is unique for any $\bm{s} \in \mathbb{R}^V$ and $\alpha \in (0,1]$.

\subsection{Sweep cuts}
For a hypergraph $H=(V,E,w)$ and a vector $\bm{x} \in \mathbb{R}^V$, we say that a set of the form $\{v \in V \mid \bm{x}(v)/d_v > \tau\}$ for some $\tau \in \mathbb{R}$ is a \emph{sweep cut} induced by $\bm{x}$.

To discuss local clustering, we consider the minimum conductance of a sweep cut with a restriction on the volume.
First, we order the vertices $v_1, v_2, \dots, v_n$ in $V$ so that
\begin{align*}
\frac{\bm{x}(v_1)}{d_{v_1}} \geq \frac{\bm{x}(v_2)}{d_{v_2}} \geq \cdots \geq \frac{\bm{x}(v_n)}{d_{v_n}}.
\end{align*}
Then, we can represent sweep cuts explicitly as 
\begin{align*}
S_j^{\bm{x}} = \{ v_1, v_2, \dots, v_j \}\ \ (j = 1, 2, \dots, n).
\end{align*}
We set $S_0^{\bm{x}}=\emptyset$ for convenience.
For $\mu \in (0,1/2]$, let $\ell_\mu$ be the unique integer such that $\vol(S_{\ell_\mu-1}^{\bm{x}})< \mu \cdot \vol(V) \leq \vol(S_{\ell_\mu}^{\bm{x}})$.
Then, we define $\phi_H^\mu(\bm{x})$ as
$ \phi_H^\mu(\bm{x}) =
 \min\{\phi_H(S_j^{\bm{x}})\mid j = 1,2, \dots, \ell_\mu \}$.


\section{Personalized PageRank for Hypergraphs}\label{sec:ppr}
In this section, we discuss basic properties and the computational aspects of the PPR for hypergraphs. 
The proofs of the theorem and propositions in this section are 
given in Appendix~\ref{sec:proof-basic-properties}. 

\subsection{Basic properties}\label{sec:basic-properties}

Here we introduce some properties of the PPR. 
We first argue that~\eqref{eq:hypergraph-pagerank-via-laplacian} has a solution and it is unique for any $\bm{s} \in \mathbb{R}^V$ and $\alpha\in(0,1]$.
\begin{thm}\label{thm:existence-of-PR}
  For any $\bm{s}\in \RR^V$ and $\alpha \in (0,1]$,
  there exists a unique solution to~\eqref{eq:hypergraph-pagerank-via-laplacian}, and hence the PPR $\bm{\pr}_\alpha(\bm{s})$ is
  well-defined.
  Moreover, if $\bm{s}$ is a distribution, then
  so is $\pr_\alpha(\bm{s})$.
\end{thm}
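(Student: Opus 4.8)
The plan is to establish existence and uniqueness of the solution to~\eqref{eq:hypergraph-pagerank-via-laplacian} by exploiting the variational structure of the hypergraph Laplacian, and then to verify the distribution-preservation property separately. The core observation is that although $\ca{L}_H$ is a set-valued map, it arises from a convex energy. Specifically, consider the energy functional
\[
  F(\bm{x}) = \frac{1}{2}\|\bm{x}\|^2 + \frac{1-\alpha}{2\alpha}\sum_{e \in E}\frac{w(e)}{2}\max_{\bm{b}\in B_e}\left(\bm{b}^\top D_H^{-1}\bm{x}\right)^2 - \bm{s}^\top \bm{x},
\]
or an appropriately weighted variant matching the $D_H^{-1}$ normalization in $\ca{L}_H(\bm{x}) = L_H(D_H^{-1}\bm{x})$. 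Each term $\max_{\bm{b}\in B_e}(\bm{b}^\top \bm{y})^2$ is convex in $\bm{y}$ (a pointwise maximum of convex quadratics), so $F$ is strictly convex thanks to the $\frac{1}{2}\|\bm{x}\|^2$ term. First I would confirm that the subdifferential of $F$ equals $\left(I + \frac{1-\alpha}{2\alpha}\ca{L}_H\right)(\bm{x}) - \bm{s}$, so that $\bm{x}$ solves~\eqref{eq:hypergraph-pagerank-via-laplacian} if and only if $\0 \in \partial F(\bm{x})$, i.e. $\bm{x}$ is the global minimizer.

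Given that reformulation, existence follows from coercivity of $F$ (the quadratic term $\frac{1}{2}\|\bm{x}\|^2$ dominates the linear term $-\bm{s}^\top\bm{x}$ while the Laplacian term is nonnegative), and uniqueness follows from strict convexity. The main obstacle is getting the subdifferential calculus exactly right: one must justify that $\partial\big(\max_{\bm{b}\in B_e}(\bm{b}^\top D_H^{-1}\bm{x})^2\big)$ recovers precisely the set $\{w(e)\,\bm{b}_e\bm{b}_e^\top D_H^{-1}\bm{x} : \bm{b}_e \in \argmax\}$ appearing in the definition~\eqref{eq:hypergraph-laplacian}, matching the normalization and constants. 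This is where the argmax-selection in the definition of $L_H$ must be reconciled with the chain rule for the composition of the squaring map with the support-function-type maximum; I would appeal to standard results on subdifferentials of pointwise maxima of convex functions (Danskin-type arguments) to identify the subgradients with the active $\bm{b}_e$'s.

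For the final claim, suppose $\bm{s}$ is a distribution and let $\bm{x} = \bm{\pr}_\alpha(\bm{s})$. I would first show $\sum_v \bm{x}(v) = 1$ by testing the equation against $\1$: since every $\bm{b}_e = \bm{\chi}_u - \bm{\chi}_v$ has $\1^\top\bm{b}_e = 0$, we get $\1^\top \ca{L}_H(\bm{x}) = \0$, and applying $\1^\top$ to~\eqref{eq:hypergraph-pagerank-via-laplacian} yields $\1^\top\bm{x} = \1^\top\bm{s} = 1$. For nonnegativity, the cleanest route is the fixed-point/random-walk characterization: using $W_H$ and the identity $\bm{x} = \alpha\bm{s} + (1-\alpha)W_H(\bm{x})$ derived from the Laplacian form, I would argue that the operator $\bm{x}\mapsto \alpha\bm{s} + (1-\alpha)W_H(\bm{x})$ maps the nonnegative cone into itself (a lazy-walk step preserves nonnegativity because the chosen $H_{\bm{x}}$ has nonnegative weights by the standing assumption, and $W_{H_{\bm{x}}} = I - \ca{L}_{H_{\bm{x}}}/2$ is a genuine lazy random-walk matrix with nonnegative entries). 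By uniqueness the minimizer coincides with this fixed point, and a contraction or monotonicity argument then forces $\bm{x} \geq \0$. The subtlety to watch here is that $H_{\bm{x}}$ depends on $\bm{x}$ itself, so the nonnegativity-preservation must be argued at the solution point rather than along an arbitrary iteration.
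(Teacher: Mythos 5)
Your proposal is correct in outline but takes a genuinely different route from the paper on the existence/uniqueness part, and it contains one technical slip that must be repaired before the argument closes. The paper constructs no energy functional: it cites the maximal monotonicity of $\ca{L}_H$ from Ikeda~\emph{et~al.} and invokes resolvent theory (Minty's theorem), so that $I+\lambda\ca{L}_H$ with $\lambda=(1-\alpha)/2\alpha$ is injective with full range, and its inverse $J_\lambda$ is the single-valued resolvent; existence and uniqueness are immediate. Your variational argument is the self-contained version of the same mathematics (subdifferentials of convex functions are the canonical maximal monotone operators), which buys independence from the cited result. The slip: the Euclidean subdifferential of $\bm{x}\mapsto Q(D_H^{-1}\bm{x})$, where $Q(\bm{y})=\sum_{e\in E}\frac{w(e)}{2}\max_{\bm{b}\in B_e}(\bm{b}^\top\bm{y})^2$, equals $D_H^{-1}L_H(D_H^{-1}\bm{x})$, not $\ca{L}_H(\bm{x})=L_H(D_H^{-1}\bm{x})$; in fact $\ca{L}_H$ cannot be a Euclidean subdifferential at all, since already for graphs $L_GD_G^{-1}$ is non-symmetric. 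So the first-order condition of your $F$ as written is not~\eqref{eq:hypergraph-pagerank-via-laplacian}. You anticipated this with ``appropriately weighted variant,'' and the repair is standard: substitute $\bm{y}=D_H^{-1}\bm{x}$ and minimize the strictly convex, coercive function $G(\bm{y})=\frac{1}{2}\bm{y}^\top D_H\bm{y}+\frac{1-\alpha}{2\alpha}Q(\bm{y})-\bm{s}^\top\bm{y}$, whose optimality condition $\0\in D_H\bm{y}+\frac{1-\alpha}{2\alpha}L_H(\bm{y})-\bm{s}$ is exactly~\eqref{eq:hypergraph-pagerank-via-laplacian} for $\bm{x}=D_H\bm{y}$; equivalently, keep $\bm{x}$ and take subgradients of the weighted functional with respect to the $D_H^{-1}$-inner product. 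Your Danskin step then does go through: over the argmax face of $B_e$ the map $\bm{b}_e\mapsto\bm{b}_e\bm{b}_e^\top\bm{y}$ is linear (it equals $\left(\max_{\bm{b}\in B_e}\bm{b}^\top\bm{y}\right)\bm{b}_e$), so the set of selections in~\eqref{eq:hypergraph-laplacian} is already convex and coincides with $\partial Q(\bm{y})$.

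For the distribution claim your plan essentially coincides with the paper's proof: at the unique solution $\bm{p}$ the inclusion is realized by a concrete graph $H_{\bm{p}}$ with nonnegative weights, so $\bm{p}$ satisfies the linear equation $\bm{p}=\alpha\bm{s}+(1-\alpha)W_{H_{\bm{p}}}\bm{p}$ for a fixed nonnegative column-stochastic matrix $W_{H_{\bm{p}}}$. The paper concludes by Perron--Frobenius, while your contraction remark amounts to the Neumann series $\bm{p}=\alpha\sum_{k\ge 0}(1-\alpha)^k W_{H_{\bm{p}}}^k\bm{s}\ge\0$; both are valid, and your $\1^\top$-test for unit total mass is a harmless variant. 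The essential point, which you correctly identified, is that the graph $H_{\bm{p}}$ must be frozen at the solution point rather than along an iteration.
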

%

We next investigate the continuity of $\bm{\pr}_\alpha(\bm{s})$ 
with respect to $\alpha$. 

\begin{prop}\label{prop:continuity}
  For any $\bm{s} \in \RR^{V}$, the PPR $\bm{\pr}_\alpha(\bm{s})$ is continuous with respect to $\alpha \in (0,1]$.
\end{prop}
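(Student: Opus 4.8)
The plan is to recast the defining relation \eqref{eq:hypergraph-pagerank-via-laplacian} as a resolvent equation and exploit the contraction properties of resolvents of monotone operators. Writing $\beta = \beta(\alpha) \defeq \tfrac{1-\alpha}{2\alpha}$, the map $\alpha\mapsto\beta(\alpha)$ is a continuous strictly decreasing bijection from $(0,1]$ onto $[0,\infty)$, so it suffices to show that the unique solution $\bm{x}_\beta$ of $\bm{s}\in(I+\beta\ca{L}_H)(\bm{x})$ — which exists and is unique by Theorem~\ref{thm:existence-of-PR} — depends continuously on $\beta\in[0,\infty)$, and then compose with $\alpha\mapsto\beta(\alpha)$.

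First I would record the monotonicity that drives everything. Equip $\RR^V$ with the inner product $\inprod<\bm{u},\bm{v}>_{D_H^{-1}} = \bm{u}^\top D_H^{-1}\bm{v}$. Since $L_H$ is the subdifferential of the convex energy $\bm{y}\mapsto\tfrac12\sum_{e\in E}w(e)\max_{\bm{b}\in B_e}(\bm{b}^\top\bm{y})^2$, it is monotone; because $\ca{L}_H(\bm{x})=L_H(D_H^{-1}\bm{x})$, the substitution $\bm{y}_i=D_H^{-1}\bm{x}_i$ gives $\inprod<\bm{h}_1-\bm{h}_2,\bm{x}_1-\bm{x}_2>_{D_H^{-1}}\ge 0$ for $\bm{h}_i\in\ca{L}_H(\bm{x}_i)$, so $\ca{L}_H$ is monotone in this inner product. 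Consequently the resolvent $J_\beta\defeq(I+\beta\ca{L}_H)^{-1}$ is single-valued (Theorem~\ref{thm:existence-of-PR}) and nonexpansive in the norm $\|\cdot\|_{D_H^{-1}}$: if $\bm{y}_i=J_\beta(\bm{z}_i)$, then $\tfrac{\bm{z}_i-\bm{y}_i}{\beta}\in\ca{L}_H(\bm{y}_i)$, and monotonicity together with Cauchy--Schwarz yields $\|\bm{y}_1-\bm{y}_2\|_{D_H^{-1}}\le\|\bm{z}_1-\bm{z}_2\|_{D_H^{-1}}$.

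The core estimate is the resolvent identity. For $\beta,\beta'>0$, setting $\bm{y}=J_\beta(\bm{s})$ and $\bm{z}'=\tfrac{\beta'}{\beta}\bm{s}+(1-\tfrac{\beta'}{\beta})\bm{y}$ one checks algebraically that $\tfrac{\bm{z}'-\bm{y}}{\beta'}=\tfrac{\bm{s}-\bm{y}}{\beta}\in\ca{L}_H(\bm{y})$, i.e.\ $\bm{y}=J_{\beta'}(\bm{z}')$; hence $J_\beta(\bm{s})=J_{\beta'}\!\bigl(\tfrac{\beta'}{\beta}\bm{s}+(1-\tfrac{\beta'}{\beta})J_\beta(\bm{s})\bigr)$. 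Combining this with nonexpansiveness gives
\[
\|J_\beta(\bm{s})-J_{\beta'}(\bm{s})\|_{D_H^{-1}} \;\le\; \frac{|\beta-\beta'|}{\beta}\,\bigl\|\bm{s}-J_\beta(\bm{s})\bigr\|_{D_H^{-1}},
\]
which tends to $0$ as $\beta'\to\beta$, proving continuity on $(0,\infty)$, i.e.\ for $\alpha\in(0,1)$. The endpoint $\beta=0$ (that is, $\alpha\to1$, where $J_0=I$ and $\bm{x}_1=\bm{s}$) is handled separately: choosing any $\bm{v}\in\ca{L}_H(\bm{s})$ and testing the inclusion $\tfrac{\bm{s}-\bm{x}_\beta}{\beta}\in\ca{L}_H(\bm{x}_\beta)$ against $\bm{v}$ by monotonicity yields $\|\bm{x}_\beta-\bm{s}\|_{D_H^{-1}}\le\beta\|\bm{v}\|_{D_H^{-1}}\to0$.

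Since all norms on $\RR^V$ are equivalent, continuity in $\|\cdot\|_{D_H^{-1}}$ is continuity in the usual sense, and composing with the continuous map $\alpha\mapsto\beta(\alpha)$ gives continuity of $\bm{\pr}_\alpha(\bm{s})$ on $(0,1]$. The one delicate point to get right is the monotonicity of $\ca{L}_H$ in the weighted inner product $\inprod<\cdot,\cdot>_{D_H^{-1}}$; everything afterward (nonexpansiveness, the resolvent identity, the endpoint estimate) is routine once that is in hand. In particular I would be careful that the operator $I$ in \eqref{eq:hypergraph-pagerank-via-laplacian} is the genuine identity, so that $J_\beta$ really is the resolvent of $\beta\ca{L}_H$ and the nonexpansiveness is measured in $\|\cdot\|_{D_H^{-1}}$ rather than the Euclidean norm.
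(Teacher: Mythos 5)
Your proof is correct and takes essentially the same route as the paper's: both identify $\bm{\pr}_\alpha(\bm{s})$ with the resolvent $J_\lambda(\bm{s})$ for $\lambda=(1-\alpha)/2\alpha$, and combine the resolvent identity with nonexpansiveness of $J_\lambda$ in the norm $\|\cdot\|_{D_H^{-1}}$ to conclude. Your explicit treatment of the endpoint $\lambda=0$ (i.e.\ $\alpha=1$) via the monotonicity estimate $\|\bm{x}_\beta-\bm{s}\|_{D_H^{-1}}\leq\beta\|\bm{v}\|_{D_H^{-1}}$ is a small but genuine improvement, since the paper's displayed inequality is stated only for $\lambda\geq\mu>0$ and so covers $\alpha=1$ only implicitly through the standard theory of resolvents.
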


The following proposition shows that the value of PPR at each vertex can be bounded by the value of the distribution $\bm{\pi}_V$.  
\begin{prop}\label{prop:bounds-of-values-of-PPR}
  Let $H=(V,E,w)$ be a connected hypergraph.
  Then for any $\alpha \in (0,1]$ and a vertex $u \in V$, we have
  \[
   \pr_\alpha(\bm{\chi}_u)(v) \begin{cases}
    \geq \bm{\pi}_V(u) & \mbox{if} \ v = u, \\
    \leq \bm{\pi}_V(v) & \mbox{if} \ v \neq  u.
   \end{cases}
  \]
\end{prop}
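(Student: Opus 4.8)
The plan is to exploit the graph reformulation of the hypergraph Laplacian together with the known behavior of PPR on ordinary graphs. Recall that for the PPR $\bm{x} = \pr_\alpha(\bm{\chi}_u)$, equation~\eqref{eq:hypergraph-pagerank-via-laplacian} tells us that $\bm{\chi}_u \in \left(I + \frac{1-\alpha}{2\alpha}\ca{L}_H\right)(\bm{x})$, which means there is a specific choice of $\bm{b}_e \in \argmax_{\bm{b}\in B_e}\bm{b}^\top D_H^{-1}\bm{x}$ for each $e \in E$ realizing this containment. Fixing that choice, I would pass to the associated graph $H_{\bm{x}}$ with $\ca{L}_{H_{\bm{x}}} D_H^{-1}\bm{x} = $ the corresponding element of $\ca{L}_H(\bm{x})$, so that $\bm{x}$ is \emph{exactly} the graph PPR on $H_{\bm{x}}$ with seed $\bm{\chi}_u$: that is, $\left(I + \frac{1-\alpha}{2\alpha}\ca{L}_{H_{\bm{x}}}\right)\bm{x} = \bm{\chi}_u$. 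Since $H$ and $H_{\bm{x}}$ share the same degrees, $\bm{\pi}_V$ is identical for both, reducing the claim to the analogous statement for graph PPR on $H_{\bm{x}}$.

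First I would establish the graph-level statement: for a connected weighted graph $G$ with PPR $\bm{x} = \pr_\alpha(\bm{\chi}_u)$, we have $\bm{x}(v)/d_v \le \bm{x}(u)/d_u$ for all $v$, with equality forcing $v$ to be ``tied'' to $u$, and moreover $\bm{x}(u) \ge \bm{\pi}_V(u) = d_u/\vol(V)$ while $\bm{x}(v) \le \bm{\pi}_V(v) = d_v/\vol(V)$ for $v \ne u$. The natural tool is a maximum principle argument on the quantity $\bm{x}(v)/d_v$. Writing the PPR equation componentwise as $\bm{x}(v) = \alpha\,\bm{\chi}_u(v) + (1-\alpha)(W_G\bm{x})(v)$ and noting $(W_G\bm{x})(v)/d_v$ is a convex combination (averaging) of the values $\bm{x}(w)/d_w$ over the lazy-walk neighborhood of $v$, I would argue that the maximizer of $\bm{x}(v)/d_v$ must be $u$: at any $v \ne u$ achieving the maximum, the averaging inequality combined with $\alpha\,\bm{\chi}_u(v) = 0$ forces all neighbors to share the maximal ratio, and connectivity propagates this to $u$, a contradiction unless $u$ itself attains it. This yields the two-sided bound once one checks the normalization through $\sum_v \bm{x}(v) = 1$ (Theorem~\ref{thm:existence-of-PR}).

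To convert the ratio inequality into the stated bounds, I would use that $\bm{x}$ is a distribution and compare against $\bm{\pi}_V$: since $\bm{x}(v)/d_v \le \bm{x}(u)/d_u$ for every $v$ and $\sum_v \bm{x}(v) = 1 = \sum_v \bm{\pi}_V(v)$, a counting/averaging argument gives $\bm{x}(u)/d_u \ge 1/\vol(V)$, i.e.\ $\bm{x}(u) \ge \bm{\pi}_V(u)$; the complementary bound $\bm{x}(v) \le \bm{\pi}_V(v)$ for $v \ne u$ would follow either from the same max-principle localized around $v$ (a minimum principle for the ``deficit'' $\bm{\pi}_V - \bm{x}$ away from $u$) or by a direct resolvent-positivity estimate. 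The main obstacle I anticipate is justifying that the graph reduction is faithful at the level of \emph{inequalities} rather than just the equation: the choice of $\bm{b}_e$ depends on $\bm{x}$ itself, so $H_{\bm{x}}$ is defined only once $\bm{x}$ is known, and I must be careful that the maximum-principle argument uses only the fixed realized graph $H_{\bm{x}}$ (whose edges and self-loops are legitimate with nonnegative weights $\wt{w}_{\bm{x}}$) and does not circularly depend on properties of $\bm{x}$ I am trying to prove. A secondary delicate point is connectivity: $H_{\bm{x}}$ may fail to be connected even though $H$ is, so the propagation step needs either a perturbation argument (via Proposition~\ref{prop:continuity}, pushing $\alpha$ or $\bm{x}$ to a generic position where entries are distinct) or a direct argument that the relevant ratios are controlled on each component touching $\partial S$.
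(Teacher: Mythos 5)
Your first two steps are sound, and your maximum principle is in fact cleaner than what the paper does. The reduction to the fixed graph $H_{\bm{x}}$ (so that $\bm{x}$ is an ordinary lazy graph PPR with the same degrees, hence the same $\bm{\pi}_V$) is exactly how the paper's proof begins. Moreover, contrary to your worry, the max-principle step needs neither connectivity of $H_{\bm{x}}$ nor any propagation: for $v \neq u$ the PPR equation on $H_{\bm{x}}$ gives $\bm{x}(v)/d_v \leq \frac{1-\alpha}{1+\alpha}\max_{w}\bm{x}(w)/d_w$, so if the maximum of $w \mapsto \bm{x}(w)/d_w$ were attained at some $v\neq u$ we would get $\max_w \bm{x}(w)/d_w \leq 0$, contradicting $\sum_w \bm{x}(w)=1$; hence the maximum sits at $u$, and your averaging argument $1=\sum_w d_w\cdot\bm{x}(w)/d_w \leq \vol(V)\cdot\bm{x}(u)/d_u$ correctly yields $\bm{x}(u)\geq\bm{\pi}_V(u)$. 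This proves the first half of the proposition by a route genuinely different from the paper's (the paper instead derives $-\ca{L}_{H_{\bm{p}}}\bm{p}(v)\geq 0$ off the seed from non-negativity of $\bm{p}$ and then appeals to heat diffusion).

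The second half, $\bm{x}(v)\leq\bm{\pi}_V(v)$ for $v\neq u$, is a genuine gap in your proposal: you only gesture at ``a minimum principle for the deficit'' or ``a resolvent-positivity estimate,'' and no such argument can be supplied, because the inequality itself fails already for graphs. Take two triangles $\{a,b,c\}$ and $\{d,e,f\}$ joined by the single edge $cd$ (unit weights), seed $u=a$, $\alpha=0.1$. Solving $\bm{x}=\alpha\bm{\chi}_a+(1-\alpha)W\bm{x}$ exactly gives $\bm{x}\approx(0.334,\,0.205,\,0.250,\,0.110,\,0.051,\,0.051)$, whereas $\bm{\pi}_V=(1/7,\,1/7,\,3/14,\,3/14,\,1/7,\,1/7)$; thus $\bm{x}(b)\approx 0.205>1/7$ and $\bm{x}(c)\approx 0.250>3/14$. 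Note this is perfectly consistent with the ratio bound you proved (the normalized values do decrease away from the seed): domination of the ratios by the seed does not imply entrywise domination by $\bm{\pi}_V$, because PPR inflates the mass of the seed's whole community, not only of the seed. You should also be aware that the paper's own proof of this half rests on the same unjustified leap --- it asserts that $-\ca{L}_{H_{\bm{p}}}\bm{p}(v)\geq 0$ for $v\neq u$ forces $\bm{p}(v)\leq\bm{\pi}_V(v)$ ``because the heat diffuses to the stationary distribution,'' and the example above refutes that implication, since there $-\ca{L}_{H}\bm{p}(b)=\frac{\beta}{1-\beta}\bm{p}(b)>0$ while $\bm{p}(b)>\bm{\pi}_V(b)$. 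So you correctly isolated the delicate point, but neither your sketch nor the paper's argument establishes it; only the $v=u$ half (and the weaker ratio bound) survives.
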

%
%


\subsection{Computation}
It is not difficult to show that we can compute the PPR in polynomial time by directly solving~\eqref{eq:hypergraph-pagerank-via-laplacian} using the techniques developed in~\cite{fujii2018polynomial}.
However, their algorithm involves the ellipsoid method, which is inefficient in practice.
Hence, we consider a different approach here.

Lemma~\ref{lem:pagerank-in-terms-of-beta} in Appendix~\ref{sec:proof-basic-properties} indicates that $\bm{\pr}_\alpha(\bm{s})$ is a stationary solution to the following differential equation:
\begin{align}
  \frac{d \bm{\rho}_t}{dt} \in \beta (\bm{s}- \bm{\rho}_t) - (1-\beta) \ca{L}_H (\bm{\rho}_t), 
  \label{eq:rho-t}
\end{align}
where $\beta:=2\alpha/(1+\alpha)$. 
From the fact that $\ca{L}_H$ is a maximal monotone operator~\cite{ikeda2018finding}, it is easy to show that the operator
\begin{align}
 \bm{x} \mapsto \beta (\bm{s}- \bm{x}) - (1-\beta) \ca{L}_H (\bm{x})
 \label{eq:pagerank-operator}
\end{align}
is also maximally monotone.
This implies that~\eqref{eq:rho-t} has a unique solution for any
given initial state $\bm{\rho}_0$.
Moreover, if the hypergraph $H$ is connected, then the solution $\bm{\rho}_t$ of~\eqref{eq:rho-t} converges to the stationary solution $\bm{\pr}_\alpha(\bm{s})$ as $t$ tends to infinity.
Hence, we can compute the PPR by simulating~\eqref{eq:rho-t}.

It is not difficult to show that we can simulate~\eqref{eq:rho-t} on a Turing machine with a certain error guarantee by time discretization using the techniques 
in~\cite{ikeda2018finding}.
In our experiments (Section~\ref{sec:experiments}), however, we simply simulated it using the Euler method for practical efficiency.

We remark that simulating~\eqref{eq:rho-t} by the Euler method 
is equivalent to running the gradient descent algorithm on the convex function
\begin{align*}
  Q(\wt{\bm{x}}) = \frac{\beta}{2}\|\wt{\bm{x}} - \wt{\bm{s}} \|^2 + (1-\beta) \wt{\bm{x}}^\top\ol{\ca{L}}_H (\wt{\bm{x}}),
\end{align*}
where $\wt{\bm{s}} = D^{-1/2}\bm{s}$ and
$\ol{\ca{L}}_H = D^{-1/2}L_H D^{-1/2}$.
To see this, note that the operator~\eqref{eq:pagerank-operator} can be rewritten as
the operator $\bm{x} \mapsto -D^{1/2}\partial Q(\wt{\bm{x}})$, where
$\wt{\bm{x}}= D^{-1/2}\bm{x}$ and $\partial Q(\wt{\bm{x}})$ is the
sub-differential of $Q(\wt{\bm{x}})$ with respect to the variable
$\wt{\bm{x}}$.
Then, the differential inclusion~\eqref{eq:rho-t} 
 can be written as
\begin{align}
 \frac{d \wt{\bm{\rho}}_t}{dt} \in - \partial Q(\wt{\bm{\rho}}_t)
 \label{eq:pde-gradient}
\end{align}
for $\wt{\bm{\rho}}_t = D^{-1/2}\bm{\rho}_t$ and by~\cite[Theorem~6.9]{peypouquet2010evolution},
the Euler sequence ${\{ \wt{\bm{\rho}}_{t_i} \}}_{i = 1}^\infty$ for~\eqref{eq:pde-gradient} converges to an optimal solution $\wt{\bm{\rho}}_\infty$, i.e., a solution satisfying
$\bm{0}\in \partial Q(\wt{\bm{\rho}}_\infty)$, under the assumption
$\sum\| \wt{\bm{\rho}}_{t_{i+1}} - \wt{\bm{\rho}}_{t_i} \|^2 <\infty$.
Then, $\bm{\rho}_\infty = D^{1/2}\wt{\bm{\rho}}_\infty$ is nothing
but the PPR $\pr_\alpha(\bm{s})$.


\section{Personalized PageRank and Conductance}\label{sec:conductance}
In this section, we introduce the following lemma, which extends the result in~\cite{Andersen:2007tsa} to hypergraphs and to the local setting. We prove this lemma in Appendix~\ref{sec:proof-of-key-lemma}. 
\begin{lem}\label{lem:upperbound-of-sweep-conductance-local-2}
Let $\bm{s}\in \mathbb{R}^V$ be a distribution and $\mu \in (0,1/2]$. 
  If there is a vertex set $S \subseteq V$ and
  a constant
  $\delta \geq 4/\sqrt{\vol(V)}$ such that
  $\vol(S)/\vol(V) \leq \mu$ and $\bm{\pr}_\alpha(\bm{s}) (S) -\bm{\pi}_V(S) > \delta$,
  then we have
  \begin{align}
    \phi_H^\mu(\bm{\pr}_{\alpha}(\bm{s})) <
    \sqrt{\frac{24 \alpha \log(4/\delta)}{\delta}}.\label{eq:sweep-bounded-by-alpha-2}
  \end{align}
\end{lem}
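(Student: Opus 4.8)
The plan is to reduce to the graph case of Andersen, Chung, and Lang~\cite{Andersen:2007tsa} by passing to the graph $G:=H_{\bm p}$ attached to the PageRank vector $\bm p:=\bm{\pr}_\alpha(\bm s)$ itself, and then to run a Lovász--Simonovits curve argument localized to volume $\mu\cdot\vol(V)$. First I would note that $\bm p$ solves $(I+\frac{1-\alpha}{2\alpha}\ca{L}_H)(\bm p)\ni\bm s$, so some choice of $\bm{b}_e\in\argmax_{\bm b\in B_e}\bm b^\top(D_H^{-1}\bm p)$ realizes this inclusion; for the corresponding graph $G$ we then have $(I+\frac{1-\alpha}{2\alpha}\ca{L}_G)\bm p=\bm s$, i.e., $\bm p$ is the ordinary graph PPR of $\bm s$ on $G$ by~\eqref{eq:pagerank-via-laplacian}. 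Since the construction preserves every degree, $D_G=D_H$, so $\vol$, $\bm{\pi}_V$, the ordering of vertices by $\bm p(v)/d_v$, the sweep sets $S_j^{\bm p}$, and the index $\ell_\mu$ all coincide for $H$ and $G$. Assuming the values $\bm p(v)/d_v$ are pairwise distinct (which I would justify by a perturbation together with the continuity in Proposition~\ref{prop:continuity}), each hyperedge contributes to $G$ only the edge $s_e t_e$ and self-loops, so $w_{\bm p}\ge 0$ and an edge crosses $S_j^{\bm p}$ exactly when $e\in\partial S_j^{\bm p}$. Hence $\mathrm{cut}_G(S_j^{\bm p})=\mathrm{cut}_H(S_j^{\bm p})$ and $\phi_H^\mu(\bm p)=\phi_G^\mu(\bm p)$, so it suffices to prove~\eqref{eq:sweep-bounded-by-alpha-2} for $\phi_G^\mu(\bm p)$ entirely on $G$.

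\textbf{Lovász--Simonovits on the graph.} On $G$ I would introduce the Lovász--Simonovits curve $I(x)=\max\{\sum_v\bm p(v)\bm w(v):\0\le\bm w\le\1,\ \sum_v d_v\bm w(v)=x\}$, which is concave, piecewise linear with breakpoints at $\vol(S_j^{\bm p})$, and equals $\bm p(S_j^{\bm p})$ there. The gap $g(x)=I(x)-x/\vol(V)$ is then concave and non-negative, and the hypothesis $\bm p(S)-\bm{\pi}_V(S)>\delta$ with $\vol(S)\le\mu\vol(V)$ forces $g(x_0)>\delta$ for some $x_0\le\mu\vol(V)$. Writing $\phi=\phi_G^\mu(\bm p)$, every sweep cut of volume at most $\mu\vol(V)\le\frac12\vol(V)$ obeys $\mathrm{cut}(S_j^{\bm p})\ge\phi\vol(S_j^{\bm p})$. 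The central estimate is the Lovász--Simonovits contraction for the lazy walk $W_G$,
\[
  (W_G\bm p)(S_j^{\bm p})\le\tfrac12\bigl(I(\vol(S_j^{\bm p})-\mathrm{cut}(S_j^{\bm p}))+I(\vol(S_j^{\bm p})+\mathrm{cut}(S_j^{\bm p}))\bigr),
\]
which, combined with the fixed-point identity $\bm p=\alpha\bm s+(1-\alpha)W_G\bm p$ and concavity of $I$, makes each application of $W_G$ contract the gap $g$ by a factor $1-\phi^2/8$ on the range $x\le\mu\vol(V)$.

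\textbf{Unrolling and the constant.} Finally I would expand $\bm p=\alpha\sum_{t\ge0}(1-\alpha)^tW_G^t\bm s$ and iterate the contraction: the gap achievable at $x_0$ is at most, up to a discretization error of order $1/\sqrt{\vol(V)}$, a quantity of order $\sqrt{x_0/\vol(V)}\sum_t\alpha(1-\alpha)^t(1-\phi^2/8)^t$, and the hypothesis $\delta\ge 4/\sqrt{\vol(V)}$ keeps that error below $\delta$. Truncating the sum at $t^*\approx\log(4/\delta)/\alpha$ and forcing $(1-\phi^2/8)^{t^*}\lesssim\delta$ then shows that $g(x_0)>\delta$ can only happen when $\phi^2<24\alpha\log(4/\delta)/\delta$, which is exactly~\eqref{eq:sweep-bounded-by-alpha-2}; carefully tracking the constants produces the factor $24$ and the argument $4/\delta$ of the logarithm.

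\textbf{Main obstacle.} The hard part will be the Lovász--Simonovits contraction and the constant bookkeeping, not the reduction: I must establish the factor $1-\phi^2/8$ only for sweep sets below the volume threshold $\mu\vol(V)$ (the genuinely \emph{local} feature of this lemma), and then choose $t^*$ so that the $1/\sqrt{\vol(V)}$ discretization error is dominated by $\delta$, which is precisely where the assumption $\delta\ge 4/\sqrt{\vol(V)}$ is used. A secondary subtlety is discharging the distinctness/tie-breaking assumption in the reduction by a continuity argument, so that $G=H_{\bm p}$ is a bona fide non-negatively weighted graph whose sweep-cut conductances coincide with those of $H$.
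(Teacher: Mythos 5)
Your overall architecture — pass to the graph $G=H_{\bm p}$ built from $\bm p=\bm{\pr}_\alpha(\bm s)$ itself, note that degrees and sweep cuts coincide and that $\bm p$ is an ordinary graph PPR on $G$, then run a Lovász--Simonovits argument localized to volume $\mu\cdot\vol(V)$ — is exactly the paper's route (its Lemma~\ref{lem:lazy-in-out}, Lemma~\ref{lem:ineq-Lovasz-Simonovitz}, and the mixing result, Theorem~\ref{thm:mixing}, which gives $\bm{\pr}_\alpha(\bm s)(S)-\bm{\pi}_V(S)\le\alpha t+\sqrt{\vol(S)}\,(1-\phi^2/8)^t$ whenever all sweep cuts up to volume $\mu\cdot\vol(V)$ have conductance at least $\phi$). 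The genuine gap is in your final paragraph, which is precisely where the specific form $\sqrt{24\alpha\log(4/\delta)/\delta}$ must come from. First, the coefficient of the decaying term in the Lovász--Simonovits bound is $\sqrt{x_0}$, not $\sqrt{x_0/\vol(V)}$: the normalized version already fails at $t=0$ (take $\bm s=\bm{\chi}_v$ and $x_0=d_v$; the gap is $1-d_v/\vol(V)\approx 1$ while $\sqrt{d_v/\vol(V)}\ll 1$). Second, with the correct coefficient your truncation $t^*\approx\log(4/\delta)/\alpha$ is useless: the additive term is then $\alpha t^*\approx\log(4/\delta)>1>\delta$, so the inequality $\delta<\alpha t^*+\cdots$ puts no constraint on $\phi$. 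The forced choice is $t\approx(8/\phi^2)\log\bigl(4\sqrt{\vol(S)}/\delta\bigr)$ (to kill the exponential), and since $\vol(S)$ can be of order $\vol(V)$, this only yields $\phi^2\lesssim 12\alpha\log(\vol(V))/\delta$ — which is the paper's \emph{intermediate} Lemma~\ref{lem:upperbound-of-sweep-conductance-local}, with $\log(\vol(V))$ in place of $\log(4/\delta)$. No amount of constant bookkeeping turns $\log(\vol(V))$ into $\log(4/\delta)$; an additional idea is required.

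The missing idea is a renormalization step. The conductances, the PPR, the sweep cuts, and the hypothesis $\bm{\pr}_\alpha(\bm s)(S)-\bm{\pi}_V(S)>\delta$ are all invariant under a uniform rescaling $w\mapsto cw$ of the hyperedge weights, while the only scale-dependent quantities in the intermediate lemma are the $\vol(V)$ inside the logarithm and in the constraint $\delta\ge 4/\sqrt{\vol(V)}$. The paper therefore applies the intermediate lemma to $H^c=(V,E,cw)$ with $c$ chosen extremal, $c\cdot\vol(V)=16/\delta^2$ (legal, since then $\delta\ge 4/\sqrt{c\cdot\vol(V)}$ holds with equality), whereupon $\log(c\cdot\vol(V))=2\log(4/\delta)$ and $12\cdot 2=24$ give exactly~\eqref{eq:sweep-bounded-by-alpha-2}. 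You would need to add this scaling step (or, equivalently, rerun the induction with initial bound $\sqrt{c\,x}$ for this particular $c$). A secondary flaw: expanding $\bm p=\alpha\sum_{t\ge 0}(1-\alpha)^t W_G^t\bm s$ and contracting each term separately is not justified, because the conductance hypothesis controls only the sweep cuts of $\bm p$ itself, not those of $W_G^t\bm s$; the induction must be run on the fixed-point identity $\bm p=\alpha\bm s+(1-\alpha)W_G\bm p$, which you do mention — keep that version and drop the Neumann-series unrolling.
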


This lemma indicates that if the PPR $\pr_\alpha(\bm{s})$ is far from the stationary distribution $\bm{\pi}_V$ on some vertex set $S$ with a small volume, then there is a sweep cut induced by 
$\pr_\alpha(\bm{s})$ with small volume and conductance.
Thus, we can reduce the clustering problem to showing 
the existence of a vertex set of a large PPR mass.

\section{Local Clustering}\label{sec:local-clustering}

For a hypergraph $H=(V,E,w)$, 
$v \in V$, and $\mu\in (0,1/2]$, we define
\begin{align}
 \phi_{H,v}^\mu=
 \min_{\emptyset \subsetneq C \subsetneq V, v \in C, \atop
 \vol(C)\leq \mu \cdot \vol(V)}  \phi_H(C).
 \label{eq:local-clustering-phi}
\end{align}
In the \emph{local clustering problem}, given a hypergraph $H=(V,E,w)$ and a vertex $v \in V$,
the goal is to find a vertex set containing $v$ whose volume is bounded by $\mu \cdot \vol(V)$
with a conductance close to $\phi_{H,v}^\mu$.
Let $C_v^*$ be an arbitrary minimizer of~\eqref{eq:local-clustering-phi}.
We define $w_\text{min}=\min_{e\in E}\{w(e)\}$, and $w_\text{max}=\max_{e\in E}\{w(e)\}$. 
In this section, we show the following:
\begin{thm}\label{thm:local-clustering}
  Let $H=(V,E,w)$ be a hypergraph, $v \in V$ be a vertex with $v \in {(C^*_v)}^{\circ}$, and $\mu \in (0,1/2]$.
 Our algorithm (Algorithm~\ref{alg:local-clustering}) returns a vertex set $S \subseteq V$ with $v \in S$, $\vol(S)\leq \mu \cdot \vol(V)  + \max_{v\in V}\{d_v\}
  $, and
  $\phi_H(S) = O\left(\sqrt{\phi_{H,v}^\mu}\right)$
  in $O\left((T_H + \sum_{e\in E}|e|) \log (w_\mathrm{max}\sum_{e\in E}|e|/w_\mathrm{min})\right)$ time,
where $T_H$ is the time to compute PPR on $H$.
\end{thm}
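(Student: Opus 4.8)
The plan is to instantiate the PPR with the single-vertex seed $\bm{s}=\bm{\chi}_v$ and to reduce the entire theorem to one concentration statement: that for a suitable $\alpha$ the mass $\bm{\pr}_\alpha(\bm{\chi}_v)(C^*_v)$ exceeds $\bm{\pi}_V(C^*_v)$ by a constant. Once this holds, Lemma~\ref{lem:upperbound-of-sweep-conductance-local-2} (applied with $S=C^*_v$, which satisfies $\vol(C^*_v)/\vol(V)\le\mu$ by the definition of $\phi_{H,v}^\mu$) produces a sweep cut of $\bm{\pr}_\alpha(\bm{\chi}_v)$ whose conductance is controlled by $\sqrt{\alpha}$ up to logarithmic factors. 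The containment $v\in S$ comes for free from Proposition~\ref{prop:bounds-of-values-of-PPR}: for every $u\neq v$ we have $\bm{\pr}_\alpha(\bm{\chi}_v)(u)/d_u\le\bm{\pi}_V(u)/d_u=1/\vol(V)$, while $\bm{\pr}_\alpha(\bm{\chi}_v)(v)/d_v\ge 1/\vol(V)$, so $v$ is the top vertex in the sweep order and hence lies in every sweep cut of $\bm{\pr}_\alpha(\bm{\chi}_v)$.

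The heart of the argument is the escaping-mass bound, and this is the step I expect to be the main obstacle, because $\mathcal{L}_H$ is set-valued and the effective graph through which the walk leaks depends on the solution itself. I would argue as follows. Write $C=C^*_v$, $\phi=\phi_{H,v}^\mu=\phi_H(C)$, and let $\bm{g}\in\mathcal{L}_H(\bm{\pr}_\alpha(\bm{\chi}_v))$ be the selection realizing the inclusion~\eqref{eq:hypergraph-pagerank-via-laplacian}, so that $\bm{\pr}_\alpha(\bm{\chi}_v)+\tfrac{1-\alpha}{2\alpha}\bm{g}=\bm{\chi}_v$. Taking the inner product with the $0/1$ indicator $\mathbf{1}_{\ol{C}}$ of $\ol{C}$ and using $\bm{\chi}_v(\ol{C})=0$ gives $\bm{\pr}_\alpha(\bm{\chi}_v)(\ol{C})=-\tfrac{1-\alpha}{2\alpha}\langle\bm{g},\mathbf{1}_{\ol{C}}\rangle$. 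Now write $\bm{g}=L_{H_{\bm{y}}}\bm{y}$ for the effective graph $H_{\bm{y}}$ determined by this selection, with $\bm{y}=D_H^{-1}\bm{\pr}_\alpha(\bm{\chi}_v)$; expanding the graph-Laplacian quadratic form leaves only the edges of $H_{\bm{y}}$ crossing $\partial C$, and every such edge arises from a boundary hyperedge of $H$ (a non-boundary hyperedge contributes only edges and self-loops inside one side). The hypothesis $v\in (C^*_v)^{\circ}$ enters precisely here: since $v$ lies on no boundary hyperedge, it is an endpoint of no crossing edge of $H_{\bm{y}}$, so by Proposition~\ref{prop:bounds-of-values-of-PPR} both endpoints $a,b$ of every crossing edge satisfy $\bm{y}(a),\bm{y}(b)\le 1/\vol(V)$. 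Bounding each crossing term by $w_{\bm{y}}(ab)/\vol(V)$ (the weights $w_{\bm{y}}$ being nonnegative by assumption) and using $\sum_{\text{crossing}}w_{\bm{y}}(ab)\le\mathrm{cut}_H(C)=\phi\,\vol(C)$ yields $\bm{\pr}_\alpha(\bm{\chi}_v)(\ol{C})\le\tfrac{(1-\alpha)\phi\,\vol(C)}{2\alpha\,\vol(V)}\le\tfrac{\phi\mu}{2\alpha}$.

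Combining these, $\bm{\pr}_\alpha(\bm{\chi}_v)(C)-\bm{\pi}_V(C)\ge 1-\tfrac{\phi\mu}{2\alpha}-\mu\ge\tfrac12-\tfrac{\phi}{4\alpha}$ using $\mu\le 1/2$. Choosing $\alpha=\Theta(\phi)$ makes this at least $\delta=1/4$, which meets the hypothesis $\delta\ge 4/\sqrt{\vol(V)}$ of Lemma~\ref{lem:upperbound-of-sweep-conductance-local-2} whenever $\vol(V)\ge 256$ (tiny instances are solved by brute force). The lemma then yields a sweep cut $S$ with $\phi_H(S)\le\phi_H^\mu(\bm{\pr}_\alpha(\bm{\chi}_v))=O(\sqrt{\alpha})=O\bigl(\sqrt{\phi_{H,v}^\mu}\bigr)$, and since $S$ is one of the sweep cuts $S_1,\dots,S_{\ell_\mu}$ of $\bm{\pr}_\alpha(\bm{\chi}_v)$, its volume is at most $\vol(S_{\ell_\mu})<\mu\,\vol(V)+\max_{v\in V}d_v$, as required.

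Finally, because $\phi_{H,v}^\mu$ is unknown, I would have the algorithm try a geometric sequence of values of $\alpha$ and return the best sweep cut found, one of which uses an $\alpha$ within a constant factor of the $\Theta(\phi_{H,v}^\mu)$ above. Since any nontrivial conductance lies in $[\,w_{\min}/\vol(V),\,1\,]$ and $\vol(V)=\sum_{e\in E}w(e)|e|\le w_{\max}\sum_{e\in E}|e|$, the relevant range of $\alpha$ spans a factor $O\bigl(w_{\max}\sum_{e\in E}|e|/w_{\min}\bigr)$, so $O\bigl(\log(w_{\max}\sum_{e\in E}|e|/w_{\min})\bigr)$ trials suffice; each trial costs $T_H$ to compute the PPR plus $O(\sum_{e\in E}|e|)$ to sort the vertices by $\bm{\pr}_\alpha(\bm{\chi}_v)(v)/d_v$ and sweep the conductances incrementally, giving the claimed running time. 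The only genuinely delicate point remains the escaping-mass computation above—in particular justifying the selection $\bm{g}$ together with its effective graph so that the graph-Laplacian expansion and the cut bound are valid—which is exactly where the interior assumption and Proposition~\ref{prop:bounds-of-values-of-PPR} are indispensable.
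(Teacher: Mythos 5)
Your proposal is correct and takes essentially the same route as the paper: pairing the Laplacian inclusion with $\mathbf{1}_{\ol{C}}$ and expanding over crossing edges is exactly the paper's Lemma~\ref{lem:ppr-cuts-inequality}; using the interior hypothesis $v\in (C_v^*)^\circ$ together with Proposition~\ref{prop:bounds-of-values-of-PPR} and the cut comparison~\eqref{eq:cut-between-graph-hypergraph} is Lemma~\ref{lem:local-bound-of-PR}; and concluding via Lemma~\ref{lem:upperbound-of-sweep-conductance-local-2} with a constant $\delta$, a geometric grid of $\alpha$ values, and the sweep-cut volume/time accounting mirrors Lemma~\ref{lem:main-local} and Section~\ref{subsec:theorem-local}. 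The only differences are cosmetic: you phrase the leak bound through the Laplacian selection rather than the lazy-walk operator $W_{H_{\bm{p}}}$, you place $\alpha$ just above $\phi_{H,v}^\mu$ (with $\delta=1/4$) where the paper takes $\alpha\le\phi_H(C_v^*)\le(1+\epsilon)\alpha$, and you explicitly verify the hypothesis $\delta\ge 4/\sqrt{\vol(V)}$ that the paper leaves implicit.
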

We stress here again that Theorem~\ref{thm:local-clustering} is the first local clustering algorithm with a theoretical guarantee on the conductance.
If we use the Euler method to compute PPR, then we have $T_H = O(\sum_{e\in E}|e|/\Delta)$, where $\Delta$ is the time span for one step.
Hence, the total time complexity is $O(\sum_{e\in E}|e|\log(w_\mathrm{max}\sum_{e\in E}|e|/w_\mathrm{min}) /\Delta)$.
We also remark that we can relax the assumption $v\in (C^*_v)^\circ$,
if we admit using the values of the PPR in the assumption.

\subsection{Algorithm}

\begin{algorithm}[t!]
  \caption{Local clustering}\label{alg:local-clustering}
  \Procedure{{\emph{\Call{LocalClustering}{$H,v,\mu$}}}}{
    \Input{\ A hypergraph $H=(V,E,w)$, a vertex $v \in V$, and $\mu \in (0,1/2]$}
    \Output{\ A vertex set $S\subseteq V$}
    $A_{\mathrm{cand}} \leftarrow \left\{ \frac{w_\text{min}(1+\epsilon)^i}{w_\text{max}\sum_{e\in E}|e|} \mid i \in \mathbb{Z}_+ \right\}$ for a constant $\epsilon \in (0,1)$\;
    \textbf{for} each \emph{$\alpha \in A_{\mathrm{cand}}$} \textbf{do} Compute $\textbf{pr}_\alpha(\bm{\chi}_v)$\;

    \Return $\argmin\{\phi_H(S) : S \in \bigcup_{\alpha \in A_{\mathrm{cand}}}\sweep^\mu(\textbf{pr}_\alpha(\bm{\chi}_v))\}$\label{line:sweep}.
  }
\end{algorithm}

We describe our algorithm for local clustering.
Given a hypergraph $H=(V,E,w)$ and a vertex $v \in V$, we first construct a set $A_{\mathrm{cand}}=\left\{ \frac{w_\text{min}(1+\epsilon)^i}{w_\text{max}\sum_{e\in E}|e|} \mid i \in \mathbb{Z}_+ \right\} \cap [0,1]$ for a constant $\epsilon \in (0,1)$. 
As the minimum conductance of a set in a connected hypergraph is lower-bounded by $\frac{w_\text{min}}{w_\text{max}\sum_{e \in E}|e|}$, we can guarantee that there exists $\alpha \in A_{\mathrm{cand}}$ such that
  $\alpha \leq \phi_{H,v}^\mu \leq (1+\epsilon)\alpha$.
Then for each $\alpha \in A_{\mathrm{cand}}$, we compute the PPR $\textbf{pr}_\alpha(\bm{\chi}_v)$.
Finally, we return the set $S$ with the minimum conductance in 
\[
\bigcup_{\alpha \in A_{\mathrm{cand}}}\sweep^\mu(\textbf{pr}_\alpha(\bm{\chi}_v)),
\]
where $\sweep^\mu(\bm{x})=\{S^{\bm{x}}_1,\dots, S^{\bm{x}}_{\ell_\mu}\}$. 
Proposition~\ref{prop:bounds-of-values-of-PPR} implies that
the normalized PPR value $\pr_\alpha(\bm{\chi}_v)(v)/d_v$ takes the
maximum value at $v$; hence
the returned set $S$ includes the specified vertex $v$.
The pseudocode of our local clustering algorithm is given in Algorithm~\ref{alg:local-clustering}.


\subsection{Leak of personalized PageRank}
As we want to apply
Lemma~\ref{lem:upperbound-of-sweep-conductance-local-2} on $C_v^*$, we need to bound the amount of the leak of PPR from $C_v^*$.
To this end, we start by showing the following upper bound.
\begin{lem}\label{lem:ppr-cuts-inequality}
  Let $C \subseteq V$ be a vertex set, $w \in C$, and $H_{\bm{p}} = (V, E_{\bm{p}}, w_{\bm{p}})$ be the graph defined for $\bm{p} = \bm{\pr}_\alpha(\bm{\chi}_w)$ as shown in Section~\ref{subsec:pagerank-hypergraph}.
  Then for any $\alpha \in (0,1]$, we have
  \[
    \bm{p}(\ol{C}) \leq
    \frac{1-\alpha}{2\alpha} \sum_{u \in C} \sum_{v \in \ol{C}} \frac{w_{\bm{p}}(uv)}{d_u}\bm{p}(u),
  \]
\end{lem}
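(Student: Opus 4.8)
The plan is to reduce the statement to an elementary identity for the personalized PageRank of the \emph{graph} $H_{\bm{p}}$ and then drop a single nonnegative term. Writing $\bm{p} = \bm{\pr}_\alpha(\bm{\chi}_w)$, the defining inclusion~\eqref{eq:hypergraph-pagerank-via-laplacian} says that some $\bm{y} \in \ca{L}_H(\bm{p}) = L_H(D_H^{-1}\bm{p})$ satisfies $\bm{p} + \frac{1-\alpha}{2\alpha}\bm{y} = \bm{\chi}_w$. By the construction in Section~\ref{subsec:pagerank-hypergraph}, the graph $H_{\bm{p}}$ is chosen precisely so that this representative equals $\ca{L}_{H_{\bm{p}}}\bm{p}$, where $\ca{L}_{H_{\bm{p}}} = I - A_{H_{\bm{p}}}D_{H_{\bm{p}}}^{-1}$ and $D_{H_{\bm{p}}} = D_H$ because degrees are preserved. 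Thus $\bm{p}$ solves the \emph{graph} PageRank equation for $H_{\bm{p}}$, and it is nonnegative since $\bm{\chi}_w$ is a distribution and Theorem~\ref{thm:existence-of-PR} makes $\bm{p}$ a distribution.

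First I would substitute $\ca{L}_{H_{\bm{p}}}\bm{p} = \bm{p} - A_{H_{\bm{p}}}D_H^{-1}\bm{p}$ into this equation and sum all coordinates indexed by $\ol{C}$. Since $w \in C$, the seed contributes nothing, i.e.\ $\bm{\chi}_w(\ol{C}) = 0$, so collecting the $\bm{p}(\ol{C})$ terms yields
\[
  \frac{1+\alpha}{2\alpha}\,\bm{p}(\ol{C}) = \frac{1-\alpha}{2\alpha}\,(A_{H_{\bm{p}}}D_H^{-1}\bm{p})(\ol{C}).
\]
Next I would expand the right-hand side as $\sum_{v \in \ol{C}}\sum_{u \in V} w_{\bm{p}}(uv)\bm{p}(u)/d_u$ and split the inner sum according to $u \in C$ or $u \in \ol{C}$, calling the two contributions $X$ and $Y$; the term $X = \sum_{u \in C}\sum_{v \in \ol{C}} w_{\bm{p}}(uv)\bm{p}(u)/d_u$ is exactly the quantity appearing in the statement.

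The crux is to bound $Y$. Because $w_{\bm{p}}$ is nonnegative (as assumed in Section~\ref{subsec:pagerank-hypergraph}) and the degree of $u$ in $H_{\bm{p}}$ equals $d_u$, we have $\sum_{v \in \ol{C}} w_{\bm{p}}(uv) \le \sum_{v \in V} w_{\bm{p}}(uv) = d_u$ for every $u$; together with $\bm{p}(u) \ge 0$ this gives $Y \le \sum_{u \in \ol{C}}\bm{p}(u) = \bm{p}(\ol{C})$. Substituting $(A_{H_{\bm{p}}}D_H^{-1}\bm{p})(\ol{C}) = X + Y \le X + \bm{p}(\ol{C})$ into the displayed identity and simplifying the coefficients, where $\frac{1+\alpha}{2\alpha} - \frac{1-\alpha}{2\alpha}$ collapses to $1$, yields $\bm{p}(\ol{C}) \le \frac{1-\alpha}{2\alpha}\,X$, which is the claim.

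The main obstacle is conceptual rather than computational: one must justify that the representative $\bm{y}$ realizing the PageRank inclusion can be taken to be $\ca{L}_{H_{\bm{p}}}\bm{p}$ for the graph $H_{\bm{p}}$ named in the statement, and that this graph has nonnegative weights and preserves vertex degrees. Once this identification is secured, the remainder is the routine summation over $\ol{C}$ and the degree bound above.
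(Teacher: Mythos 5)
Your proposal is correct and follows essentially the same route as the paper: both identify $\bm{p}$ as the classical PageRank of the graph $H_{\bm{p}}$, sum the PageRank equation over $\ol{C}$ (using $\bm{\chi}_w(\ol{C})=0$), and discard a single nonnegative term. Your bound $Y \leq \bm{p}(\ol{C})$ via degree preservation is exactly equivalent to the paper's step of dropping the mass $\bm{p}(\mathrm{out}(\ol{C})\setminus\mathrm{in}(\ol{C}))$ of pairs flowing from $\ol{C}$ back into $C$, so the two arguments differ only in bookkeeping.
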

\begin{proof}
  By the definition of $H_{\bm{p}}$, the PPR $\bm{p}$ satisfies
  \begin{align*}
    \bm{p}(\ol{C}) & = \alpha \bm{\chi}_w(\ol{C}) +
    (1-\alpha) W_{H_{\bm{p}}} (\bm{p}) (\ol{C}) \\
    & = (1-\alpha) \bm{p}(\ol{C}) +  (1-\alpha) \frac{1}{2}\left(A_{H_{\bm{p}}} D_{H_{\bm{p}}}^{-1} - I\right) \bm{p}(\ol{C}).
  \end{align*}
  Here, the second equality follows from $\bm{\chi}_w(\ol{C})=0$ and $W_{H_{\bm{p}}} = I + \left(A_{H_{\bm{p}}} D_{H_{\bm{p}}}^{-1} - I\right)/2$.
  Then, we have
  \begin{align}
   \bm{p}(\ol{C}) = \frac{1-\alpha}{2\alpha}
   \left(A_{H_{\bm{p}}} D_{H_{\bm{p}}}^{-1} - I\right)\bm{p}(\ol{C}).\label{eq:pagerank-equality-C-bar}
  \end{align}
For a subset $S\subseteq V$, we define two sets of
ordered pairs of vertices: $\mathrm{in}(S) = \{ (u,v) \in V\times S \}$ and 
$\mathrm{out}(S) = \{ (u,v) \in S \times V \}$. 
For an ordered pair $(u,v) \in V\times V$,
we set $\bm{p}(u,v) := \frac{w_{\bm{p}}(uv)}{d_u}\bm{p}(u)$.
Then, the term $\left(A_{H_{\bm{p}}} D_{H_{\bm{p}}}^{-1} - I\right)(\bm{p}) (\ol{C})$ can be rewritten as
  \begin{align}
   & \left(A_{H_{\bm{p}}} D_{H_{\bm{p}}}^{-1} - I\right)(\bm{p}) (\ol{C}) \nonumber
   = \sum_{u \in \ol{C}}\sum_{v \in V} \frac{w_{\bm{p}}(uv)}{d_{v}} \bm{p}(v) - \sum_{u \in \ol{C}} \bm{p}(u) \nonumber \\
   &= \sum_{u \in \ol{C}}\sum_{v \in V} \frac{w_{\bm{p}}(uv)}{d_{v}} \bm{p}(v) - \sum_{u \in \ol{C}} \sum_{v\in V} \frac{w_{\bm{p}}(uv)}{d_{u}}\bm{p}(u) \nonumber \\
    & = \sum_{(u,v) \in \mathrm{in}(\ol{C}) \! \setminus \!
    \mathrm{out}(\ol{C})} \bm{p}(u,v) -
    \sum_{(u,v) \in\mathrm{out}(\ol{C})\!\setminus\!\mathrm{in}(\ol{C})} \bm{p}(u,v) \nonumber \\
   & \leq \sum_{(u,v) \in \mathrm{in}(\ol{C})\!\setminus\! \mathrm{out}(\ol{C})} \bm{p}(u,v)
   = \sum_{u \in C} \sum_{v \in\ol{C}} \frac{w_{\bm{p}}(uv)}{d_u}\bm{p}(u).
    \label{eq:pagerank-equality-C-bar-3}
  \end{align}

  The claim follows by combining~\eqref{eq:pagerank-equality-C-bar} 
   and~\eqref{eq:pagerank-equality-C-bar-3}.
\end{proof}

Next, we show that the amount of the leak of PPR from a vertex set $C$ can be bounded using its conductance.
\begin{lem}\label{lem:local-bound-of-PR}
  Let $C \subseteq V$ be a vertex set with $\vol(C)\leq \vol(V)/2$.
  Then, for any $\alpha \in (0,1]$ and a vertex $v \in {C}^\circ$, we have
  \[
    \bm{\pr}_\alpha(\bm{\chi}_v)(\ol{C}) \leq \frac{1}{4\alpha} \phi_H(C).
  \]
\end{lem}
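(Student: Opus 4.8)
The plan is to feed the vertex set $C$ and the seed vertex $v$ into Lemma~\ref{lem:ppr-cuts-inequality} (taking its seed to be $v$ and writing $\bm{p} = \bm{\pr}_\alpha(\bm{\chi}_v)$), which already yields
\begin{align*}
  \bm{\pr}_\alpha(\bm{\chi}_v)(\ol{C}) \le \frac{1-\alpha}{2\alpha}\sum_{u \in C}\sum_{u' \in \ol{C}}\frac{w_{\bm{p}}(uu')}{d_u}\,\bm{p}(u).
\end{align*}
The whole task then reduces to showing that the double sum on the right is at most $\mathrm{cut}(C)/\vol(V)$, after which the volume hypothesis finishes the estimate.

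I would obtain the double-sum bound in three steps. First, since $v \in C^\circ$, the seed lies in no boundary hyperedge, so in the graph $H_{\bm{p}}$ it has no edge leaving $C$; hence $w_{\bm{p}}(vu') = 0$ for every $u' \in \ol{C}$ and the term $u = v$ drops out. This is exactly where the hypothesis $v \in C^\circ$ is used, and it is essential because Proposition~\ref{prop:bounds-of-values-of-PPR} controls $\bm{p}(u)/d_u$ only for $u \neq v$ (at the seed it gives merely a lower bound). Second, for each surviving $u \in C \setminus \{v\}$, Proposition~\ref{prop:bounds-of-values-of-PPR} gives $\bm{p}(u)/d_u \le \bm{\pi}_V(u)/d_u = 1/\vol(V)$, so this factor can be pulled out. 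Third, I bound the remaining total crossing weight $\sum_{u \in C}\sum_{u' \in \ol{C}} w_{\bm{p}}(uu')$ by $\mathrm{cut}(C)$. Chaining these estimates gives
\begin{align*}
  \bm{\pr}_\alpha(\bm{\chi}_v)(\ol{C}) \le \frac{1-\alpha}{2\alpha}\cdot\frac{1}{\vol(V)}\sum_{u \in C}\sum_{u' \in \ol{C}} w_{\bm{p}}(uu') \le \frac{1-\alpha}{2\alpha}\cdot\frac{\mathrm{cut}(C)}{\vol(V)}.
\end{align*}
Finally, $\vol(C) \le \vol(V)/2$ implies $\min\{\vol(C),\vol(\ol{C})\} = \vol(C)$ and $\vol(V) \ge 2\,\vol(C)$, so with $1-\alpha \le 1$ the bound becomes $\frac{1}{2\alpha}\cdot\frac{\mathrm{cut}(C)}{2\vol(C)} = \frac{1}{4\alpha}\phi_H(C)$, as claimed (with a little slack to spare).

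I expect the third step, the crossing-weight bound, to be the main obstacle. It rests on the structure of $H_{\bm{p}}$ from Section~\ref{subsec:pagerank-hypergraph}: each hyperedge $e$ contributes to $H_{\bm{p}}$ only edges among its own vertices (because $\bm{b}_e$ is supported on $e$) and distributes total weight exactly $w(e)$ among them; in the generic case of pairwise distinct entries this is literally the single edge $s_e t_e$ of weight $w(e)$. Consequently an edge of $H_{\bm{p}}$ can cross $(C,\ol{C})$ only if it originates from a boundary hyperedge $e \in \partial C$, and each such hyperedge contributes crossing weight at most $w(e)$, whence $\sum_{u \in C}\sum_{u' \in \ol{C}} w_{\bm{p}}(uu') \le \sum_{e \in \partial C} w(e) = \mathrm{cut}(C)$. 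The care needed is to run this uniformly over the tie-breaking that fixes each $\bm{b}_e$ and over the non-negative reweighting fixed in Section~\ref{subsec:pagerank-hypergraph}; I would handle both at once by arguing at the level of the per-hyperedge weight splitting rather than individual edge weights. I would also note that Proposition~\ref{prop:bounds-of-values-of-PPR} presumes $H$ connected, which is the standing assumption in this clustering setting.
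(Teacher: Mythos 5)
Your proposal is correct and follows essentially the same route as the paper's own proof: apply Lemma~\ref{lem:ppr-cuts-inequality} with seed $v$, use $v \in C^\circ$ to eliminate the seed term, bound the remaining factors $\bm{p}(u)/d_u$ by $1/\vol(V)$ via Proposition~\ref{prop:bounds-of-values-of-PPR}, bound the crossing weight of $H_{\bm{p}}$ by $\mathrm{cut}(C)$ (the paper's inequality~\eqref{eq:cut-between-graph-hypergraph}), and finish with $\vol(C) \leq \vol(V)/2$ and $1-\alpha \leq 1$. The only cosmetic difference is ordering (the paper writes the uniform bound plus a correction term at $v$ and then kills the correction using $v \in C^\circ$, whereas you drop the $u=v$ term first), and your justification of the crossing-weight bound is actually more detailed than the paper's, which states it without proof.
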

\begin{proof}
  Let $\bm{p} = \bm{\pr}_\alpha(\bm{\chi}_v)$.
  By Lemma~\ref{lem:ppr-cuts-inequality}, we have
  \begin{align}
   \frac{2\alpha}{1-\alpha}\bm{p}(\ol{C}) \leq
   \sum_{u' \in C} \sum_{v' \in\ol{C}} \frac{w_{\bm{p}}(u'v')}{d_{u'}}\bm{p}(u').\label{eq:inequality-of-PPRs}
  \end{align}
  As $v \in C$, by Proposition~\ref{prop:bounds-of-values-of-PPR},
  the RHS of~\eqref{eq:inequality-of-PPRs} is at most
  \[
   \sum_{u' \in C} \sum_{v' \in\ol{C}} \frac{w_{\bm{p}}(u'v')}{d_{u'}}
   \frac{d_{u'}}{\vol(V)} +
   \sum_{v' \in\ol{C}} \frac{w_{\bm{p}}(vv')}{d_{v}}\bm{p}(v)
   - \sum_{v' \in\ol{C}} \frac{w_{\bm{p}}(vv')}{d_{v}}
   \frac{d_{v}}{\vol(V)}.
  \]
  From the assumption $v \in C^{\circ}$, $w_{\bm{p}}(vv')$ is zero for any
  $v' \in \ol{C}$. Hence, we have
  \begin{align*}
  \bm{p}(\ol{C})&\leq \frac{1-\alpha}{2\alpha}
  \sum_{u' \in C} \sum_{v' \in\ol{C}} \frac{w_{\bm{p}}(u'v')}{\vol(V)}
  \leq \frac{1-\alpha}{2\alpha}\frac{\vol(C)}{\vol(V)} \phi_H(C)\\
  & \leq \frac{1-\alpha}{4\alpha} \phi_H(C)
  \leq \frac{1}{4\alpha} \phi_H(C) .
  \end{align*}
The second inequality follows from
\begin{align}
 \sum_{u' \in C} \sum_{v' \in\ol{C}} w_{\bm{p}}(u'v')
 \leq \sum_{e\in \partial C} w(e).\label{eq:cut-between-graph-hypergraph} 
\end{align}
\end{proof}
\subsection{Proof of Theorem~\ref{thm:local-clustering}}\label{subsec:theorem-local}


First, we discuss the conductance of the output vertex set.
\begin{lem}\label{lem:main-local}
Let $\mu\in (0,1/2]$.
Suppose $v \in {(C^*_v)}^\circ$.
  Then for any $\epsilon \in (0,1)$ and $\alpha \in (0,1]$ with $\alpha \leq \phi_H\left(C_v^*\right) \leq (1+\epsilon)\alpha$,
  we have
  \[
    \phi_H^\mu(\bm{\pr}_\alpha(\bm{\chi}_v)) < \frac{8}{
    \sqrt{3-\epsilon-4\mu}}\sqrt{6\phi_H\left(C_v^*\right) \log\left( \frac{2}{3-\epsilon -4\mu}\right)}.
  \]
\end{lem}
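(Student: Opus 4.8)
The plan is to invoke the key lemma (Lemma~\ref{lem:upperbound-of-sweep-conductance-local-2}) with seed $\bm{s}=\bm{\chi}_v$ and witness set $S=C_v^*$. To do so I must exhibit a value $\delta$ for which both hypotheses hold: $\vol(C_v^*)/\vol(V)\le\mu$ and $\bm{\pr}_\alpha(\bm{\chi}_v)(C_v^*)-\bm{\pi}_V(C_v^*)>\delta$. The volume bound is immediate, since $C_v^*$ minimizes~\eqref{eq:local-clustering-phi} and every feasible set there has volume at most $\mu\cdot\vol(V)$. The crux is the second condition, i.e.\ showing that a large fraction of the PPR mass remains inside $C_v^*$.

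For the mass estimate, write $\bm{p}=\bm{\pr}_\alpha(\bm{\chi}_v)$. Since $\bm{\chi}_v$ is a distribution, Theorem~\ref{thm:existence-of-PR} makes $\bm{p}$ a distribution, so $\bm{p}(C_v^*)=1-\bm{p}(\overline{C_v^*})$. Because $\mu\le 1/2$ we have $\vol(C_v^*)\le\vol(V)/2$, and by hypothesis $v\in (C_v^*)^\circ$; hence Lemma~\ref{lem:local-bound-of-PR} applies and gives $\bm{p}(\overline{C_v^*})\le\frac{1}{4\alpha}\phi_H(C_v^*)$. Using the calibration $\phi_H(C_v^*)\le(1+\epsilon)\alpha$, this leak is at most $\frac{1+\epsilon}{4}$. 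Combining with $\bm{\pi}_V(C_v^*)=\vol(C_v^*)/\vol(V)\le\mu$ yields $\bm{p}(C_v^*)-\bm{\pi}_V(C_v^*)\ge 1-\frac{1+\epsilon}{4}-\mu=\frac{3-\epsilon-4\mu}{4}$, so I set $\delta=\frac{3-\epsilon-4\mu}{4}$. To secure the strict inequality demanded by the key lemma I would use the sharper $(1-\alpha)$ factor appearing in the proof of Lemma~\ref{lem:local-bound-of-PR} (or pass to a marginally smaller $\delta$); I also record the side condition $\delta\ge 4/\sqrt{\vol(V)}$, which holds once $\vol(V)$ is large relative to $1/(3-\epsilon-4\mu)$.

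With both hypotheses verified, Lemma~\ref{lem:upperbound-of-sweep-conductance-local-2} gives $\phi_H^\mu(\bm{p})<\sqrt{24\alpha\log(4/\delta)/\delta}$. I then substitute $\alpha\le\phi_H(C_v^*)$ and $\delta=\frac{3-\epsilon-4\mu}{4}$, collect the factor $1/\delta$, and rewrite the logarithm to reduce this to a bound of the stated form $\frac{8}{\sqrt{3-\epsilon-4\mu}}\sqrt{6\phi_H(C_v^*)\log(2/(3-\epsilon-4\mu))}$; this concluding step is routine algebra in the three parameters.

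The main obstacle is the mass-concentration estimate, i.e.\ ensuring the PPR does not leak too much out of $C_v^*$. Everything hinges on Lemma~\ref{lem:local-bound-of-PR}, which requires the interior hypothesis $v\in(C_v^*)^\circ$ (so that no boundary hyperedge of $C_v^*$ touches $v$, making the corresponding term in the leak bound vanish) together with the tuning of $\alpha$ to within a $(1+\epsilon)$ factor of $\phi_H(C_v^*)$. Balancing the leak contribution $\frac{1+\epsilon}{4}$ against the stationary mass $\mu$ is precisely what produces the quantity $3-\epsilon-4\mu$, and it is the single place where $\alpha$, $\epsilon$, and $\mu$ interact.
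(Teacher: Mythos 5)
Your proposal matches the paper's proof essentially step for step: you lower-bound $\bm{\pr}_\alpha(\bm{\chi}_v)(C_v^*)-\bm{\pi}_V(C_v^*)$ by $(3-\epsilon-4\mu)/4$ using Theorem~\ref{thm:existence-of-PR} together with Lemma~\ref{lem:local-bound-of-PR} under the calibration $\phi_H(C_v^*)\le(1+\epsilon)\alpha$, and then apply Lemma~\ref{lem:upperbound-of-sweep-conductance-local-2} with $\delta=(3-\epsilon-4\mu)/4$ and $\alpha\le\phi_H(C_v^*)$, which is exactly the paper's argument. Your extra remarks on securing the strict inequality and on the side condition $\delta\ge 4/\sqrt{\vol(V)}$ concern details the paper silently glosses over, so if anything your write-up is slightly more careful.
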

\begin{proof}
Let $C = C_v^*$ and $ \bm{p} = \bm{\pr}_\alpha(\bm{\chi}_v)$.
To apply Lemma~\ref{lem:upperbound-of-sweep-conductance-local-2}, we want to lower-bound $\bm{p}(C) - \bm{\pi}_V(C)$.
By the fact that the PPR $\bm{p}$ is a distribution by Theorem~\ref{thm:existence-of-PR} and Lemma~\ref{lem:local-bound-of-PR} for $\alpha \geq \phi_H(C)/(1 + \epsilon)$,
we have $\bm{p}(C) -\bm{\pi}_V(C) = 1 -
 \bm{p}(\ol{C}) - \bm{\pi}_V(C) \geq 1 - (1+\epsilon)/4 - \mu =
 (3-\epsilon-4\mu)/4$.
Applying Lemma~\ref{lem:upperbound-of-sweep-conductance-local-2} with $C = C_v^*$,
$\delta = (3-\epsilon-4\mu)/4$, and $\alpha \leq \phi_H(C)$, we obtain
\begin{align*}
 \phi_H^\mu(\bm{p})
  < \frac{8}{\sqrt{3-\epsilon-4\mu}}\sqrt{6\phi^\mu_{H,v} \log\left(\frac{2}{3-\epsilon -4\mu} \right)}. &\qedhere
\end{align*}
\end{proof}

\begin{proof}[Proof of Theorem~\ref{thm:local-clustering}]
  The guarantee on conductance holds by Lemma~\ref{lem:main-local} as some $\alpha \in A_{\mathrm{cand}}$ satisfies the condition of Lemma~\ref{lem:main-local} with $\epsilon \in (0,1)$.

  Now, we analyze the time complexity.
  As the size of $|A_{\mathrm{cand}}| = O(\log (w_\text{max}\sum_{e\in E}|e|/w_\text{min}))$, it takes $O(T_H \log (w_\text{max}\sum_{e\in E}|e|/w_\text{min}))$ time to compute all the PPRs.
  Computing the conductance values of the sweep cuts induced by a particular vector takes $O(\sum_{e \in E}|e|)$ time.
  Hence, Line~\ref{line:sweep} takes $O(\sum_{e\in E}|e|\log (w_\text{max}\sum_{e\in E}|e|/w_\text{min}))$ time.
  Combining those, we obtain the claimed time complexity.
\end{proof}



\section{Global Clustering}\label{sec:global-clustering}
In the \emph{global clustering problem}, given a hypergraph $H$, the goal is to find a vertex set with conductance close to $\phi_H$.
In this section, we show the following: 
\begin{thm}\label{thm:global-clustering}
  Let $H=(V,E,w)$ be a hypergraph.
  Our algorithm (Algorithm~\ref{alg:global-clustering}) 
  returns a vertex set $S \subseteq V$ with 
  $\phi_H(S) = O\left(\sqrt{\phi_{H}}\right)$ 
  in $O\left((T_H + \sum_{e\in E}|e|)n\log (w_\mathrm{max}\sum_{e\in E}|e|/w_\mathrm{min})\right)$ time, where $T_H$ is the time to compute PPR on $H$.
\end{thm}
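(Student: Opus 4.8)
The plan is to obtain the global guarantee by running the local subroutine from every possible seed. Concretely, I would have Algorithm~\ref{alg:global-clustering} invoke the computation of Algorithm~\ref{alg:local-clustering} once for each vertex $v\in V$ with $\mu=1/2$ (that is, for every $v$ compute $\bm{\pr}_\alpha(\bm{\chi}_v)$ for each $\alpha\in A_{\mathrm{cand}}$ and form all of the sweep cuts), and then return the set of minimum conductance among the $O(n\,|A_{\mathrm{cand}}|)$ candidate sweep cuts. Since $|A_{\mathrm{cand}}|=O(\log(w_{\max}\sum_{e\in E}|e|/w_{\min}))$ and a single per-seed run costs $O((T_H+\sum_{e\in E}|e|)\log(w_{\max}\sum_{e\in E}|e|/w_{\min}))$ by the analysis behind Theorem~\ref{thm:local-clustering}, repeating it over all $n$ seeds immediately yields the stated running time, so the running-time half of the theorem is routine once the algorithm is fixed.

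For correctness, let $S^*$ be a minimizer of the conductance, so $\phi_H(S^*)=\phi_H$, and assume without loss of generality that $\vol(S^*)\le\vol(V)/2$ (otherwise replace $S^*$ by $V\setminus S^*$). The task reduces to exhibiting a single seed $v\in S^*$ and a single $\alpha\in A_{\mathrm{cand}}$ for which some sweep cut of $\bm{\pr}_\alpha(\bm{\chi}_v)$ has conductance $O(\sqrt{\phi_H})$; the algorithm, searching over all seeds and all $\alpha$, then certainly outputs a set at least this good. To produce such a cut I would apply the key lemma (Lemma~\ref{lem:upperbound-of-sweep-conductance-local-2}) with $S=S^*$, $\mu=1/2$, and $\delta$ a fixed small constant: it suffices to guarantee $\bm{\pr}_\alpha(\bm{\chi}_v)(S^*)-\bm{\pi}_V(S^*)>\delta$ for an $\alpha$ comparable to $\phi_H$. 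Because $\phi_H$ is bounded below by $w_{\min}/(w_{\max}\sum_{e\in E}|e|)$, the geometric grid $A_{\mathrm{cand}}$ contains some $\alpha$ with $\alpha\le\phi_H\le(1+\epsilon)\alpha$, exactly as in the local analysis.

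It then remains to bound the leak $\bm{\pr}_\alpha(\bm{\chi}_v)(\ol{S^*})$ from above, since $\bm{\pr}_\alpha(\bm{\chi}_v)(S^*)-\bm{\pi}_V(S^*)=1-\bm{\pr}_\alpha(\bm{\chi}_v)(\ol{S^*})-\bm{\pi}_V(S^*)$ with $\bm{\pi}_V(S^*)=\vol(S^*)/\vol(V)\le 1/2$, mirroring the computation in Lemma~\ref{lem:main-local}. If the interior $(S^*)^\circ$ is nonempty, I would simply pick $v\in(S^*)^\circ$ and invoke Lemma~\ref{lem:local-bound-of-PR} verbatim to get $\bm{\pr}_\alpha(\bm{\chi}_v)(\ol{S^*})\le\phi_H/(4\alpha)\le(1+\epsilon)/4$, so that the excess on $S^*$ is at least a positive constant; this closes the argument through Lemma~\ref{lem:upperbound-of-sweep-conductance-local-2} and the distribution property of Theorem~\ref{thm:existence-of-PR}, just as in the local case.

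The hard part is that $(S^*)^\circ$ can be empty (for instance when a single hyperedge spans all of $V$), and, unlike the graph setting of Andersen~\emph{et~al.}, the hypergraph PPR is \emph{not} linear in the seed, so I cannot average the single-seed leaks into the leak of $\bm{\pr}_\alpha(\bm{\pi}_{S^*})$ and apply Markov's inequality to extract a good vertex. The plan to circumvent this is to keep the seed term in the estimate underlying Lemma~\ref{lem:local-bound-of-PR}: for a general $v\in S^*$ one obtains, via Lemma~\ref{lem:ppr-cuts-inequality} and Proposition~\ref{prop:bounds-of-values-of-PPR},
\[
  \bm{\pr}_\alpha(\bm{\chi}_v)(\ol{S^*})\le\frac{1-\alpha}{2\alpha}\left(\frac{\vol(S^*)}{\vol(V)}\phi_H+\bigl(\bm{p}(v)-\bm{\pi}_V(v)\bigr)\frac{1}{d_v}\sum_{v'\in\ol{S^*}}w_{\bm{p}}(vv')\right),
\]
where $\bm{p}=\bm{\pr}_\alpha(\bm{\chi}_v)$ and the extra term is governed by the fraction of $v$'s degree lying on boundary hyperedges. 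I would then select $v$ by a counting argument: since $\sum_{v\in S^*}\sum_{e\in\partial S^*:\,v\in e}w(e)\le\max_e|e|\cdot\mathrm{cut}(S^*)$, a degree-weighted average shows that most $v\in S^*$ carry only an $O(\max_e|e|\cdot\phi_H)$ fraction of their degree on $\partial S^*$. I expect the genuine crux to be controlling the seed excess $\bm{p}(v)-\bm{\pi}_V(v)$ at such a vertex in the relevant regime $\alpha\asymp\phi_H$, so that the product of the two small quantities keeps the extra leak below the constant $\delta$ required by Lemma~\ref{lem:upperbound-of-sweep-conductance-local-2}; this is precisely where the loss of seed-linearity forces an argument beyond a direct port of the graph proof.
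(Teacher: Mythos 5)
Your reduction itself (the geometric grid over $\alpha$, Lemma~\ref{lem:upperbound-of-sweep-conductance-local-2} with $\mu=1/2$ and a constant $\delta$, and the running-time accounting) matches the paper, and you have correctly isolated the crux: $(C^*)^\circ$ may be empty, so no interior seed exists, and hypergraph PPR is not linear in the seed. But your proposal does not close this gap, and you concede as much. Your counting argument produces vertices $v\in S^*$ whose boundary-degree fraction is $O(\max_e|e|\cdot\phi_H)$, but the resulting leak bound still carries the uncontrolled factor $\bm{\pr}_\alpha(\bm{\chi}_v)(v)-\bm{\pi}_V(v)$, which can be of constant order in the relevant regime $\alpha\asymp\phi_H$; worse, the $\max_e|e|$ factor exactly cancels the smallness of $\phi_H$ once $\alpha\asymp\phi_H$, so you would need the seed excess to be $O(1/\max_e|e|)$, for which no tool in the paper (or in your proposal) is available. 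The proof is therefore incomplete at precisely the step where the real work lies.

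The paper resolves this by the averaging-plus-Markov argument that you dismissed as unavailable. The point you missed is that one need not average the PPR vectors themselves (which would indeed require seed-linearity); one averages the per-seed \emph{leak inequalities} of Lemma~\ref{lem:ppr-cuts-inequality}. Fixing a single auxiliary graph $H_{\bm{p}_{w_0}}$ (for a seed $w_0$ maximizing the boundary weight), the right-hand side of each inequality is linear in the values $\bm{p}_w(u)$ for $u\in C$, so averaging over seeds $w$ with weights $\bm{\pi}_C(w)$ produces the seed-averaged quantity $\left(\sum_{w\in C}\bm{\pi}_C(w)\bm{p}_w\right)(u)$ at boundary vertices $u$, and this is bounded by $\bm{\pi}_C(u)$ using assumption~\eqref{eq:assumption} --- a hypergraph substitute for the stationarity identity that holds with equality for graphs, and which holds under the mild sufficient conditions of Lemmas~\ref{lem:criterion-1} and~\ref{lem:criterion-2} (e.g.\ whenever $\alpha\le 1/4$ and $d_{\max}/\vol(V)\le 1/4$). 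This yields Lemma~\ref{lem:bound-by-conductance} (average leak at most $\phi_H(C)/(2\alpha)$), after which Markov's inequality (Lemma~\ref{lem:lowerbound-of-pagerank}) gives $\vol(C_\alpha)\ge\vol(C)/2$: good seeds occupy at least half the volume of $C^*$, with no interiority needed. Theorem~\ref{thm:main} then instantiates the key lemma with $\alpha\le 10\phi_H\le(1+\epsilon)\alpha$ and $\delta=(4-\epsilon)/10$. Note the price the paper pays for this: its conductance guarantee is conditional on~\eqref{eq:assumption}, which is argued to be mild rather than proved unconditionally; you were implicitly attempting an unconditional proof, which is exactly why you hit a wall.
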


\subsection{Algorithm}

\begin{algorithm}[t!]
  \caption{Global clustering}\label{alg:global-clustering}
  \Procedure{{\emph{\Call{GlobalClustering}{$H$}}}}{
    \Input{\ A hypergraph $H=(V,E,w)$}
    \Output{\ A vertex set $S\subseteq V$}
    \textbf{for} $v \in V$ \textbf{do} $S_v \leftarrow \Call{LocalClustering}{H,v,1/2}$\label{line:global_V}\;
    \Return $\argmin\{\phi_H(S):S\in \{S_v:v\in V\}\}$. 
  }
\end{algorithm}

Our global clustering algorithm is quite simple.
That is, given a hypergraph $H=(V,E,w)$, we simply call \Call{LocalClustering}{$H,v,1/2$} for every $v \in V$, and then return the best returned set in terms of conductance.
The pseudocode is given in Algorithm~\ref{alg:global-clustering}. 

\subsection{Leak of personalized PageRank}
Let $C^*$ be an arbitrary set of conductance $\phi_H$.
As we want to apply
Lemma~\ref{lem:upperbound-of-sweep-conductance-local-2} for $\mu = 1/2$
on $\phi_H$, we need to bound the amount of the leak of PRR
from $\phi_H$.
The following lemma is a counterpart to Lemma~\ref{lem:local-bound-of-PR} for the global setting.
\begin{lem}\label{lem:bound-by-conductance}
  Suppose that $\alpha \in (0,1]$ and $C \subseteq V$ satisfy
  \begin{align}
    \left( \sum_{w \in C}\bm{\pi}_C(w) \bm{\pr}_\alpha(\bm{\chi}_w)\right)(v)\leq \bm{\pi}_C(v)\label{eq:assumption}
  \end{align}
  for every $v \in C\!\setminus\!C^\circ$.
  Then we have
  \[
    \sum_{w\in C} \bm{\pi}_C(w)\bm{\pr}_\alpha(\bm{\chi}_w)(\ol{C}) \leq \frac{\phi_H(C)}{2\alpha}.
  \]
\end{lem}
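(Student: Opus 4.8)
The plan is to mimic the proof of Lemma~\ref{lem:local-bound-of-PR}, but to apply the single-seed leak bound of Lemma~\ref{lem:ppr-cuts-inequality} to each seed $w\in C$ separately and then average the resulting inequalities against the weights $\bm{\pi}_C(w)$. Write $\bm{p}_w=\bm{\pr}_\alpha(\bm{\chi}_w)$ and $\bm{p}=\sum_{w\in C}\bm{\pi}_C(w)\bm{p}_w$, and let $H_{\bm{p}_w}=(V,E_{\bm{p}_w},w_{\bm{p}_w})$ be the graph induced by $\bm{p}_w$. For each $w\in C$, Lemma~\ref{lem:ppr-cuts-inequality} gives $\tfrac{2\alpha}{1-\alpha}\bm{p}_w(\ol{C})\le\sum_{a\in C}\sum_{b\in\ol{C}}\tfrac{w_{\bm{p}_w}(ab)}{d_a}\bm{p}_w(a)$. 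Multiplying by $\bm{\pi}_C(w)$, summing over $w\in C$, and exchanging the order of summation yields
\[
\frac{2\alpha}{1-\alpha}\,\bm{p}(\ol{C})\le\sum_{a\in C}\sum_{b\in\ol{C}}\frac{1}{d_a}\sum_{w\in C}\bm{\pi}_C(w)\,w_{\bm{p}_w}(ab)\,\bm{p}_w(a).
\]

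Next I would localize the right-hand side to the boundary. A vertex $a\in C^\circ$ lies in no hyperedge of $\partial C$, so it has no edge of $H_{\bm{p}_w}$ leaving $C$ and $w_{\bm{p}_w}(ab)=0$ for all $b\in\ol{C}$; hence only $a\in C\setminus C^\circ$ contribute. For such $a$, assumption~\eqref{eq:assumption} supplies the pointwise cap $\bm{p}(a)=\sum_{w}\bm{\pi}_C(w)\bm{p}_w(a)\le\bm{\pi}_C(a)=d_a/\vol(C)$, which is exactly the role played, for a single seed, by Proposition~\ref{prop:bounds-of-values-of-PPR} in Lemma~\ref{lem:local-bound-of-PR}. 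The goal is then to pull the seed-average $\bm{p}(a)$ through the cut-weight factor so that the remaining sum of crossing weights collapses via the hypergraph counterpart of~\eqref{eq:cut-between-graph-hypergraph}: each $e\in\partial C$ induces at most one edge of $H_{\bm{p}_w}$, so the total crossing weight is at most $\mathrm{cut}(C)$. This would give $\tfrac{2\alpha}{1-\alpha}\bm{p}(\ol{C})\le\mathrm{cut}(C)/\vol(C)=\phi_H(C)$ (taking $\vol(C)\le\vol(V)/2$ so the minimum in the conductance is $\vol(C)$), whence $\bm{p}(\ol{C})\le\tfrac{1-\alpha}{2\alpha}\phi_H(C)\le\tfrac{1}{2\alpha}\phi_H(C)$, as claimed.

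The hard part is precisely this pulling-through step, and it is where the nonlinearity of the hypergraph Laplacian bites: because the induced graph $H_{\bm{p}_w}$, and hence $w_{\bm{p}_w}(ab)$, depends on the seed $w$, the cut-weight factor sits \emph{inside} the seed-average and cannot be separated from $\bm{p}_w(a)$ to expose $\bm{p}(a)$. (Had the crossing weights been seed-independent, the argument would close immediately.) I would attack this by a per-hyperedge accounting: for a fixed seed $w$, order the vertices of each $e\in\partial C$ by $\bm{p}_w(\cdot)/d_{(\cdot)}$, so that $e$ contributes its single edge only when its extreme vertices straddle $C$, and charge the weight $w(e)$ to the endpoint $c_e^w\in e\cap C\subseteq C\setminus C^\circ$; one then sums these charges against $\bm{\pi}_C(w)$ and invokes~\eqref{eq:assumption} at each boundary vertex. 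The most delicate case to control is when the charged endpoint is the seed itself, $c_e^w=w$, since $\bm{p}_w(w)/d_w$ is \emph{not} bounded by Proposition~\ref{prop:bounds-of-values-of-PPR}; here assumption~\eqref{eq:assumption} applied at $v=w\in C\setminus C^\circ$ is indispensable, as it forces $\bm{\pi}_C(w)\bm{p}_w(w)\le\bm{p}(w)\le d_w/\vol(C)$ and thereby tames the seed contribution. Verifying that this accounting recovers exactly $\mathrm{cut}(C)$, rather than a quantity inflated by hyperedge sizes, is the crux of the proof.
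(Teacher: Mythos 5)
You set up the argument exactly as the paper does---apply Lemma~\ref{lem:ppr-cuts-inequality} seed by seed, average against $\bm{\pi}_C$, and cap the averaged PPR at boundary vertices via assumption~\eqref{eq:assumption}---and you correctly isolate the one genuine difficulty: the crossing weights $w_{\bm{p}_w}(uv)$ depend on the seed $w$, so the averaged vector $\bm{p}(u)=\sum_{w}\bm{\pi}_C(w)\bm{p}_w(u)$ never appears as such on the right-hand side. But your proposal does not close this gap; it only names it. The per-hyperedge charging you sketch runs into precisely the obstruction you flag in your last sentence: for a fixed $e\in\partial C$ the charged endpoint $c_e^w$ is the maximizer of $\bm{p}_w(\cdot)/d_{(\cdot)}$ over $e\cap C$, so after averaging you must control $\sum_{w}\bm{\pi}_C(w)\max_{a\in e\cap C}\bm{p}_w(a)/d_a$; a maximum does not pass through an average (replacing it by $\max_{a}\sum_{w}\bm{\pi}_C(w)\bm{p}_w(a)/d_a$ is false in general), while bounding the maximum by the sum over $e\cap C$ inflates the bound by a factor up to $|e\cap C|$, which destroys the $\mathrm{cut}(C)$ accounting. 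Since you explicitly defer ``verifying that this accounting recovers exactly $\mathrm{cut}(C)$''---which is the entire content of the lemma beyond the routine averaging---the proof is incomplete.

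The paper's decoupling device is different from (and simpler than) your charging scheme: it fixes a \emph{single} seed $w_0\in C$ whose induced graph $H_{\bm{p}_{w_0}}$ maximizes the total crossing weight $\sum_{u\in C}\sum_{v\in\ol{C}}w_{\bm{p}_w}(uv)$ over choices of $w$, and asserts (its inequality~\eqref{eq:ppr-cuts-inequality}) that the leak bound of Lemma~\ref{lem:ppr-cuts-inequality} holds for \emph{every} seed $w$ with the seed-independent weights $w_{\bm{p}_{w_0}}(uv)$ in place of $w_{\bm{p}_w}(uv)$. Once the weights no longer depend on $w$, the average passes through the double sum (the paper's inequality~\eqref{eq:sum-inequality}), assumption~\eqref{eq:assumption} gives $\bm{p}(u)/d_u\le 1/\vol(C)$ at the only vertices with nonzero crossing weight (boundary vertices, as you observed), and \eqref{eq:cut-between-graph-hypergraph} collapses $\sum_{u\in C}\sum_{v\in\ol{C}}w_{\bm{p}_{w_0}}(uv)$ to at most $\mathrm{cut}(C)$, yielding the claimed $\phi_H(C)/2\alpha$. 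Your observation about the case $c_e^w=w$ (taming the seed's own mass via \eqref{eq:assumption}, since Proposition~\ref{prop:bounds-of-values-of-PPR} gives no upper bound there) is correct and is indeed a point any complete proof must handle, but it does not substitute for the missing seed-uniform majorant of the crossing weights, which is the step your write-up leaves open.
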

The assumption~\eqref{eq:assumption} means that the expected PPR of a boundary vertex $v \in C\!\setminus\!C^\circ$ is at most its probability mass in the distribution $\bm{\pi}_C$.
We note that~\eqref{eq:assumption} always holds as an equality when $H$ is a graph.
We discuss two sufficient conditions of this assumption in Lemmas~\ref{lem:criterion-1} and~\ref{lem:criterion-2} in Section~\ref{subsec:sufficient-conditions}.
These conditions show that the assumption of Lemma~\ref{lem:bound-by-conductance} is quite mild. 
\begin{proof}[Proof of Lemma~\ref{lem:bound-by-conductance}]
We set $\bm{p}_{w} = \bm{\pr}_\alpha(\bm{\chi}_w)$ and let $w_0 \in C$ be an arbitrary vertex that maximizes $\sum_{v \in \ol{C}} w_{\bm{p}_w}(uv)$.
Then, by Lemma~\ref{lem:ppr-cuts-inequality}, we have
\begin{align}
  \left(A_H D_H^{-1} - I\right)(\bm{p}_w) (\ol{C})\leq \sum_{u \in C} \sum_{v \in\ol{C}} \frac{w_{\bm{p}_{w_0}}(uv)}{d_u}\bm{p}_w(u).
\label{eq:ppr-cuts-inequality}
\end{align}
By taking the average of the inequality~\eqref{eq:ppr-cuts-inequality} using the distribution $\bm{\pi}_C$,
we obtain the following inequality:
\begin{align}
 & \sum_{w\in C}\bm{\pi}_C(w) \left(A_H D_H^{-1} - I\right)(\bm{p}_w) (\ol{C}) \nonumber \\
 & \leq \sum_{u \in C} \sum_{v \in\ol{C}} \frac{w_{\bm{p}_{w_0}}(uv)}{d_u}
 \left( \sum_{w \in C}\bm{\pi}_C(w)
 \bm{p}_w\right) (u).\label{eq:sum-inequality}
\end{align}
Now, we have
\begin{align*}
 & \sum_{w\in C} \bm{\pi}_C(w)\bm{p}_w(\ol{C})
 =  \sum_{w\in C} \bm{\pi}_C(w)\left(\frac{1-\alpha}{2\alpha}
 \left(A_H D_H^{-1} - I\right)\bm{p}_w(\ol{C})\right) \tag{by~\eqref{eq:pagerank-equality-C-bar}}\\
 &\leq \frac{1-\alpha}{2\alpha}
 \sum_{u \in C} \sum_{v \in\ol{C}} \frac{w_{\bm{p}_{w_0}}(uv)}{d_u}
 \left( \sum_{w \in C}\bm{\pi}_C(w) \bm{p}_w\right) (u) \tag{by~\eqref{eq:sum-inequality} }\\
 & \leq \frac{1-\alpha}{2\alpha}
 \sum_{u \in C} \sum_{v \in\ol{C}} \frac{w_{\bm{p}_{w_0}}(uv)}{d_u} \frac{d_u}{\vol(C)} \tag{by the assumption} \\
 & \leq \frac{1-\alpha}{2\alpha} \phi_H(C) \leq \frac{1}{2\alpha} \phi_H(C). \tag{by~\eqref{eq:cut-between-graph-hypergraph}} 
\end{align*}
\end{proof}


\subsection{Proof of Theorem~\ref{thm:global-clustering}}\label{subsec:theorem-global}
For a subset $C \subseteq V$ and $\alpha \in (0,1]$, we define a subset
$C_\alpha \subseteq C$ as
\[
 C_\alpha = \left\{ v \in C \ \left| \ \bm{\pr}_\alpha(\bm{\chi}_v)(\ol{C}) \leq \frac{\phi_H(C)}{\alpha} \right. \right\}.
\]
We here show the following from which Theorem~\ref{thm:global-clustering} easily follows.
\begin{thm}\label{thm:main}
  Let $C^* \subseteq V$ be a set with $\phi_H = \phi_H(C^*)$ and assume
  $\vol(C^*) \leq \vol(V)/2$.
  Suppose that $\alpha \leq 10 \phi_H \leq (1+\epsilon)\alpha$ and
  the condition~\eqref{eq:assumption} hold.
  Then, for any $v\in {(C^*)}_\alpha$,
  we have
  \[
    \phi_H^{1/2}(\bm{\pr}_\alpha(\bm{\chi}_v)) < \frac{20}{\sqrt{4-\epsilon}}\sqrt{3\phi_H \log\left( \frac{40}{4-\epsilon}  \right)}.
  \]
\end{thm}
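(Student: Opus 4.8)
The plan is to mirror the argument of Lemma~\ref{lem:main-local}, replacing the local leak bound of Lemma~\ref{lem:local-bound-of-PR} with the membership condition that defines $(C^*)_\alpha$, and then invoking the key lemma (Lemma~\ref{lem:upperbound-of-sweep-conductance-local-2}) with $\mu = 1/2$ and $S = C^*$. Fix $v \in (C^*)_\alpha$ and write $C = C^*$ and $\bm{p} = \bm{\pr}_\alpha(\bm{\chi}_v)$. The only inputs the key lemma requires are the volume constraint $\vol(C)/\vol(V) \leq 1/2$, which is exactly the hypothesis $\vol(C^*) \leq \vol(V)/2$, and a lower bound on the excess mass $\bm{p}(C) - \bm{\pi}_V(C)$. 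So the whole proof reduces to producing a good value of $\delta$.

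First I would bound the leak $\bm{p}(\ol{C})$. By the definition of $(C^*)_\alpha$ we have $\bm{p}(\ol{C}) \leq \phi_H(C)/\alpha = \phi_H/\alpha$, and the coupling $10\phi_H \leq (1+\epsilon)\alpha$ gives $\phi_H/\alpha \leq (1+\epsilon)/10$. Since $\bm{p}$ is a distribution by Theorem~\ref{thm:existence-of-PR}, it follows that $\bm{p}(C) = 1 - \bm{p}(\ol{C}) \geq 1 - (1+\epsilon)/10$. Combining this with $\bm{\pi}_V(C) = \vol(C)/\vol(V) \leq 1/2$, I obtain
\[
\bm{p}(C) - \bm{\pi}_V(C) \;\geq\; 1 - \frac{1+\epsilon}{10} - \frac{1}{2} \;=\; \frac{4-\epsilon}{10} \;=:\; \delta .
\]
Note that this value of $\delta$ makes the log argument $4/\delta = 40/(4-\epsilon)$ coincide with the one in the claimed bound, which is a useful consistency check on the choice of the constant $10$ in the hypothesis.

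Next I would apply Lemma~\ref{lem:upperbound-of-sweep-conductance-local-2} with $\bm{s} = \bm{\chi}_v$, $S = C^*$, $\mu = 1/2$, and $\delta = (4-\epsilon)/10$; before doing so one must verify the admissibility condition $\delta \geq 4/\sqrt{\vol(V)}$, which holds whenever $\vol(V) \geq (40/(4-\epsilon))^2$, i.e.\ for all but trivially small hypergraphs. The lemma then yields $\phi_H^{1/2}(\bm{p}) < \sqrt{24\alpha\log(4/\delta)/\delta}$, and since the right-hand side is increasing in $\alpha$ I may substitute the upper bound $\alpha \leq 10\phi_H$ together with $1/\delta = 10/(4-\epsilon)$ and $4/\delta = 40/(4-\epsilon)$ to reach the stated expression. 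I expect the main obstacle to be purely the bookkeeping of constants under the square root: tracking how $24$, the factor $10$ from $\alpha \leq 10\phi_H$, and $1/\delta$ combine to match the displayed coefficient, and confirming that the inequality $\alpha \leq 10\phi_H$ is used in the correct (weakening) direction. No new structural idea is needed beyond the key lemma; the genuinely new ingredient relative to the local case is that membership in $(C^*)_\alpha$ supplies the leak bound, and this is what ultimately lets the global algorithm succeed, since Lemma~\ref{lem:bound-by-conductance} together with assumption~\eqref{eq:assumption} guarantees (via Markov's inequality on the $\bm{\pi}_C$-average) that a constant $\bm{\pi}_C$-fraction of vertices lies in $(C^*)_\alpha$, so iterating \Call{LocalClustering}{} over all seeds is sure to hit one.
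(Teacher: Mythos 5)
Your proposal is correct and follows the paper's proof essentially step for step: the definition of $(C^*)_\alpha$ together with $10\phi_H \le (1+\epsilon)\alpha$ bounds the leak $\bm{\pr}_\alpha(\bm{\chi}_v)(\overline{C^*})$ by $(1+\epsilon)/10$, the hypothesis $\vol(C^*)\le \vol(V)/2$ then yields the excess mass $\delta = (4-\epsilon)/10$, and Lemma~\ref{lem:upperbound-of-sweep-conductance-local-2} with $\mu = 1/2$ and $\alpha \le 10\phi_H$ concludes, exactly as in the paper (which, unlike you, does not even bother to address the proviso $\delta \ge 4/\sqrt{\vol(V)}$, though the scaling argument used to prove that lemma renders it harmless). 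The one caveat is the constant bookkeeping you deferred: direct substitution gives $24\cdot 10\cdot \tfrac{10}{4-\epsilon} = \tfrac{20^2\cdot 6}{4-\epsilon}$, i.e.\ the bound $\tfrac{20}{\sqrt{4-\epsilon}}\sqrt{6\phi_H\log\bigl(\tfrac{40}{4-\epsilon}\bigr)}$, a factor $\sqrt{2}$ weaker than the stated $\sqrt{3\phi_H\log\bigl(\tfrac{40}{4-\epsilon}\bigr)}$ --- but this discrepancy is equally present in the paper's own proof of Theorem~\ref{thm:main}, so it is a defect in the theorem's displayed constant rather than a gap in your argument.
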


Theorem~\ref{thm:main} follows from Lemma~\ref{lem:upperbound-of-sweep-conductance-local-2} for $\mu = 1/2$ and the following lemma. The proof of Lemma~\ref{lem:lowerbound-of-pagerank} is 
similar to the proof of Theorem~5.1 in \cite{Andersen:2007tsa} 
 with Lemma~10. The detail is in 
Appendix~\ref{sec:proof-lowerbound-of-pagerank}.  
\begin{lem}\label{lem:lowerbound-of-pagerank}
  Suppose a set  $C \subseteq V$ and $\alpha > 0$ satisfy the condition~\eqref{eq:assumption}.
  Then, $\vol(C_\alpha) \geq \vol(C)/2$ holds.
\end{lem}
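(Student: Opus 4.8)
The plan is to prove this via a one-line Markov (averaging) argument whose essential content is already packaged in Lemma~\ref{lem:bound-by-conductance}. The key observation is that $C_\alpha$ is defined exactly as the set of vertices $v \in C$ whose leak $\bm{\pr}_\alpha(\bm{\chi}_v)(\ol{C})$ is small (at most $\phi_H(C)/\alpha$), while Lemma~\ref{lem:bound-by-conductance} controls the \emph{average} leak, taken with respect to the distribution $\bm{\pi}_C$, by the smaller quantity $\phi_H(C)/(2\alpha)$. Since the per-vertex leak is nonnegative, a set on which it exceeds twice the average can carry at most half the total $\bm{\pi}_C$-mass, and this mass is proportional to volume.

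Concretely, I would proceed as follows. First, note that the hypothesis of the lemma is precisely the condition~\eqref{eq:assumption}, so Lemma~\ref{lem:bound-by-conductance} applies and yields
\[
  \sum_{w \in C} \bm{\pi}_C(w)\, \bm{\pr}_\alpha(\bm{\chi}_w)(\ol{C}) \leq \frac{\phi_H(C)}{2\alpha}.
\]
Writing $f(w) := \bm{\pr}_\alpha(\bm{\chi}_w)(\ol{C}) \geq 0$, the complement set $C \setminus C_\alpha = \{w \in C : f(w) > \phi_H(C)/\alpha\}$ consists of vertices with large leak. Restricting the sum to this set and using $f(w) > \phi_H(C)/\alpha$ there gives
\[
  \frac{\phi_H(C)}{\alpha}\, \bm{\pi}_C(C \setminus C_\alpha)
  < \sum_{w \in C \setminus C_\alpha} \bm{\pi}_C(w)\, f(w)
  \leq \sum_{w \in C} \bm{\pi}_C(w)\, f(w)
  \leq \frac{\phi_H(C)}{2\alpha},
\]
so that $\bm{\pi}_C(C \setminus C_\alpha) \leq 1/2$.

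Finally, I would translate this probability mass statement into a volume statement using the explicit form $\bm{\pi}_C(w) = d_w/\vol(C)$, which gives $\bm{\pi}_C(C \setminus C_\alpha) = \vol(C \setminus C_\alpha)/\vol(C) = (\vol(C) - \vol(C_\alpha))/\vol(C)$; combined with the bound above this rearranges to $\vol(C_\alpha) \geq \vol(C)/2$, as claimed. There is essentially no serious obstacle here, since Lemma~\ref{lem:bound-by-conductance} carries the analytic weight; the only points requiring care are ensuring the hypothesis~\eqref{eq:assumption} is exactly what is needed to invoke that lemma, and correctly handling the strict versus non-strict inequality in the definition of $C_\alpha$ so that the Markov step is applied to the right set.
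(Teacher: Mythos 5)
Your proof is correct and follows essentially the same route as the paper: the paper also invokes Lemma~\ref{lem:bound-by-conductance} to bound the expected leak $\mathrm{E}_{\bm{\pi}_C}[\bm{\pr}_\alpha(\bm{\chi}_v)(\ol{C})]$ by $\phi_H(C)/(2\alpha)$ and then applies Markov's inequality, which is exactly the averaging step you wrote out explicitly. The only difference is presentational (your unpacked sum versus the paper's probabilistic phrasing), plus your careful handling of the strict-inequality edge case, which the paper glosses over.
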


\begin{proof}[Proof of Theorem~\ref{thm:main}]
By instantiating Lemma~\ref{lem:lowerbound-of-pagerank} with $C = C^*$ and $\alpha \leq 10\phi_H \leq (1+\epsilon)\alpha$,
we have
\[
 \bm{\pr}_\alpha(\bm{\chi}_v)(C^*) \geq 1- \frac{1+\epsilon}{10} =
 \frac{9-\epsilon}{10}
\]
for $v \in (C^*)_\alpha$. Because
$\vol(C^*)\leq \vol(V)/2$,
\[
 \bm{\pr}_\alpha(\bm{\chi}_v)(C^*) - \bm{\pi}(C^*) \geq
 \frac{9-\epsilon}{10} - \frac{1}{2} = \frac{4-\epsilon}{10}.
\]
Instantiating
Lemma~\ref{lem:upperbound-of-sweep-conductance-local-2} with
$\mu = 1/2$, $\delta = (4-\epsilon)/10$, and $\alpha \leq 10\phi_H$, we obtain 
\begin{align*}
 \phi_H^{1/2}(\bm{\pr}_\alpha(\bm{\chi}_v))
 <  \frac{20}{\sqrt{4-\epsilon}}\sqrt{ 3\phi_H \log\left( 
 \frac{40}{4-\epsilon}\right)}.  &\qedhere
\end{align*}
\end{proof}
\begin{proof}[Proof of Theorem~\ref{thm:global-clustering}]
  The bound on conductance follows from Theorem~\ref{thm:main}.
  The analysis on the time complexity is trivial.
\end{proof}

\subsection{Sufficient conditions}\label{subsec:sufficient-conditions}

Here we discuss two useful sufficient conditions of the assumption~\eqref{eq:assumption} in Lemma~\ref{lem:bound-by-conductance}. 
We give the proofs in Appendix~\ref{sec:proof-sufficient-conditions}. 
\begin{lem}\label{lem:criterion-1}
Let $\alpha \in (0,1]$ and $C \subseteq V$ be a vertex set with $\vol(C) \leq \vol(V)/2$.
If $\bm{\pr}_\alpha(\bm{\chi}_v)(v) \leq 1/2$ holds for every $v \in V$, then the condition~\eqref{eq:assumption} holds for any $v\in C\!\setminus\!C^\circ$.
\end{lem}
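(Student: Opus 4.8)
The plan is to verify the inequality in~\eqref{eq:assumption} directly by expanding the weighted average over the seeds $w \in C$ and bounding the diagonal term ($w = v$) and the off-diagonal terms ($w \neq v$) separately. Writing the left-hand side as $\sum_{w \in C}\bm{\pi}_C(w)\,\bm{\pr}_\alpha(\bm{\chi}_w)(v)$, I would isolate the single term with $w = v$ and collect the rest. The two ingredients I expect to use are the hypothesis $\bm{\pr}_\alpha(\bm{\chi}_v)(v) \leq 1/2$ to control the diagonal term, and Proposition~\ref{prop:bounds-of-values-of-PPR}, which gives $\bm{\pr}_\alpha(\bm{\chi}_w)(v) \leq \bm{\pi}_V(v) = d_v/\vol(V)$ whenever $w \neq v$, to control every off-diagonal term uniformly.

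Concretely, I would bound the diagonal contribution by $\tfrac{1}{2}\bm{\pi}_C(v)$ and each off-diagonal contribution by $\bm{\pi}_C(w)\cdot d_v/\vol(V)$. Summing the latter over $w \in C \setminus \{v\}$ and using $\sum_{w \in C}\bm{\pi}_C(w) = 1$ gives a factor of at most $1$ in front of $d_v/\vol(V)$. Substituting $\bm{\pi}_C(v) = d_v/\vol(C)$ then yields
\begin{align*}
\Bigl(\sum_{w \in C}\bm{\pi}_C(w)\,\bm{\pr}_\alpha(\bm{\chi}_w)\Bigr)(v)
\leq \frac{1}{2}\cdot\frac{d_v}{\vol(C)} + \frac{d_v}{\vol(V)}.
\end{align*}
To finish I would invoke the hypothesis $\vol(C) \leq \vol(V)/2$, which is exactly what makes $d_v/\vol(V) \leq d_v/(2\vol(C))$, so the right-hand side is at most $d_v/\vol(C) = \bm{\pi}_C(v)$, establishing~\eqref{eq:assumption}.

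The computation is short, so the main obstacle is conceptual rather than technical: one must recognize that the off-diagonal PPR entries are uniformly dominated by $\bm{\pi}_V(v)$ (which is where the connectivity assumption behind Proposition~\ref{prop:bounds-of-values-of-PPR} enters), and that the diagonal hypothesis together with the volume bound combine to give \emph{exactly} the factor needed, with no slack to spare. I note that the argument never uses $v \in C \setminus C^\circ$; it in fact holds for every $v \in C$, so in particular for the boundary vertices required by the statement. The tightness of the final $\vol(C) \leq \vol(V)/2$ step also explains why this volume hypothesis cannot be dropped.
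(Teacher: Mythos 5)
Your proof is correct and follows essentially the same route as the paper's: split the average over seeds into the diagonal term $w=v$ and the off-diagonal terms, bound the latter by $\bm{\pi}_V(v)$ via Proposition~\ref{prop:bounds-of-values-of-PPR}, and combine the hypothesis $\bm{\pr}_\alpha(\bm{\chi}_v)(v)\leq 1/2$ with $\vol(C)\leq\vol(V)/2$ to absorb everything into $\bm{\pi}_C(v)$. The only difference is cosmetic: the paper folds the two hypotheses into the single preparatory inequality $\bm{p}_v(v)\leq 1-\vol(C)/\vol(V)$ before summing, whereas you apply them at the end, which is the same algebra in a different order.
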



\begin{lem}\label{lem:criterion-2}
Let $d_{\max} = \max_{v\in V}\{d_v \}$. 
If $\alpha \in (0,1]$ satisfies
\[
 \alpha \leq \left(\frac{1}{2} - \frac{d_{\max}}{\vol(V)}\right)
 {\left(1 - \frac{d_{\max}}{\vol(V)}\right)}^{-1},
\]
then the assumption~\eqref{eq:assumption} holds for any $v\in C\!\setminus\! C^\circ$.
\end{lem}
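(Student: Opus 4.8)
The plan is to verify~\eqref{eq:assumption} pointwise for each boundary vertex $v\in C\setminus C^\circ$ by exploiting the near-reversibility of the PageRank operator. Writing $\bm{p}_w=\bm{\pr}_\alpha(\bm{\chi}_w)$ and $\bm{q}=\sum_{w\in C}\bm{\pi}_C(w)\bm{p}_w$, the target is $\bm{q}(v)\le\bm{\pi}_C(v)$. My guiding principle is the graph case: there each $\bm{p}_w$ solves a \emph{linear} fixed point on one common graph, the operator is reversible in the sense $d_v\bm{p}_w(v)=d_w\bm{p}_v(w)$, and hence $\bm{q}(v)=\bm{\pi}_C(v)\,\bm{p}_v(C)\le\bm{\pi}_C(v)$; this is exactly why~\eqref{eq:assumption} is automatic for graphs. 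For a genuine hypergraph the PageRank is nonlinear and each $\bm{p}_w$ lives on its own graph $H_{\bm{p}_w}$, so reversibility is lost, and the role of the hypothesis on $\alpha$ is to bound the resulting defect.

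First I would reduce~\eqref{eq:assumption} to a single-step estimate. Evaluating each fixed point $\bm{p}_w=\alpha\bm{\chi}_w+(1-\alpha)W_{H_{\bm{p}_w}}\bm{p}_w$ at the coordinate $v$, averaging with the weights $\bm{\pi}_C(w)$, and noting that only the term $w=v$ survives in the restart part, one obtains
\[
\bm{q}(v)=\frac{2\alpha\,\bm{\pi}_C(v)+(1-\alpha)B}{1+\alpha},\qquad
B=\sum_{w\in C}\bm{\pi}_C(w)\,(A_{H_{\bm{p}_w}}D_H^{-1}\bm{p}_w)(v).
\]
For $\alpha<1$ the desired bound $\bm{q}(v)\le\bm{\pi}_C(v)$ is then equivalent to the clean one-step inequality $B\le\bm{\pi}_C(v)$, which eliminates the restart term and isolates the purely diffusive contribution flowing into $v$.

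Next I would bound $B$ via Proposition~\ref{prop:bounds-of-values-of-PPR}. Splitting each inner sum $\sum_u \frac{w_{\bm{p}_w}(vu)}{d_u}\bm{p}_w(u)$ at the seed coordinate $u=w$, the off-seed part is controlled by sub-stationarity $\bm{p}_w(u)\le d_u/\vol(V)$ for $u\neq w$, which with $\sum_u w_{\bm{p}_w}(vu)\le d_v$ contributes at most $d_v/\vol(V)$ in total. The remaining seed part $\frac{1}{\vol(C)}\sum_{w\in C}w_{\bm{p}_w}(vw)\bm{p}_w(w)$ is the reversibility defect. Using $\vol(C)\le\vol(V)/2$, so that $\bm{\pi}_C(v)=d_v/\vol(C)\ge 2d_v/\vol(V)$, leaves a budget of roughly $d_v/(2\vol(C))$ for this term, and bounding the seed self-mass through the maximum degree is where the condition $\alpha\le(\tfrac12-\tfrac{d_{\max}}{\vol(V)})(1-\tfrac{d_{\max}}{\vol(V)})^{-1}$, i.e.\ $\alpha+(1-\alpha)\tfrac{d_{\max}}{\vol(V)}\le\tfrac12$, must be used.

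The hard part will be this diffusive seed term: unlike the graph case, each seed $w$ carries a large self-mass $\bm{p}_w(w)$, bounded above only by $1$ and in fact bounded below by $2\alpha/(1+\alpha)$, while the weights $w_{\bm{p}_w}(vw)$ sit on distinct graphs $H_{\bm{p}_w}$, so there is no exact cancellation. Controlling $\sum_{w\in C}w_{\bm{p}_w}(vw)\bm{p}_w(w)$ amounts to quantifying how far the hypergraph PageRank departs from reversibility, and it is precisely this estimate—combining $d_{\max}/\vol(V)$ with the volume constraint $\vol(C)\le\vol(V)/2$—that yields the stated threshold and explains why a large $d_{\max}$ forces $\alpha$ small. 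An appealing but insufficient shortcut would be to invoke Lemma~\ref{lem:criterion-1}: under this threshold one may still have $\bm{\pr}_\alpha(\bm{\chi}_v)(v)>1/2$ (the lower bound $2\alpha/(1+\alpha)$ already exceeds $1/2$ once $\alpha>1/3$), so its hypothesis need not hold and the defect must be bounded directly.
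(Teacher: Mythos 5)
Your proposal is not a complete proof: the step you yourself flag as ``the hard part'' is the entire content of the lemma, and nothing in your plan closes it. The reduction of~\eqref{eq:assumption} to the one-step inequality $B\le\bm{\pi}_C(v)$ is correct, and so is the off-seed bound of $d_v/\vol(V)$ via Proposition~\ref{prop:bounds-of-values-of-PPR}; but the seed term $\frac{1}{\vol(C)}\sum_{w\in C}w_{\bm{p}_w}(vw)\,\bm{p}_w(w)$ is never estimated. The obvious bounds go the wrong way: $\bm{p}_w(w)$ is of constant order (indeed $\ge 2\alpha/(1+\alpha)$, as you note), while $\sum_{w\in C}w_{\bm{p}_w}(vw)$ is a sum of edge weights taken in \emph{different} graphs $H_{\bm{p}_w}$ and need not be bounded by $d_v$ --- a single hyperedge $e\ni v$ can contribute weight up to $w(e)$ for each of its $|e|-1$ other endpoints $w$, so the sum can be as large as $\sum_{e\ni v}w(e)(|e|-1)$. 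Since you provide no mechanism to beat this, the proposal is a plan rather than a proof. (A small slip besides: the graph-case reversibility identity is $d_w\bm{p}_w(v)=d_v\bm{p}_v(w)$, not $d_v\bm{p}_w(v)=d_w\bm{p}_v(w)$, though the consequence $\bm{q}(v)=\bm{\pi}_C(v)\bm{p}_v(C)$ you draw from it is the right one.)

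There is, however, a twist you should know: the ``appealing but insufficient shortcut'' you dismiss is precisely the paper's proof, and your objection to it is mathematically sound. The paper derives $\bm{\pr}_\alpha(\bm{\chi}_v)(v)\le\alpha+(1-\alpha)d_{\max}/\vol(V)$ and feeds this into Lemma~\ref{lem:criterion-1}. But your lower bound $\bm{\pr}_\alpha(\bm{\chi}_v)(v)\ge 2\alpha/(1+\alpha)$ is rigorous --- evaluate $\bm{p}+\frac{1-\alpha}{2\alpha}\bigl(I-A_{H_{\bm{p}}}D_H^{-1}\bigr)\bm{p}=\bm{\chi}_v$ at the seed and drop the nonnegative term $\bigl(A_{H_{\bm{p}}}D_H^{-1}\bm{p}\bigr)(v)$ --- and it strictly exceeds $\alpha+(1-\alpha)d_{\max}/\vol(V)$ whenever $d_{\max}/\vol(V)<\alpha/(1+\alpha)$, so the paper's intermediate inequality cannot hold in that regime. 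The source of the discrepancy is that the paper's first display uses the coefficient $(1-\alpha)$ where the lazy-walk normalization $\frac{1-\alpha}{2\alpha}\ca{L}$ forces $(1-\alpha)/2$; redone correctly (choosing $H_{\bm{p}}$ so that the seed, which maximizes $\bm{p}(u)/d_u$, carries no self-loop), one gets $\bm{\pr}_\alpha(\bm{\chi}_v)(v)\le\frac{2\alpha}{1+\alpha}+\frac{1-\alpha}{1+\alpha}\cdot\frac{d_{\max}}{\vol(V)}$, which is $\le 1/2$ only under the stronger condition $\alpha\le\bigl(1-2d_{\max}/\vol(V)\bigr)\bigl(3-2d_{\max}/\vol(V)\bigr)^{-1}$, i.e.\ roughly $\alpha\le 1/3$. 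So neither your argument nor the paper's establishes the lemma at the stated threshold; what the criterion-1 route genuinely yields is the lemma with the threshold replaced by this smaller constant, and proving (or refuting) the stated $1/2$-type threshold would require exactly the defect estimate your proposal leaves open.
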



 Lemma~\ref{lem:criterion-2} implies that, if $d_\text{max}/\vol(V) \leq 1/4$ holds, then Lemma~\ref{lem:bound-by-conductance} holds for any $\alpha \leq 1/4$, which is usually the case in practice.


\section{Experimental Evaluation}\label{sec:experiments}

In this section, we evaluate the performance of Algorithms~\ref{alg:local-clustering} and~\ref{alg:global-clustering}
in terms of both the quality of solutions and running time.

\paragraph{Dataset}
Table~\ref{tab:instance} lists real-world hypergraphs on which our experiments were conducted.
Except for DBLP~KDD, all hypergraphs were collected from KONECT\footnote{
\url{http://konect.uni-koblenz.de/}
}.
As these are originally unweighted bipartite graphs, we transformed each of them into a hypergraph (with edge weights) as follows:
We take the left-hand-side vertices in a given bipartite graph as the vertices in the corresponding hypergraph,
and then for each right-hand-side vertex, we add a hyperedge (with weight one) consisting of the neighboring vertices in the bipartite graph.
DBLP~KDD is a coauthorship hypergraph, which we compiled from the DBLP dataset\footnote{
\url{https://dblp.uni-trier.de/}
}.
Specifically, the vertices correspond to the authors of the papers in KDD 1995--2019,
while each hyperedge represents a paper, which contains the vertices corresponding to the authors of the paper.

As almost all hypergraphs generated as above were disconnected, 
we extracted the largest connected component in each hypergraph.
All information in Table~\ref{tab:instance} is about the resulting hypergraphs.

\begin{table}[t]
\centering
\caption{Real-world hypergraphs used in our experiments.}\label{tab:instance}
\scalebox{0.80}{
\begin{tabular}{lrrrr}
\toprule
Name & $n$   & $m$   &$\sum_{v\in V}d_v/n$ &$\sum_{e\in E}|e|/m$\\
\midrule
Graph products &86  &106  &3.57 &2.04\\
Network theory &330 &299 &3.06 &3.04 \\
DBLP KDD &5,590  &2,719  &1.99  &4.00\\
arXiv cond-mat &13,861  &13,571  &3.87  &3.05\\
DBpedia Writers  &54,909  &18,232  &1.80  &5.29\\
YouTube Group Memberships &88,490  &21,974  &3.24  &12.91\\
DBpedia Record Labels  &158,385  &9,827  &1.40  &22.51\\
DBpedia Genre &253,968  &4,934  &1.80  &92.84\\
\bottomrule
\end{tabular}
}
\end{table}

\paragraph{Parameter setting}
We explain the parameter setting of our algorithms. Throughout the experiments, we set $\epsilon =0.9$.
In our implementation, to make the algorithms more scalable, we try to keep the PageRank vector sparse;
that is, we round every PageRank value less than $10^{-5}$ down to zero in each iteration of the Euler method.
As a preliminary experiment, using the two largest hypergraphs,
we observed the effect of the parameters in the Euler method, $\Delta$ (i.e., time span)  and $T$ (i.e., total time), in terms of the solution quality.
Specifically, we selected 50 initial vertices uniformly at random, and for each of which,
we ran Algorithm~\ref{alg:local-clustering} with $\mu=1/2$ for each pair of parameters $(\Delta, T)\in \{0.5, 1.0, 2.0\}\times \{2, 4,\dots, 30\}$.
The results are depicted in Figure~\ref{fig:pre}.
As can be seen, the effect of $\Delta$ is drastic.
The performance with $\Delta=0.5$ is worse than that with $\Delta=1.0$, 
which implies that due to the above-mentioned sparsification of the PageRank vector, 
we have to spend some amount of time at once to obtain a non-negligible diffusion.
Moreover, the performance with $\Delta=2.0$ is not stable, meaning that we should not use an unnecessarily large value for $\Delta$.
The effect of $T$ is also significant; when $\Delta=0.5$ or 1.0, the conductance of the output solution becomes smaller as $T$ increases.
From these observations, in the following main experiments, we set $\Delta=1.0$ and $T=30$.

\begin{figure}[t]
\begin{subfloat}[DBpedia Record Labels]{
\includegraphics[width=0.227\textwidth]{./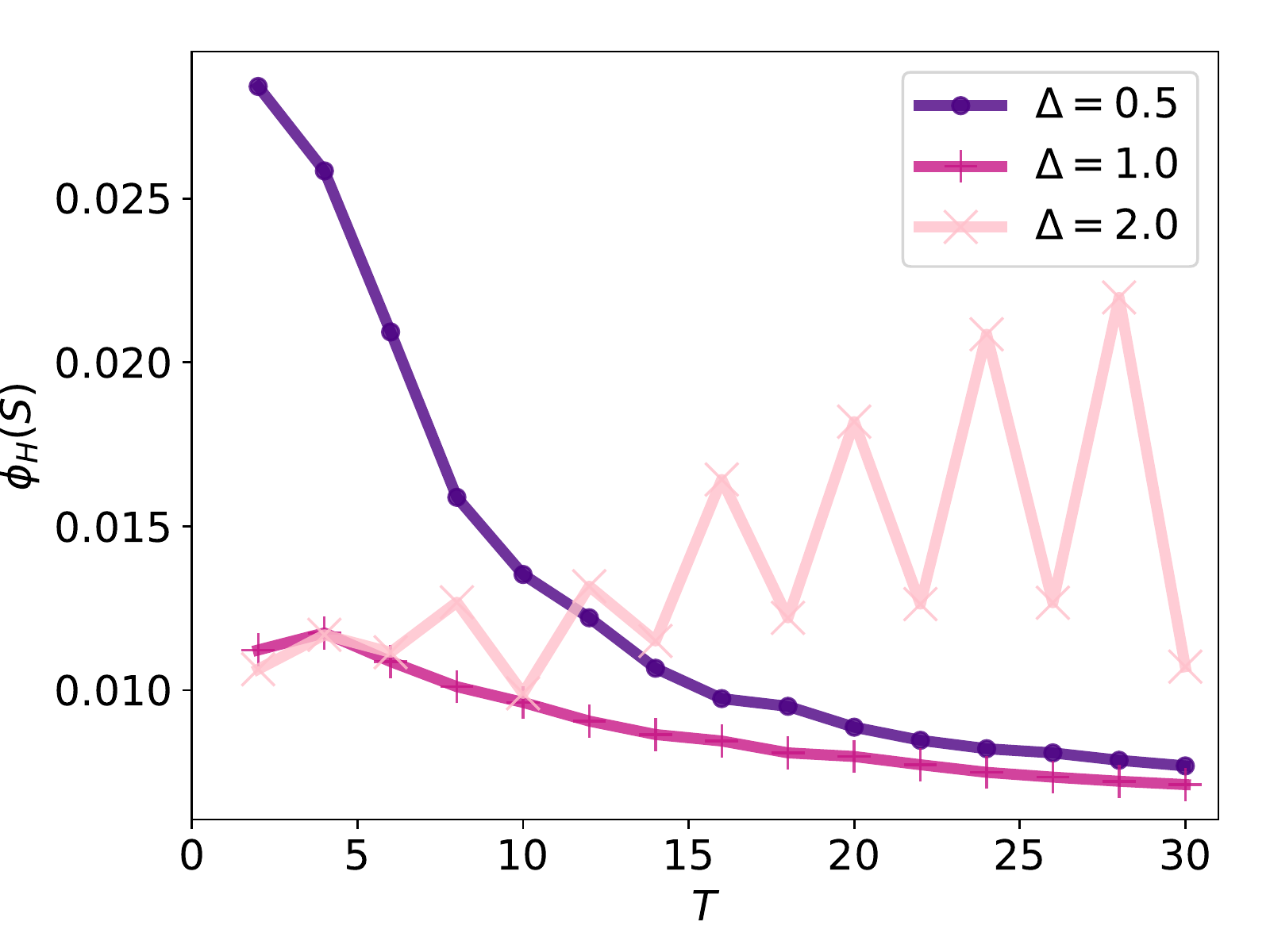}
}
\end{subfloat}
\begin{subfloat}[DBpedia Genre]{
\includegraphics[width=0.227\textwidth]{./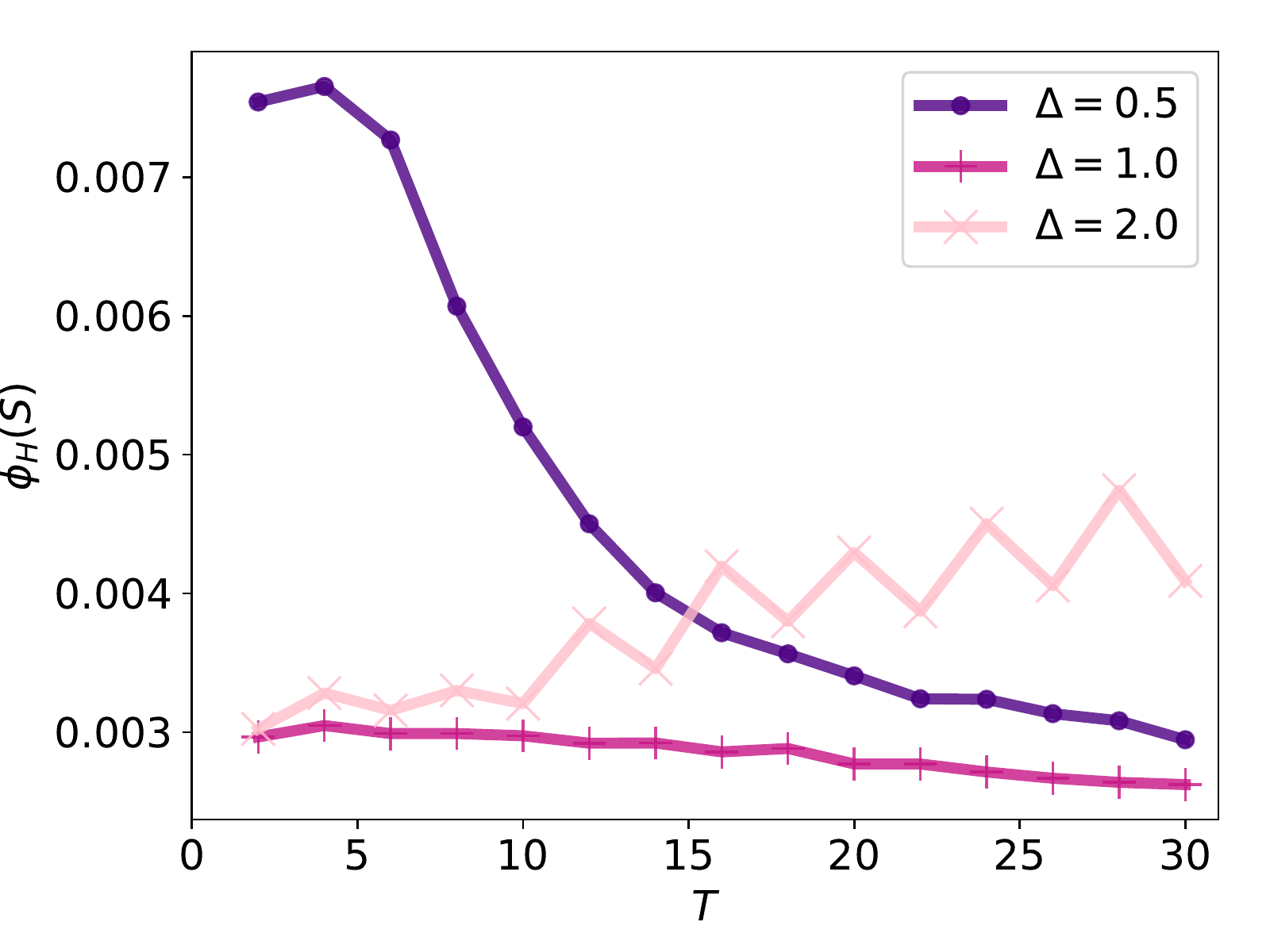}
}
\end{subfloat}
\caption{Effect of parameters $\Delta$ and $T$. Average values over 50 initial vertices are plotted.}\label{fig:pre}
\end{figure}

\begin{figure*}[t]
\begin{subfloat}[Graph products]{
\includegraphics[width=0.237\textwidth]{./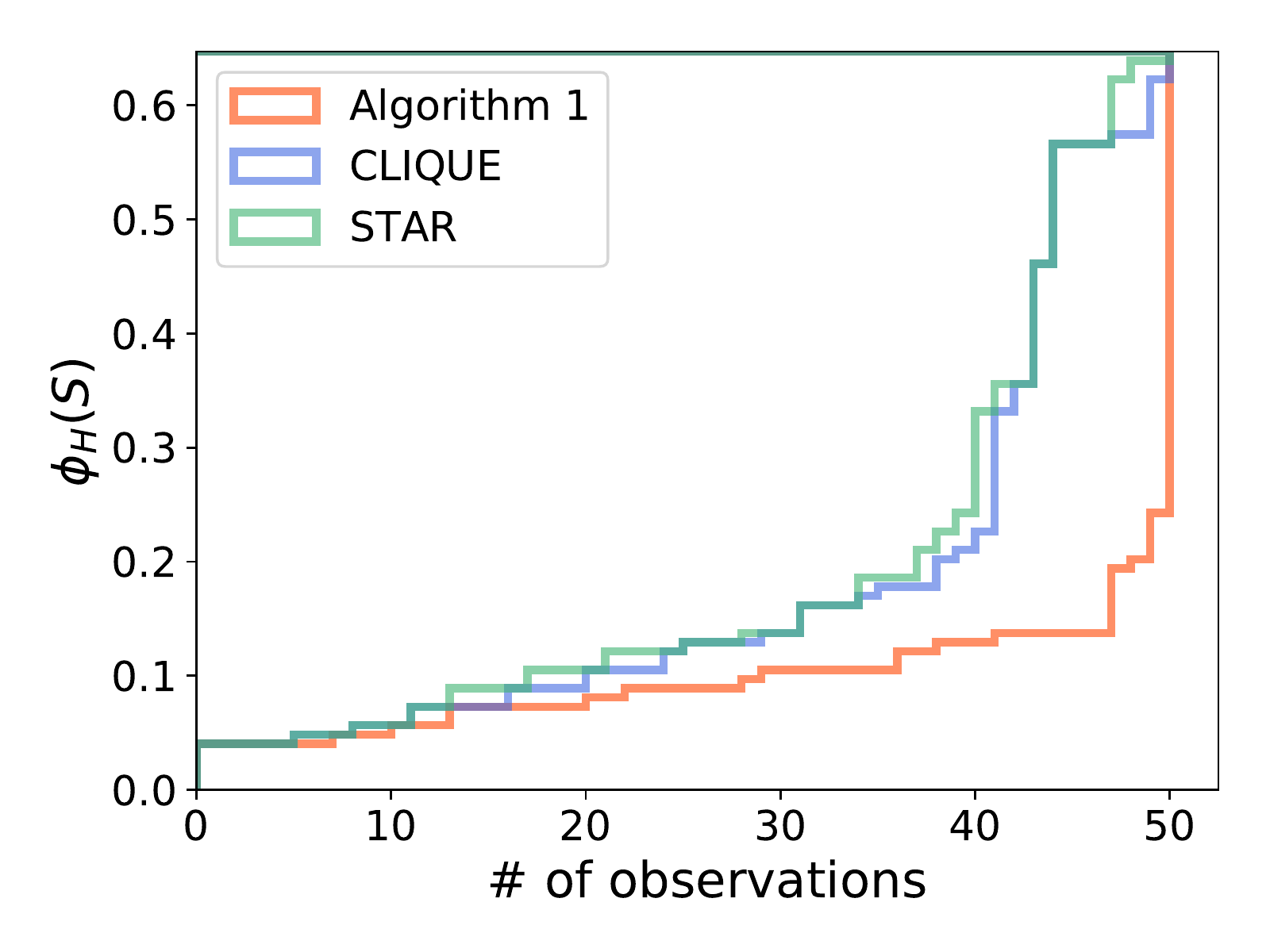}
}
\end{subfloat}
\begin{subfloat}[Network theory]{
\includegraphics[width=0.237\textwidth]{./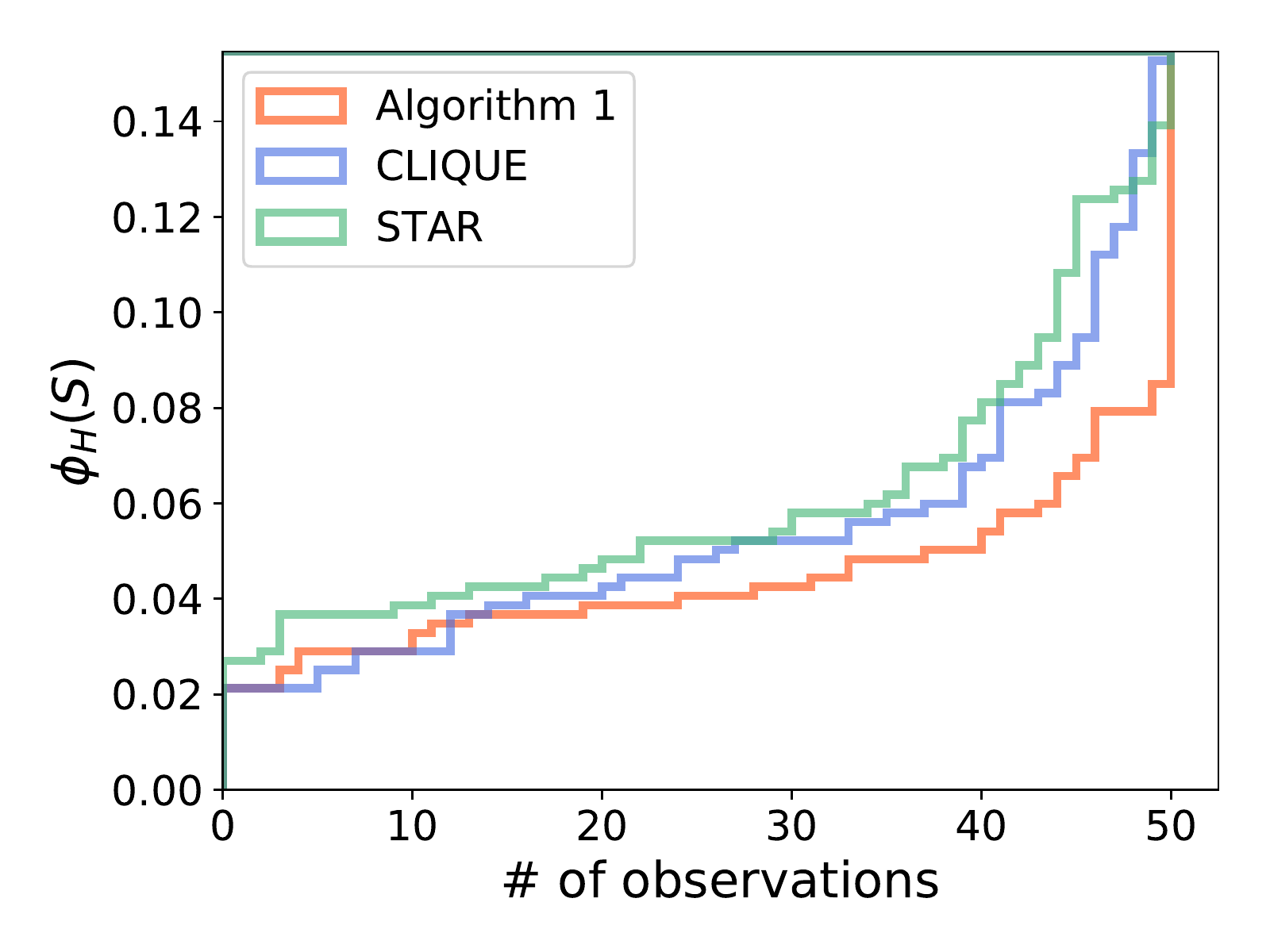}
}
\end{subfloat}
\begin{subfloat}[DBLP KDD]{
\includegraphics[width=0.237\textwidth]{./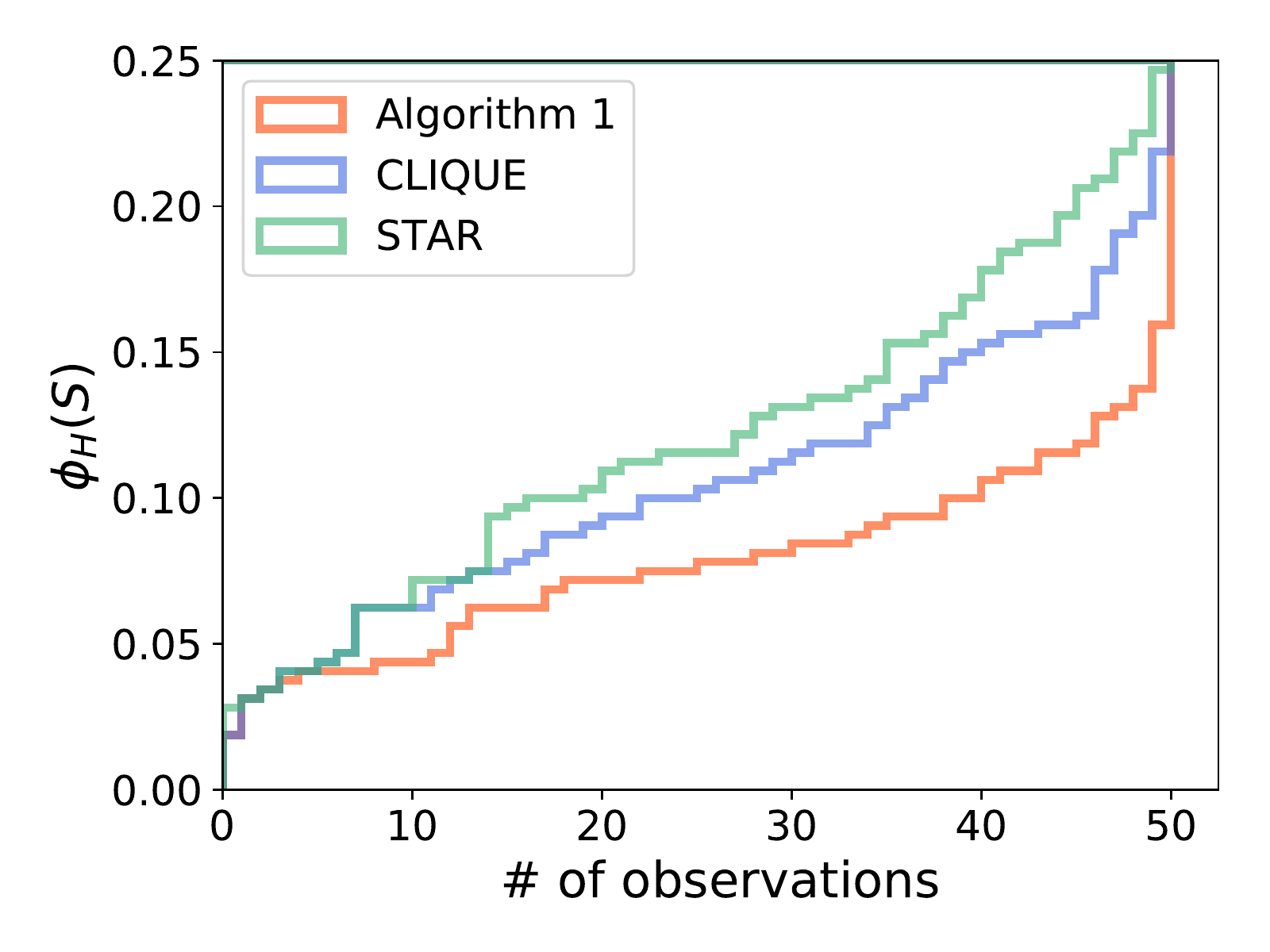}
}
\end{subfloat}
\begin{subfloat}[arXiv cond-mat]{
\includegraphics[width=0.237\textwidth]{./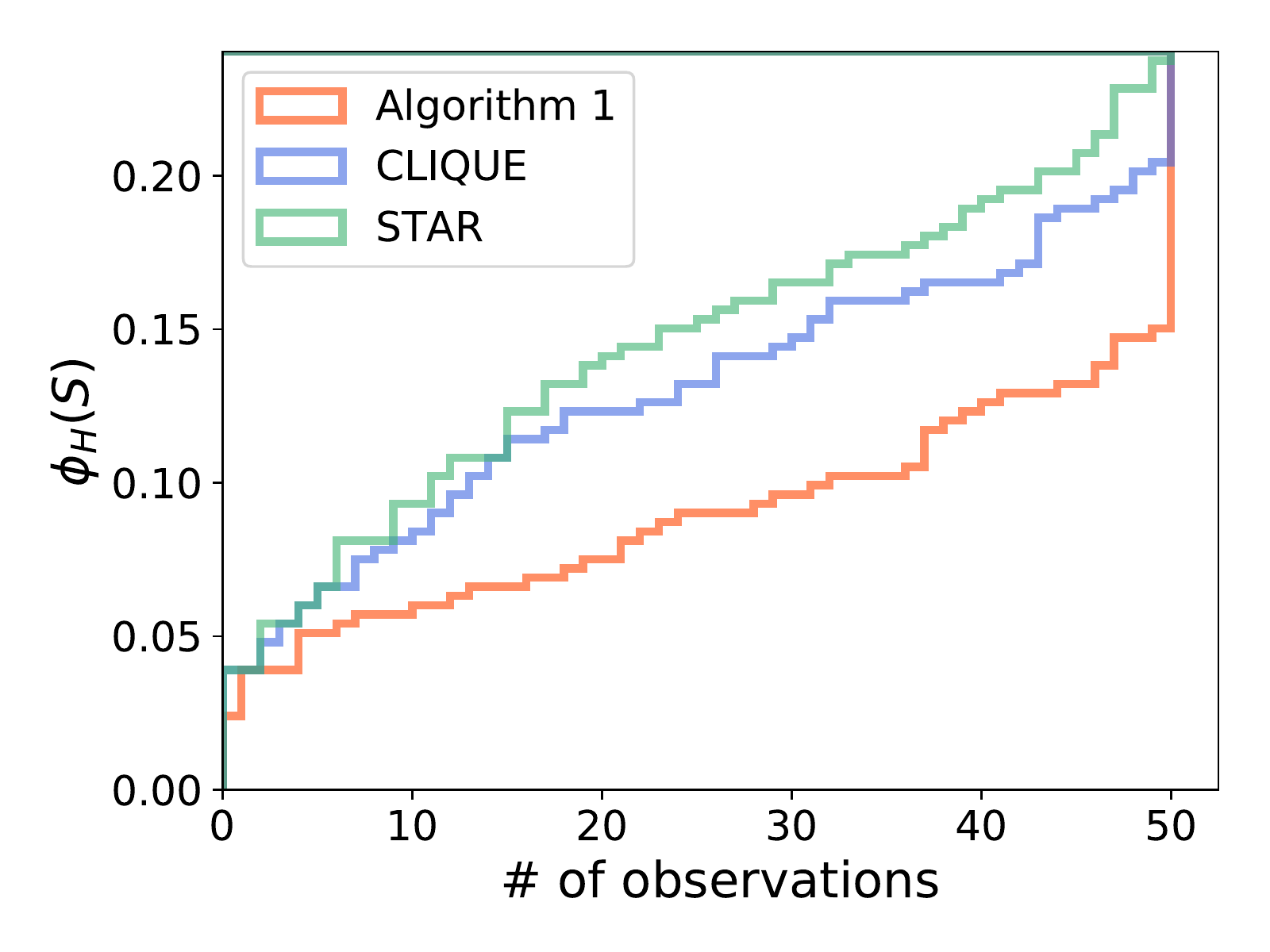}
}
\end{subfloat}\\
\begin{subfloat}[DBpedia Writers]{
\includegraphics[width=0.237\textwidth]{./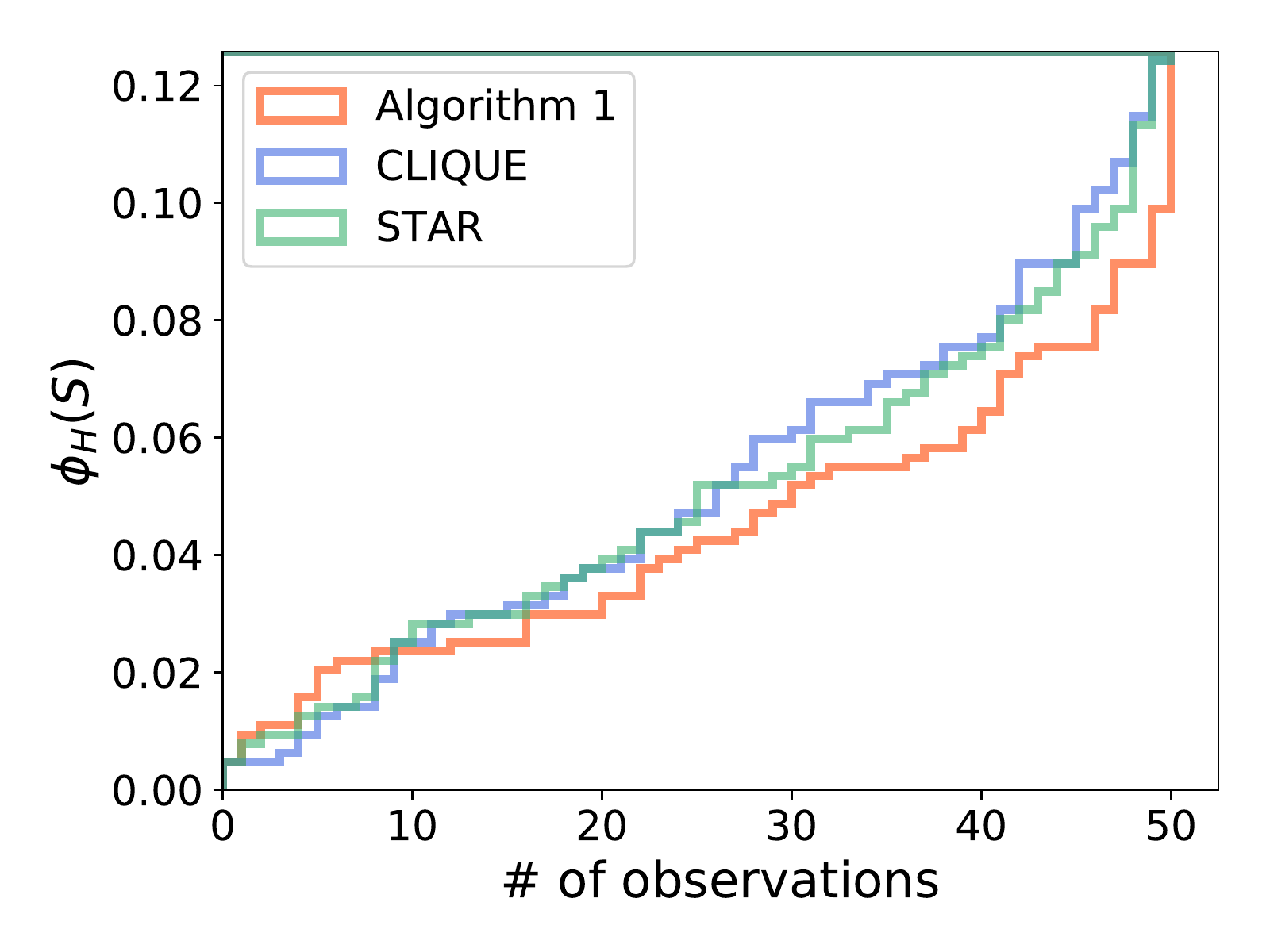}
}
\end{subfloat}
\begin{subfloat}[YouTube Group Memberships]{
\includegraphics[width=0.237\textwidth]{./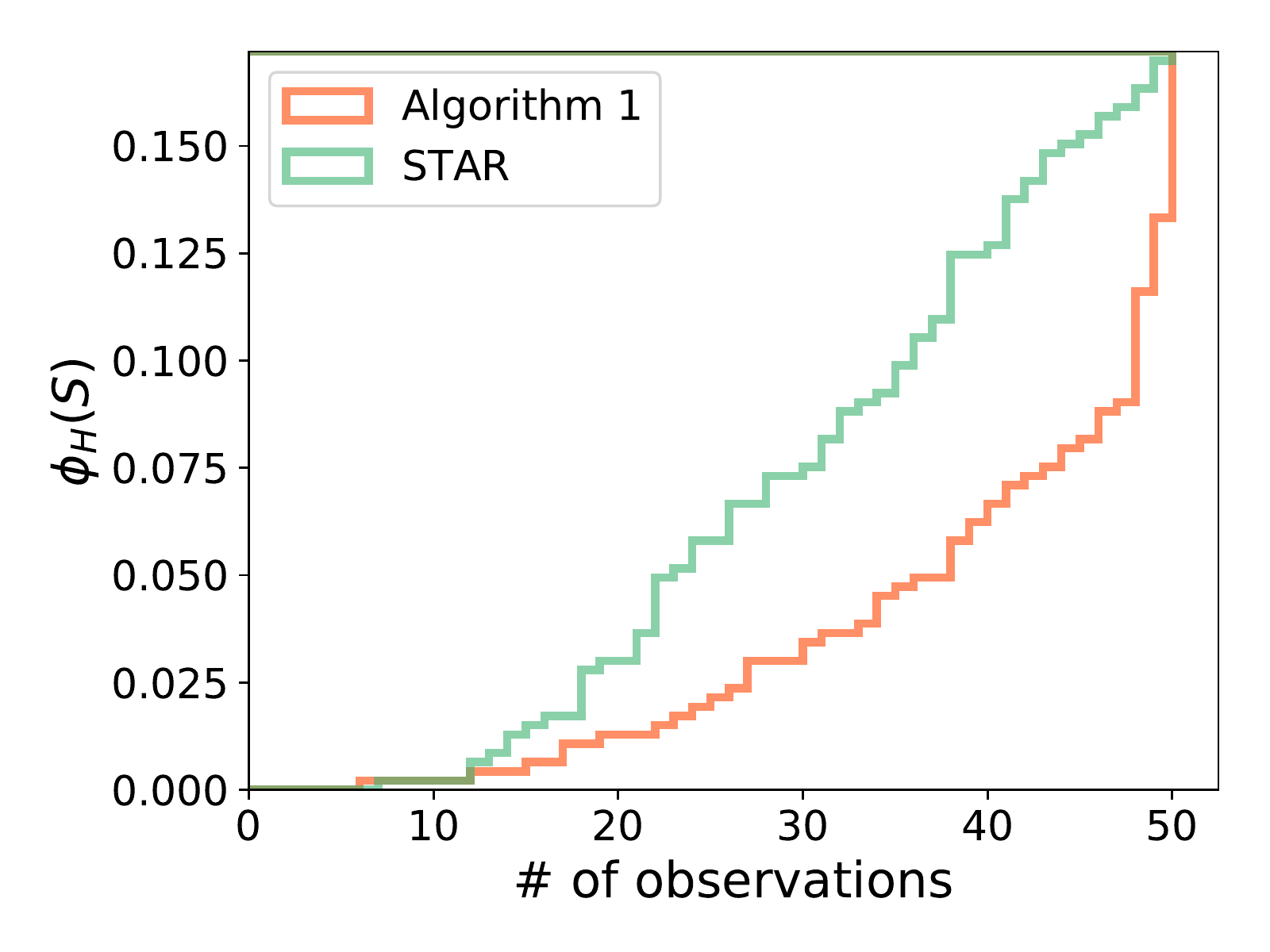}
}
\end{subfloat}
\begin{subfloat}[DBpedia Record Labels]{
\includegraphics[width=0.237\textwidth]{./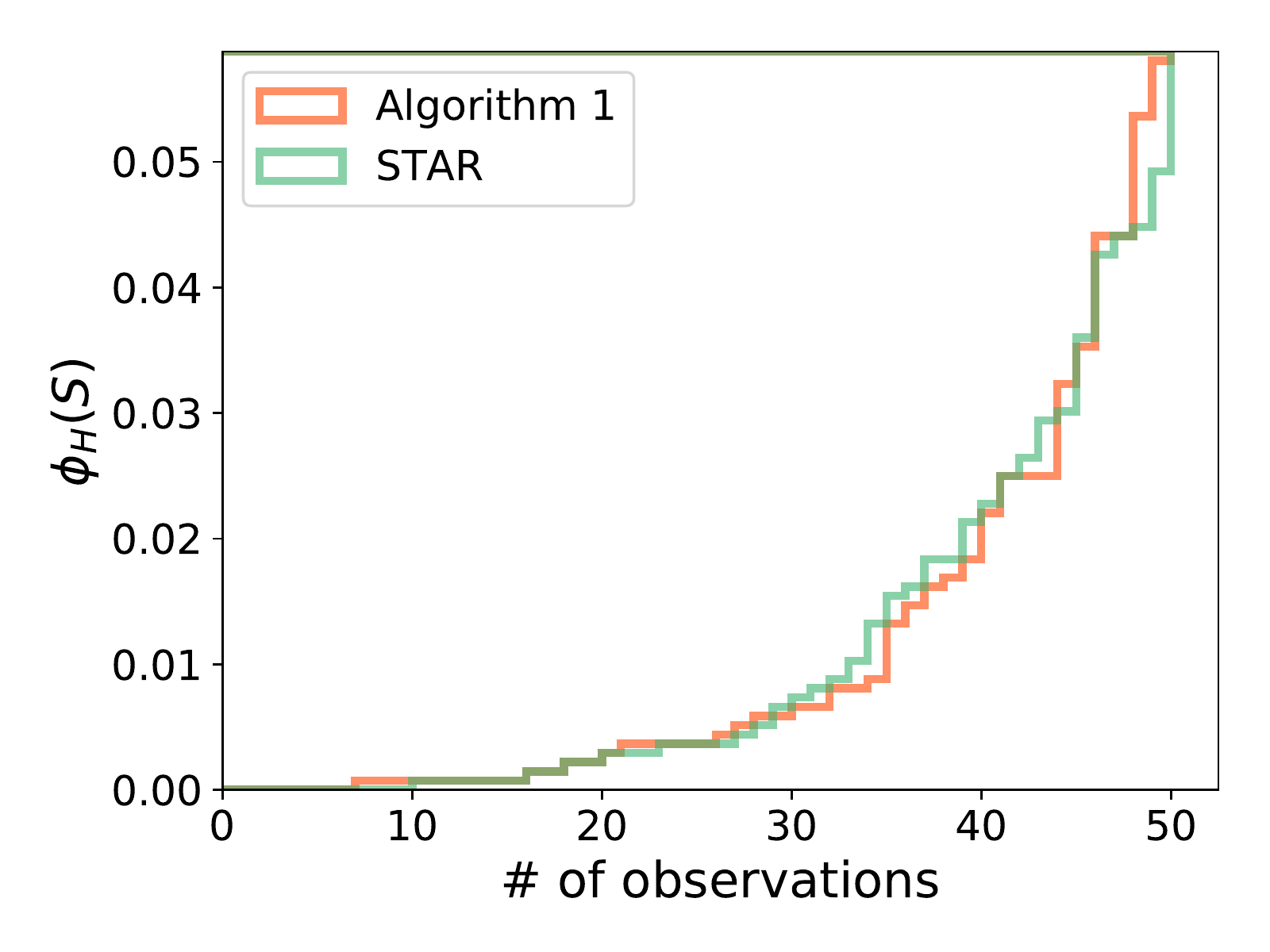}
}
\end{subfloat}
\begin{subfloat}[DBpedia Genre]{
\includegraphics[width=0.237\textwidth]{./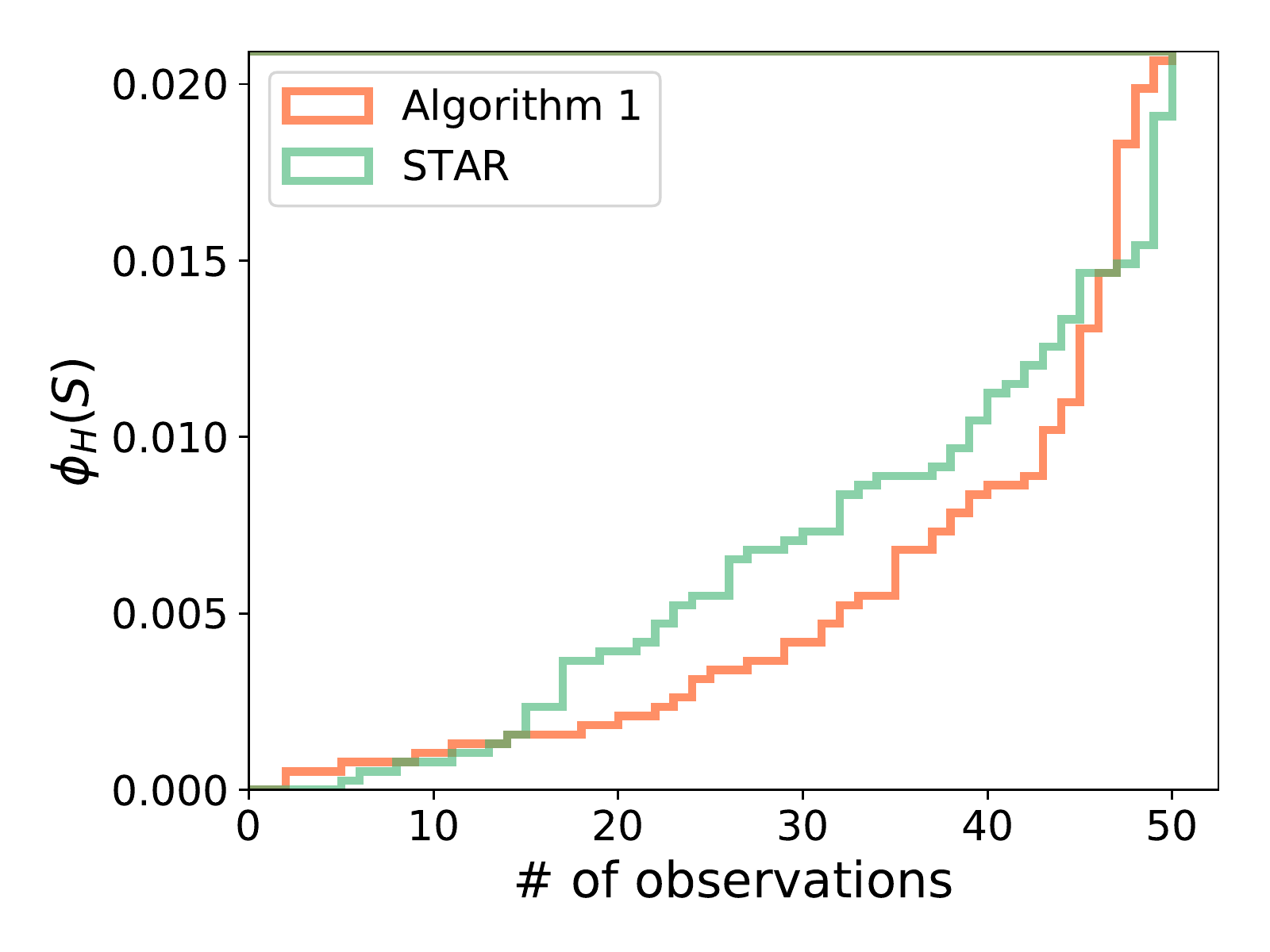}
}
\end{subfloat}
\caption{Comparison of the solution quality of Algorithm~\ref{alg:local-clustering} with those of \textsf{CLIQUE} and \textsf{STAR}.}\label{fig:local}
\end{figure*}

\paragraph{Baseline methods}
In local clustering, we used two baselines called \textsf{CLIQUE} and \textsf{STAR}.
\textsf{CLIQUE} first transforms a given hypergraph into a graph using the clique expansion,
i.e., for each hyperedge $e\in E$ and each $u, v\in e$ with $u\neq v$, it adds an edge $uv$ of weight $w(e)$.
Then \textsf{CLIQUE} computes personalized PageRank with some $\alpha$ using the power method and produces sweep cuts.
Finally, it outputs the best subset among them, in terms of the conductance in the original hypergraph.
\textsf{STAR} is an analogue of \textsf{CLIQUE}, which employs the star expansion alternatively,
i.e., for each hyperedge $e\in E$, it introduces a new vertex $v_e$, and then for each vertex $u\in e$, it adds an edge $uv_e$ of weight $w(e)/|e|$.
As the set of vertices has been changed in \textsf{STAR}, we leave out the additional vertices from the sweep cut.
As for the power method, the initial (PageRank) vector is set to be $(1/n, \dots, 1/n)$
and the stopping condition is set to be $\|x - x_\text{prev} \|_1\leq 10^{-8}$,
where $x$ and $x_\text{prev}$ are vectors obtained in the current and previous iterations, respectively.
Note that if we employ $A_\text{cand}$ as the candidate set of $\alpha$, the power method requires numerous number of iterations.
This is because $A_\text{cand}$ contains a lot of small values, 
e.g., $\frac{w_\mathrm{min}}{w_\mathrm{max}\sum_{e\in E}|e|}$.
Hence, we just used $\alpha=0.05,0.10,\dots, 0.95$ in our experiments.

In global clustering, we used the well-known hypergraph clustering package called \textsf{hMETIS}\footnote{
http://glaros.dtc.umn.edu/gkhome/metis/hmetis/overview
}
as well as the global counterparts of \textsf{CLIQUE} and \textsf{STAR}, which can be obtained in the same way as that of our algorithms.
Note that \textsf{hMETIS} has a lot of parameters; among those, the parameters \texttt{UBfactor}, \texttt{Nruns}, \texttt{CType}, \texttt{RType}, and \texttt{Vcycle} may affect the solution quality and running time.
To obtain a high quality solution, we run the algorithm for each of
$(\texttt{UBfactor}, \texttt{Nruns}, \texttt{CType}, \texttt{RType}, \texttt{Vcycle})
\in \{5, 10, \dots, 45\}\times \{10\}\times \{1,2,3,4,5\}\times \{1,2,3\}\times \{0,1,2,3\}$ to obtain vertex sets,
and then return the one with the smallest conductance.
We used \textsf{hMETIS} 1.5.3, the latest stable release.
Finally, we remark that \textsf{hMETIS} is not applicable to local clustering
because we cannot specify the volume of a subset containing the specified vertex.

\subsection{Local clustering}
First, we focus on local clustering and evaluate the performance of Algorithm~\ref{alg:local-clustering}.
To this end, for each hypergraph in Table~\ref{tab:instance}, we selected 50 initial vertices uniformly at random,
and for each of which, we ran Algorithm~\ref{alg:local-clustering}, \textsf{CLIQUE}, and \textsf{STAR}.
To focus on local structure, Algorithm~\ref{alg:local-clustering} employs $\mu=1/10$,
and so do \textsf{CLIQUE} and \textsf{STAR} in themselves.
The code is written in C\#, which was conducted on a machine with 2.8~GHz CPU and 16~GB RAM.

The quality of solutions are illustrated in Figure~\ref{fig:local}.
The results for \textsf{CLIQUE} are missing for the three largest hypergraphs because it could not be performed due to the memory limit.
In those plots, once we fix a value in the vertical axis,
we can see the cumulative number of solutions with the conductance less than or equal to the value fixed;
thus, an algorithm drawing a line with a lower right position has a better performance.
As can be seen, Algorithm~\ref{alg:local-clustering} outperforms \textsf{CLIQUE} and \textsf{STAR}.
In fact, Algorithm~\ref{alg:local-clustering} almost always has a larger cumulative number with a fixed conductance value.
It can be observed that \textsf{CLIQUE} performs better than \textsf{STAR}.

Table~\ref{tab:local_time} presents the running time of the algorithms, where we have the average values over 50 initial vertices.
The best results among the algorithms are written in bold.
As is evident, Algorithm~\ref{alg:local-clustering} is much more scalable than \textsf{CLIQUE} and comparable to \textsf{STAR}.

\begin{table}[t]
\begin{center}
\caption{Running time(s) of local algorithms. }\label{tab:local_time}
\scalebox{0.96}{
\begin{tabular}{lrrrr}
\toprule
Name & Algorithm~\ref{alg:local-clustering}   & \textsf{CLIQUE}   & \textsf{STAR}\\
\midrule
Graph products         &0.02    &\textbf{0.01}    &0.03                   \\
Network theory         &\textbf{0.06}    &0.18    &0.11                   \\
DBLP KDD               &\textbf{1.42}   &1.84       &1.54  \\
arXiv cond-mat              &5.43   &\textbf{4.84}   &5.70                   \\
DBpedia Writers             &\textbf{7.07}   &77.98   &17.50   \\
YouTube Group Memberships   &93.67   &OM   &\textbf{57.64}   \\
DBpedia Record Labels       &\textbf{36.69}   &OM   &44.18   \\
DBpedia Genre               &\textbf{85.86}   &OM   &97.47   \\
\bottomrule
\end{tabular}
}
\end{center}
\end{table}

\begin{table*}[t]
\begin{center}
\caption{Performance of Algorithm~\ref{alg:global-clustering}, \textsf{CLIQUE}, \textsf{STAR}, and \textsf{hMETIS}.}\label{tab:global}
\scalebox{1.0}{
\begin{tabular}{lrrrrrrrrrrrrrrr}
\toprule
&\multicolumn{2}{c}{Algorithm~\ref{alg:global-clustering}}
&&\multicolumn{2}{c}{\textsf{CLIQUE}}
&&\multicolumn{2}{c}{\textsf{STAR}}
&&\multicolumn{2}{c}{\textsf{hMETIS}}
\\
\cline{2-3}
\cline{5-6}
\cline{8-9}
\cline{11-12}
Name&$\phi_H(S)$ &Time(s)
&&$\phi_H(S)$ &Time(s)
&&$\phi_H(S)$ &Time(s)
&&$\phi_H(S)$ &Time(s)\\
\midrule
Graph products              &\textbf{0.0234}     &1.01        &&0.0246    &\textbf{0.70}       &&0.0288   &1.47         &&\textbf{0.0234}      &15.30 \\
Network theory              &0.0064     &\textbf{3.03}        &&0.0062    &8.57       &&0.0074   &5.15         &&\textbf{0.0061}      &24.82  \\
DBLP KDD                    &\textbf{0.0116}     &\textbf{68.91}       &&\textbf{0.0116}    &89.24      &&\textbf{0.0116}   &74.98        &&0.0236      &107.03    \\
arXiv cond-mat              &\textbf{0.0127}     &251.81      &&\textbf{0.0127}    &\textbf{237.31}     &&\textbf{0.0127}   &281.37       &&0.0259      &418.08      \\
DBpedia Writers             &0.0069     &\textbf{353.60}      &&\textbf{0.0043}    &3,812.72   &&0.0048   &856.85       &&0.0010      &721.36       \\
YouTube Group Memberships   &\textbf{0.0003}     &4,735.28    &&---       &OM         &&\textbf{0.0003}   &\textbf{2,779.52}     &&0.0029       &3,696.24\\
DBpedia Record Labels       &\textbf{0.0002}     &\textbf{1,766.34}    &&---       &OM         &&\textbf{0.0002}   &2,151.80     &&0.0052      &1,655.88 \\
DBpedia Genre               &\textbf{0.0001}     &\textbf{4,320.72}    &&---       &OM         &&\textbf{0.0001}   &4,779.95     &&0.0023      &8,670.49    &  \\
\bottomrule
\end{tabular}
}
\end{center}
\end{table*}

\subsection{Global clustering}
Next we focus on global clustering and evaluate the performance of Algorithm~\ref{alg:global-clustering}.
As can be inferred from Table~\ref{tab:local_time}, Algorithm~\ref{alg:global-clustering} itself does not scale for large hypergraphs.
Therefore, we modified the algorithm as follows: we selected 50 initial vertices uniformly at random,
and replace $V$ with the set of selected vertices in Line~\ref{line:global_V} (in Algorithm~\ref{alg:global-clustering}).
We modified the global counterparts of \textsf{CLIQUE} and \textsf{STAR} in the same way.
To compare the performance of the algorithms fairly, we used the same initial vertices for those three algorithms.
The code is again written in C\#, which was conducted on the above-mentioned machine.

The results are summarized in Table~\ref{tab:global}, where the best results are again written in bold.
As for the solution quality, Algorithm~\ref{alg:global-clustering} is one of the best choices,
but unlike the local setting, \textsf{STAR} performs well.
In fact, \textsf{STAR} achieves the same performance as that of Algorithm~\ref{alg:global-clustering} for the three largest hypergraphs.
On the other hand, the performance of \textsf{hMETIS} is unfavorable.
In our experiments, \textsf{hMETIS} was performed with almost all possible parameter settings;
therefore, it is unlikely that we can obtain a comparable solution using \textsf{hMETIS}.
The trend of the results for the running time is similar to that of the local setting;
Algorithm~\ref{alg:local-clustering} is much more scalable than \textsf{CLIQUE} and comparable to \textsf{STAR}.






\section{Conclusion}\label{sec:conclusion}

In this work, we have developed local and global clustering algorithms based on PageRank for hypergraphs, recently introduced by Li and Milenkovic~\cite{Li:2018we}.
To the best of our knowledge, ours are the first practical algorithms that have theoretical guarantees on the conductance of the output vertex set.
By experiment, we confirmed the usefulness of our clustering algorithms in practice.

\begin{acks}
The authors thank Yu Kitabeppu for helpful comments. 
This work was done while A.M. was at RIKEN AIP, Japan. 
This work was supported by JSPS KAKENHI Grant Numbers 18H05291 and 19K20218. 
\end{acks}

\bibliographystyle{ACM-Reference-Format}
\bibliography{main}


\begin{thebibliography}{00}


\ifx \showCODEN    \undefined \def \showCODEN     #1{\unskip}     \fi
\ifx \showDOI      \undefined \def \showDOI       #1{#1}\fi
\ifx \showISBNx    \undefined \def \showISBNx     #1{\unskip}     \fi
\ifx \showISBNxiii \undefined \def \showISBNxiii  #1{\unskip}     \fi
\ifx \showISSN     \undefined \def \showISSN      #1{\unskip}     \fi
\ifx \showLCCN     \undefined \def \showLCCN      #1{\unskip}     \fi
\ifx \shownote     \undefined \def \shownote      #1{#1}          \fi
\ifx \showarticletitle \undefined \def \showarticletitle #1{#1}   \fi
\ifx \showURL      \undefined \def \showURL       {\relax}        \fi
\providecommand\bibfield[2]{#2}
\providecommand\bibinfo[2]{#2}
\providecommand\natexlab[1]{#1}
\providecommand\showeprint[2][]{arXiv:#2}

\bibitem[\protect\citeauthoryear{Agarwal, Branson, and Belongie}{Agarwal
  et~al\mbox{.}}{2006}]%
        {Agarwal:2006ci}
\bibfield{author}{\bibinfo{person}{Sameer Agarwal}, \bibinfo{person}{Kristin
  Branson}, {and} \bibinfo{person}{Serge Belongie}.}
  \bibinfo{year}{2006}\natexlab{}.
\newblock \showarticletitle{Higher order learning with graphs}. In
  \bibinfo{booktitle}{{\em ICML}}. \bibinfo{pages}{17--24}.
\newblock


\bibitem[\protect\citeauthoryear{Agarwal, Lim, Zelnik-Manor, Perona, Kriegman,
  and Belongie}{Agarwal et~al\mbox{.}}{2005}]%
        {Agarwal:2005go}
\bibfield{author}{\bibinfo{person}{Sameer Agarwal}, \bibinfo{person}{Jongwoo
  Lim}, \bibinfo{person}{Lihi Zelnik-Manor}, \bibinfo{person}{Pietro Perona},
  \bibinfo{person}{David Kriegman}, {and} \bibinfo{person}{Serge Belongie}.}
  \bibinfo{year}{2005}\natexlab{}.
\newblock \showarticletitle{Beyond pairwise clustering}. In
  \bibinfo{booktitle}{{\em CVPR}}. \bibinfo{pages}{838--845}.
\newblock


\bibitem[\protect\citeauthoryear{Alon}{Alon}{1986}]%
        {Alon:1986gz}
\bibfield{author}{\bibinfo{person}{Noga Alon}.}
  \bibinfo{year}{1986}\natexlab{}.
\newblock \showarticletitle{Eigenvalues and expanders}.
\newblock \bibinfo{journal}{{\em Combinatorica\/}} \bibinfo{volume}{6},
  \bibinfo{number}{2} (\bibinfo{year}{1986}), \bibinfo{pages}{83--96}.
\newblock


\bibitem[\protect\citeauthoryear{Alon and Milman}{Alon and Milman}{1985}]%
        {Alon:1985jg}
\bibfield{author}{\bibinfo{person}{Noga Alon} {and} \bibinfo{person}{V.~D.
  Milman}.} \bibinfo{year}{1985}\natexlab{}.
\newblock \showarticletitle{$\lambda_1$, Isoperimetric inequalities for graphs,
  and superconcentrators}.
\newblock \bibinfo{journal}{{\em Journal of Combinatorial Theory, Series B\/}}
  \bibinfo{volume}{38}, \bibinfo{number}{1} (\bibinfo{year}{1985}),
  \bibinfo{pages}{73--88}.
\newblock


\bibitem[\protect\citeauthoryear{Andersen, Chung, and Lang}{Andersen
  et~al\mbox{.}}{2007}]%
        {Andersen:2007tsa}
\bibfield{author}{\bibinfo{person}{Reid Andersen}, \bibinfo{person}{Fan Chung},
  {and} \bibinfo{person}{Kevin Lang}.} \bibinfo{year}{2007}\natexlab{}.
\newblock \showarticletitle{Using PageRank to locally partition a graph}.
\newblock \bibinfo{journal}{{\em Internet Mathematics\/}} \bibinfo{volume}{4},
  \bibinfo{number}{1} (\bibinfo{year}{2007}), \bibinfo{pages}{35--64}.
\newblock


\bibitem[\protect\citeauthoryear{Andersen and Peres}{Andersen and
  Peres}{2009}]%
        {Andersen:2009fj}
\bibfield{author}{\bibinfo{person}{Reid Andersen} {and} \bibinfo{person}{Yuval
  Peres}.} \bibinfo{year}{2009}\natexlab{}.
\newblock \showarticletitle{Finding sparse cuts locally using evolving sets}.
  In \bibinfo{booktitle}{{\em STOC}}. \bibinfo{pages}{235--244}.
\newblock


\bibitem[\protect\citeauthoryear{Bolla}{Bolla}{1993}]%
        {bolla1993}
\bibfield{author}{\bibinfo{person}{Marianna Bolla}.}
  \bibinfo{year}{1993}\natexlab{}.
\newblock \showarticletitle{Spectra, Euclidean representations and clusterings
  of hypergraphs}.
\newblock \bibinfo{journal}{{\em Discrete Mathematics\/}}
  \bibinfo{volume}{117} (\bibinfo{year}{1993}), \bibinfo{pages}{19--39}.
\newblock
\showISSN{0012-365X}


\bibitem[\protect\citeauthoryear{Chan, Louis, Tang, and Zhang}{Chan
  et~al\mbox{.}}{2018}]%
        {Chan:2018eu}
\bibfield{author}{\bibinfo{person}{T-H~Hubert Chan}, \bibinfo{person}{Anand
  Louis}, \bibinfo{person}{Zhihao~Gavin Tang}, {and} \bibinfo{person}{Chenzi
  Zhang}.} \bibinfo{year}{2018}\natexlab{}.
\newblock \showarticletitle{Spectral properties of hypergraph Laplacian and
  approximation algorithms}.
\newblock \bibinfo{journal}{{\it J. ACM}} \bibinfo{volume}{65},
  \bibinfo{number}{3} (\bibinfo{year}{2018}), \bibinfo{pages}{15--48}.
\newblock


\bibitem[\protect\citeauthoryear{Chien, Lin, and Wang}{Chien
  et~al\mbox{.}}{2018}]%
        {Chien:2018up}
\bibfield{author}{\bibinfo{person}{I Chien}, \bibinfo{person}{Chung-Yi Lin},
  {and} \bibinfo{person}{I-Hsiang Wang}.} \bibinfo{year}{2018}\natexlab{}.
\newblock \showarticletitle{Community detection in hypergraphs: {O}ptimal
  statistical limit and efficient algorithms}. In \bibinfo{booktitle}{{\em
  AISTATS}}. \bibinfo{pages}{871--879}.
\newblock


\bibitem[\protect\citeauthoryear{Chung}{Chung}{2007}]%
        {Chung:2007ep}
\bibfield{author}{\bibinfo{person}{Fan Chung}.}
  \bibinfo{year}{2007}\natexlab{}.
\newblock \showarticletitle{The heat kernel as the pagerank of a graph}.
\newblock \bibinfo{journal}{{\em Proceedings of the National Academy of
  Sciences of the United States of America\/}} \bibinfo{volume}{104},
  \bibinfo{number}{50} (\bibinfo{year}{2007}), \bibinfo{pages}{19735--19740}.
\newblock


\bibitem[\protect\citeauthoryear{Felzenszwalb and Huttenlocher}{Felzenszwalb
  and Huttenlocher}{2004}]%
        {Felzenszwalb:2004bx}
\bibfield{author}{\bibinfo{person}{Pedro~F. Felzenszwalb} {and}
  \bibinfo{person}{Daniel~P. Huttenlocher}.} \bibinfo{year}{2004}\natexlab{}.
\newblock \showarticletitle{Efficient graph-based image segmentation}.
\newblock \bibinfo{journal}{{\em International Journal of Computer Vision\/}}
  \bibinfo{volume}{59}, \bibinfo{number}{2} (\bibinfo{year}{2004}),
  \bibinfo{pages}{167--181}.
\newblock


\bibitem[\protect\citeauthoryear{Fortunato}{Fortunato}{2010}]%
        {Fortunato:2010iw}
\bibfield{author}{\bibinfo{person}{Santo Fortunato}.}
  \bibinfo{year}{2010}\natexlab{}.
\newblock \showarticletitle{Community detection in graphs}.
\newblock \bibinfo{journal}{{\em Physics Reports\/}} \bibinfo{volume}{486},
  \bibinfo{number}{3-5} (\bibinfo{year}{2010}), \bibinfo{pages}{75--174}.
\newblock


\bibitem[\protect\citeauthoryear{Fujii, Soma, and Yoshida}{Fujii
  et~al\mbox{.}}{2018}]%
        {fujii2018polynomial}
\bibfield{author}{\bibinfo{person}{Kaito Fujii}, \bibinfo{person}{Tasuku Soma},
  {and} \bibinfo{person}{Yuichi Yoshida}.} \bibinfo{year}{2018}\natexlab{}.
\newblock \showarticletitle{Polynomial-time algorithms for submodular
  {L}aplacian systems}.
\newblock \bibinfo{journal}{{\em arXiv preprint, arXiv:1803.10923\/}}
  (\bibinfo{year}{2018}).
\newblock


\bibitem[\protect\citeauthoryear{Gharan and Trevisan}{Gharan and
  Trevisan}{2012}]%
        {Gharan:2012fy}
\bibfield{author}{\bibinfo{person}{Shayan~Oveis Gharan} {and}
  \bibinfo{person}{Luca Trevisan}.} \bibinfo{year}{2012}\natexlab{}.
\newblock \showarticletitle{Approximating the expansion profile and almost
  optimal local graph clustering}. In \bibinfo{booktitle}{{\em FOCS}}.
  \bibinfo{pages}{187--196}.
\newblock


\bibitem[\protect\citeauthoryear{Ghoshdastidar and Dukkipati}{Ghoshdastidar and
  Dukkipati}{2014}]%
        {Ghoshdastidar:2014wv}
\bibfield{author}{\bibinfo{person}{Debarghya Ghoshdastidar} {and}
  \bibinfo{person}{Ambedkar Dukkipati}.} \bibinfo{year}{2014}\natexlab{}.
\newblock \showarticletitle{Consistency of spectral partitioning of uniform
  hypergraphs under planted partition model}. In \bibinfo{booktitle}{{\em
  NIPS}}. \bibinfo{pages}{397--405}.
\newblock


\bibitem[\protect\citeauthoryear{Ikeda, Miyauchi, Takai, and Yoshida}{Ikeda
  et~al\mbox{.}}{2018}]%
        {ikeda2018finding}
\bibfield{author}{\bibinfo{person}{Masahiro Ikeda}, \bibinfo{person}{Atsushi
  Miyauchi}, \bibinfo{person}{Yuuki Takai}, {and} \bibinfo{person}{Yuichi
  Yoshida}.} \bibinfo{year}{2018}\natexlab{}.
\newblock \showarticletitle{Finding {C}heeger cuts in hypergraphs via heat
  equation}.
\newblock \bibinfo{journal}{{\em arXiv preprint arXiv:1809.04396\/}}
  (\bibinfo{year}{2018}).
\newblock


\bibitem[\protect\citeauthoryear{Jeh and Widom}{Jeh and Widom}{2003}]%
        {Jeh:2003kz}
\bibfield{author}{\bibinfo{person}{Glen Jeh} {and} \bibinfo{person}{Jennifer
  Widom}.} \bibinfo{year}{2003}\natexlab{}.
\newblock \showarticletitle{Scaling personalized web search}. In
  \bibinfo{booktitle}{{\em WWW}}. \bibinfo{pages}{271--279}.
\newblock


\bibitem[\protect\citeauthoryear{Kloster and Gleich}{Kloster and
  Gleich}{2014}]%
        {Kloster:2014wq}
\bibfield{author}{\bibinfo{person}{Kyle Kloster} {and}
  \bibinfo{person}{David~F. Gleich}.} \bibinfo{year}{2014}\natexlab{}.
\newblock \showarticletitle{Heat kernel based community detection}. In
  \bibinfo{booktitle}{{\em KDD}}. \bibinfo{pages}{1386--1395}.
\newblock


\bibitem[\protect\citeauthoryear{Komura}{Komura}{1967}]%
        {komura1967nonlinear}
\bibfield{author}{\bibinfo{person}{Yukio Komura}.}
  \bibinfo{year}{1967}\natexlab{}.
\newblock \showarticletitle{Nonlinear semi-groups in Hilbert space}.
\newblock \bibinfo{journal}{{\em Journal of the Mathematical Society of
  Japan\/}} \bibinfo{volume}{19}, \bibinfo{number}{4} (\bibinfo{year}{1967}),
  \bibinfo{pages}{493--507}.
\newblock


\bibitem[\protect\citeauthoryear{Kwok, Lau, and Lee}{Kwok
  et~al\mbox{.}}{2017}]%
        {Kwok:2017ga}
\bibfield{author}{\bibinfo{person}{Tsz~C. Kwok}, \bibinfo{person}{Lap~C. Lau},
  {and} \bibinfo{person}{Yin~T. Lee}.} \bibinfo{year}{2017}\natexlab{}.
\newblock \showarticletitle{Improved Cheeger's inequality and analysis of local
  graph partitioning using vertex expansion and expansion profile}.
\newblock \bibinfo{journal}{{\it SIAM J. Comput.}} \bibinfo{volume}{46},
  \bibinfo{number}{3} (\bibinfo{year}{2017}), \bibinfo{pages}{890--910}.
\newblock


\bibitem[\protect\citeauthoryear{Leordeanu and Sminchisescu}{Leordeanu and
  Sminchisescu}{2012}]%
        {Leordeanu:2012tj}
\bibfield{author}{\bibinfo{person}{Marius Leordeanu} {and}
  \bibinfo{person}{Cristian Sminchisescu}.} \bibinfo{year}{2012}\natexlab{}.
\newblock \showarticletitle{Efficient hypergraph clustering}. In
  \bibinfo{booktitle}{{\em AISTATS}}. \bibinfo{pages}{676--684}.
\newblock


\bibitem[\protect\citeauthoryear{Li and Milenkovic}{Li and Milenkovic}{2018}]%
        {Li:2018we}
\bibfield{author}{\bibinfo{person}{Pan Li} {and} \bibinfo{person}{Olgica
  Milenkovic}.} \bibinfo{year}{2018}\natexlab{}.
\newblock \showarticletitle{Submodular hypergraphs: p-{L}aplacians, {C}heeger
  inequalities and spectral clustering}. In \bibinfo{booktitle}{{\em ICML}}.
  \bibinfo{pages}{3014--3023}.
\newblock


\bibitem[\protect\citeauthoryear{Liu, Latecki, and Yan}{Liu
  et~al\mbox{.}}{2010}]%
        {Liu:2010va}
\bibfield{author}{\bibinfo{person}{Hairong Liu}, \bibinfo{person}{Longin~J.
  Latecki}, {and} \bibinfo{person}{Shuicheng Yan}.}
  \bibinfo{year}{2010}\natexlab{}.
\newblock \showarticletitle{Robust clustering as ensembles of affinity
  relations}. In \bibinfo{booktitle}{{\em NIPS}}. \bibinfo{pages}{1414--1422}.
\newblock


\bibitem[\protect\citeauthoryear{Lov\'asz and Simonovits}{Lov\'asz and
  Simonovits}{1990}]%
        {Lovasz90}
\bibfield{author}{\bibinfo{person}{L\'aszl\'o Lov\'asz} {and}
  \bibinfo{person}{Mikl\'os Simonovits}.} \bibinfo{year}{1990}\natexlab{}.
\newblock \showarticletitle{The mixing rate of Markov chains, an isoperimetric
  inequality, and computing the volume}. In \bibinfo{booktitle}{{\em FOCS}}.
  \bibinfo{pages}{346--354}.
\newblock


\bibitem[\protect\citeauthoryear{Lov\'asz and Simonovits}{Lov\'asz and
  Simonovits}{1993}]%
        {Lovasz93}
\bibfield{author}{\bibinfo{person}{L\'aszl\'o Lov\'asz} {and}
  \bibinfo{person}{Mikl\'os Simonovits}.} \bibinfo{year}{1993}\natexlab{}.
\newblock \showarticletitle{Random walks in a convex body and an improved
  volume algorithm}.
\newblock \bibinfo{journal}{{\em Random Structures \& Algorithms\/}}
  \bibinfo{volume}{4}, \bibinfo{number}{4} (\bibinfo{year}{1993}),
  \bibinfo{pages}{359--412}.
\newblock


\bibitem[\protect\citeauthoryear{Miyadera}{Miyadera}{1992}]%
        {miyadera1992nonlinear}
\bibfield{author}{\bibinfo{person}{Isao Miyadera}.}
  \bibinfo{year}{1992}\natexlab{}.
\newblock \bibinfo{booktitle}{{\em Nonlinear Semigroups}}.
  \bibinfo{series}{Translations of Mathematical Monographs},
  Vol.~\bibinfo{volume}{109}.
\newblock \bibinfo{publisher}{American Mathematical Society}.
\newblock


\bibitem[\protect\citeauthoryear{Peypouquet and Sorin}{Peypouquet and
  Sorin}{2010}]%
        {peypouquet2010evolution}
\bibfield{author}{\bibinfo{person}{Juan Peypouquet} {and}
  \bibinfo{person}{Sylvain Sorin}.} \bibinfo{year}{2010}\natexlab{}.
\newblock \showarticletitle{Evolution Equations for Maximal Monotone Operators:
  Asymptotic Analysis in Continuous and Discrete Time}.
\newblock \bibinfo{journal}{{\em Journal of Convex Analysis\/}}
  \bibinfo{volume}{17}, \bibinfo{number}{3\&4} (\bibinfo{year}{2010}),
  \bibinfo{pages}{1113--1163}.
\newblock


\bibitem[\protect\citeauthoryear{Rodri'guez}{Rodri'guez}{2002}]%
        {Rodriguez:2002bz}
\bibfield{author}{\bibinfo{person}{J.~A. Rodri'guez}.}
  \bibinfo{year}{2002}\natexlab{}.
\newblock \showarticletitle{On the Laplacian eigenvalues and metric parameters
  of hypergraphs}.
\newblock \bibinfo{journal}{{\em Linear and Multilinear Algebra\/}}
  \bibinfo{volume}{50}, \bibinfo{number}{1} (\bibinfo{year}{2002}),
  \bibinfo{pages}{1--14}.
\newblock


\bibitem[\protect\citeauthoryear{Rota~Bulo and Pelillo}{Rota~Bulo and
  Pelillo}{2013}]%
        {RotaBulo:2013da}
\bibfield{author}{\bibinfo{person}{Samuel Rota~Bulo} {and}
  \bibinfo{person}{Marcello Pelillo}.} \bibinfo{year}{2013}\natexlab{}.
\newblock \showarticletitle{A game-theoretic approach to hypergraph
  clustering}.
\newblock \bibinfo{journal}{{\em IEEE Transactions on Pattern Analysis and
  Machine Intelligence\/}} \bibinfo{volume}{35}, \bibinfo{number}{6}
  (\bibinfo{year}{2013}), \bibinfo{pages}{1312--1327}.
\newblock


\bibitem[\protect\citeauthoryear{Spielman and Teng}{Spielman and Teng}{2013}]%
        {Spielman:2013cc}
\bibfield{author}{\bibinfo{person}{Daniel~A. Spielman} {and}
  \bibinfo{person}{Shang-Hua Teng}.} \bibinfo{year}{2013}\natexlab{}.
\newblock \showarticletitle{A local clustering algorithm for massive graphs and
  its application to nearly linear time graph partitioning}.
\newblock \bibinfo{journal}{{\it SIAM J. Comput.}} \bibinfo{volume}{42},
  \bibinfo{number}{1} (\bibinfo{year}{2013}), \bibinfo{pages}{1--26}.
\newblock


\bibitem[\protect\citeauthoryear{Yoshida}{Yoshida}{2016}]%
        {Yoshida:2016ig}
\bibfield{author}{\bibinfo{person}{Yuichi Yoshida}.}
  \bibinfo{year}{2016}\natexlab{}.
\newblock \showarticletitle{Nonlinear {L}aplacian for digraphs and its
  applications to network analysis}. In \bibinfo{booktitle}{{\em WSDM}}.
  \bibinfo{pages}{483--492}.
\newblock


\bibitem[\protect\citeauthoryear{Yoshida}{Yoshida}{2019}]%
        {Yoshida:2019zz}
\bibfield{author}{\bibinfo{person}{Yuichi Yoshida}.}
  \bibinfo{year}{2019}\natexlab{}.
\newblock \showarticletitle{Cheeger inequalities for submodular
  transformations}. In \bibinfo{booktitle}{{\em SODA}}.
  \bibinfo{pages}{2582--2601}.
\newblock


\bibitem[\protect\citeauthoryear{Zhou, Huang, and Sch{\"o}lkopf}{Zhou
  et~al\mbox{.}}{2005}]%
        {Zhou:2005tc}
\bibfield{author}{\bibinfo{person}{Dengyong Zhou}, \bibinfo{person}{Jiayuan
  Huang}, {and} \bibinfo{person}{Bernhard Sch{\"o}lkopf}.}
  \bibinfo{year}{2005}\natexlab{}.
\newblock \showarticletitle{Beyond pairwise classification and clustering using
  hypergraphs}.
\newblock \bibinfo{journal}{{\em Max Plank Institute for Biological
  Cybernetics, T{\"u}bingen, Germany\/}} (\bibinfo{year}{2005}).
\newblock


\bibitem[\protect\citeauthoryear{Zhou, Huang, and Sch{\"o}lkopf}{Zhou
  et~al\mbox{.}}{2007}]%
        {Zhou:2006vj}
\bibfield{author}{\bibinfo{person}{Dengyong Zhou}, \bibinfo{person}{Jiayuan
  Huang}, {and} \bibinfo{person}{Bernhard Sch{\"o}lkopf}.}
  \bibinfo{year}{2007}\natexlab{}.
\newblock \showarticletitle{Learning with hypergraphs: Clustering,
  classification, and embedding}. In \bibinfo{booktitle}{{\em NIPS}}.
  \bibinfo{pages}{1601--1608}.
\newblock


\bibitem[\protect\citeauthoryear{Zien, Schlag, and Chan}{Zien
  et~al\mbox{.}}{1999}]%
        {Zien:1999cm}
\bibfield{author}{\bibinfo{person}{Jason~Y. Zien}, \bibinfo{person}{Martin
  D.~F. Schlag}, {and} \bibinfo{person}{Pak~K. Chan}.}
  \bibinfo{year}{1999}\natexlab{}.
\newblock \showarticletitle{Multilevel spectral hypergraph partitioning with
  arbitrary vertex sizes}.
\newblock \bibinfo{journal}{{\em IEEE Transactions on Pattern Analysis and
  Machine Intelligence\/}} \bibinfo{volume}{18}, \bibinfo{number}{9}
  (\bibinfo{year}{1999}), \bibinfo{pages}{1389--1399}.
\newblock


\end{thebibliography}

\appendix



\section{Proofs of Section~\ref{sec:basic-properties}}
\label{sec:proof-basic-properties}

   
\begin{proof}[Proof of Theorem~\ref{thm:existence-of-PR}]
  We first prove the former claim.
 In \cite{ikeda2018finding}, 
 it was proven that $\ca{L}_H$ is a maximal monotone operator. (We omit the definition here because we do not need it).
  Then, by the general theory of maximal monotone operators~\cite{miyadera1992nonlinear,komura1967nonlinear}, the operator $I + \lambda\ca{L}_H$ is injective for any $\lambda>0$.
  Hence, its inverse $J_\lambda := {(I+\lambda \ca{L}_H)}^{-1}$ is single-valued, and is called a \emph{resolvent} of $\ca{L}_H$.
  Because the image of $\ca{L}_H$ is $\RR^V$, for any $\bm{s}\in \RR^V$, there exists a unique vector $J_\lambda(\bm{s})$, which is a solution to~\eqref{eq:hypergraph-pagerank-via-laplacian} when
  $\lambda = (1-\alpha)/2\alpha > 0$.

  We now prove the latter claim.
  For $\bm{p} = \pr_\alpha(\bm{s})$, we have
  \[
    \bm{p} + \frac{1-\alpha}{2\alpha}\ca{L}_{H_{\bm{p}}}\bm{p} = \bm{s},
  \]
  where $H_{\bm{p}}$ is the graph obtained from $H$ and $\bm{p}$, as defined in Section~\ref{subsec:pagerank-hypergraph}.
  This implies that $\bm{p}$ is a classical PPR on $H_{\bm{p}}$, and hence the claim holds by the Perron--Frobenius theorem for
  the matrix $\alpha\bm{s}\1^\top + (1-\alpha)W_{H_{\bm{p}}}$.
\end{proof}


\begin{proof}[Proof of Propostion~\ref{prop:continuity}]
As mentioned in the proof of Theorem~\ref{thm:existence-of-PR}, the PPR $\bm{\pr}_\alpha(\bm{s})$ is equal to a resolvent $J_\lambda(\bm{s})$ for $\lambda = (1-\alpha)/2\alpha$.
Thus, it suffices to prove the continuity of the resolvent $J_\lambda(\bm{s})$ with respect to $\lambda>0$.
By~\cite{miyadera1992nonlinear},
for $\lambda \geq \mu>0$,
the resolvents satisfy the following inequality: 
\[
 J_\lambda(\bm{s}) = J_{\mu}\left( \frac{\mu}{\lambda}\bm{s}
 + \frac{\lambda-\mu}{\lambda} J_{\lambda}(\bm{s})  \right).
\]
On the other hand, it is well known that $J_\lambda$ is a non-expansive map, i.e., for any $\bm{s}, \bm{t} \in \RR^V$ and any
$\lambda > 0$,
$ \|J_\lambda(\bm{s}) - J_\lambda(\bm{t}) \|_{D_H^{-1}} \leq
 \|\bm{s} - \bm{t}\|_{D_H^{-1}}$, 
where $\| \bm{x} \|_{D_H^{-1}} =
\sqrt{\bm{x}^\top D_H^{-1} \bm{x}}$.
By these relations, we can show the following:
\begin{align*}
 & \|J_\lambda(\bm{s}) - J_\mu(\bm{s}) \|_{D_H^{-1}}
 \leq \left\|J_\mu\left(\frac{\mu}{\lambda}\bm{s} + \frac{\lambda-\mu}{\lambda}J_\lambda(\bm{s})\right) - J_\mu(\bm{s})  \right\|_{D_H^{-1}}
 \\
 &\leq \left\|\frac{\mu}{\lambda}\bm{s} + \frac{\lambda-\mu}{\lambda}J_\lambda(\bm{s}) - \bm{s}  \right\|_{D_H^{-1}}
 \leq \frac{\lambda-\mu}{\lambda} \|J_\lambda(\bm{s}) - \bm{s}
 \|_{D_H^{-1}}.
\end{align*}
Hence, $\lim_{\lambda \to \mu} J_\lambda(\bm{s}) = J_\mu(\bm{s})$.
\qedhere
\end{proof}

To prove Proposition~\ref{prop:bounds-of-values-of-PPR}, 
we prepare the following lemma: 
\begin{lem}\label{lem:pagerank-in-terms-of-beta}
  For $\beta:=2\alpha/(1+\alpha)$, we have
  \[
    \beta (\bm{s} - \bm{\pr}_\alpha(\bm{s})) - (1-\beta) \ca{L}_H (\bm{\pr}_\alpha(\bm{s})) = \bm{0}.
  \]
\end{lem}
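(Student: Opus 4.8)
The plan is to unfold the definition of the PPR and reduce the claim to a one-line identity among the constants. By Theorem~\ref{thm:existence-of-PR}, the vector $\bm{p} := \bm{\pr}_\alpha(\bm{s})$ is the unique solution to~\eqref{eq:hypergraph-pagerank-via-laplacian}, so there is a selection $\bm{y} \in \ca{L}_H(\bm{p})$ with
\[
  \bm{p} + \frac{1-\alpha}{2\alpha}\,\bm{y} = \bm{s},
\]
which I would rearrange to $\bm{s} - \bm{p} = \frac{1-\alpha}{2\alpha}\,\bm{y}$. This is the only structural input needed; everything else is arithmetic.

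Next I would substitute this expression for $\bm{s} - \bm{p}$ into the left-hand side of the asserted identity, using the same selection $\bm{y}$ for the term $\ca{L}_H(\bm{p})$. This turns the combination $\beta(\bm{s} - \bm{p}) - (1-\beta)\ca{L}_H(\bm{p})$ into
\[
  \left(\beta\,\frac{1-\alpha}{2\alpha} - (1-\beta)\right)\bm{y},
\]
so it remains only to check that the scalar coefficient vanishes. With $\beta = 2\alpha/(1+\alpha)$ one computes $\beta\,\frac{1-\alpha}{2\alpha} = \frac{1-\alpha}{1+\alpha}$ and $1 - \beta = \frac{1-\alpha}{1+\alpha}$, so the coefficient is indeed zero and the displayed expression equals $\bm{0}$.

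There is essentially no hard step here; the only point requiring care is that $\ca{L}_H$ is set-valued, so the equality in the statement must be read as the existence of a selection $\bm{y} \in \ca{L}_H(\bm{p})$ for which the combination vanishes. Since the defining relation~\eqref{eq:hypergraph-pagerank-via-laplacian} already fixes precisely such a $\bm{y}$, the same selection works throughout, and no additional monotonicity or uniqueness argument is needed beyond invoking Theorem~\ref{thm:existence-of-PR} for the well-definedness of $\bm{p}$.
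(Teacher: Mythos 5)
Your proposal is correct and matches the paper's own proof: both unfold the defining inclusion~\eqref{eq:hypergraph-pagerank-via-laplacian} for $\bm{\pr}_\alpha(\bm{s})$, use the identity $(1-\alpha)/2\alpha = (1-\beta)/\beta$ to make the coefficient cancel, and handle the set-valuedness of $\ca{L}_H$ by working with the selection fixed by the defining relation (the paper phrases this as replacing the inclusion with an equality via uniqueness from Theorem~\ref{thm:existence-of-PR}). No gaps.
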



\begin{proof}[Proof of Lemma~\ref{lem:pagerank-in-terms-of-beta} ]
  Let $\bm{p} = \pr_\alpha(\bm{s})$.
  Note that $(1-\alpha)/2\alpha = 1/\beta-1$.
  Then by~\eqref{eq:hypergraph-pagerank-via-laplacian} and easy calculation, we have
  \begin{align*}
    \left(I + \frac{1-\alpha}{2\alpha}\ca{L}_H\right)(\bm{p}) \ni \bm{s}
    \Leftrightarrow\; \beta (\bm{s} - \bm{p}) - (1-\beta) \ca{L}_H (\bm{p}) \ni \bm{0}.
  \end{align*}
  As $\bm{p}$ is unique by Theorem~\ref{thm:existence-of-PR}, we can replace the inclusion with equality.
\end{proof}

\begin{proof}[Proof of Proposition~\ref{prop:bounds-of-values-of-PPR}]
  Let $\bm{p} = \pr_\alpha(\bm{\chi}_u)$.
  Because $\bm{p}$ is a distribution by the latter claim of Theorem~\ref{thm:existence-of-PR}, we have
  $\bm{p}(v) \geq 0$ for any $v\in V$.
  Then by Lemma~\ref{lem:pagerank-in-terms-of-beta}, we have
  \[
   \bm{p} = \bm{\chi}_u - \frac{1-\beta}{\beta}
   \ca{L}_{H_{\bm{p}}} \bm{p} \geq \bm{0}.
  \]

  Hence for $v \neq u$, we have $-\ca{L}_{H_{\bm{p}}}\bm{p}(v) \geq 0$.
  Note that $-\ca{L}_{H_{\bm{p}}}\bm{p}(v)$ is equal to the difference between the heat diffusing into $v$ and that diffusing out from $v$ in the heat equation:
  \[
   \frac{d\bm{\rho}_t}{dt} = -\ca{L}_{H_{\bm{p}}}\bm{\rho}_t.
  \]
  Because the heat diffuses to the stationary distribution $\bm{\pi}_V$, the non-negativity of $-\ca{L}_{H_{\bm{p}}}\bm{p}(v)$ implies $\bm{p}(v) \leq \bm{\pi}_V(v)$.

  When $v = u$, we have $\bm{p}(u)-\bm{\chi}_u(u) \leq 0$.
  By a similar argument, we have the desired inequality.
\end{proof}


\section{Proof of Lemma~\ref{lem:upperbound-of-sweep-conductance-local-2}}\label{sec:proof-of-key-lemma}


\subsection{Setup}
Before proving
Lemma~\ref{lem:upperbound-of-sweep-conductance-local-2}, we first introduce several notations.
Let $G = (V, E, w)$ be an undirected graph.
For a subset $S\subseteq V$, we define two sets of
ordered pairs of vertices as follows:
\begin{align*}
 \mathrm{in}(S) := \{ (u,v) \in V\times S \}, \ 
 \mathrm{out}(S) := \{ (u,v) \in S \times V \}. 
\end{align*}
For an ordered pair $(u,v) \in V\times V$ and a distribution $\bm{p}\in \RR^V$,
we set $\bm{p}(u,v) := \frac{w(uv)}{d_u}\bm{p}(u)$.
For a set of ordered pairs of vertices $A \subseteq V\times V$ and
any distribution $\bm{p} \in \mathbb{R}^V$, we set $\bm{p}(A) := \sum_{(u,v)\in A} \bm{p}(u,v)$.
Then, by a similar argument for $\bm{p}$ and $H_{\bm{p}}$ 
to \cite{Andersen:2007tsa}, we have the following: 

\begin{lem}\label{lem:lazy-in-out}
  Let $H=(V,E,w)$ be a hypergraph, $\bm{p} \in \RR^V$ be a distribution, $H_{\bm{p}}=(V, E_{\bm{p}}, w_{\bm{p}})$ be a graph as defined in Section~\ref{sec:pre},
  and $S\subseteq V$.
  Then, we have
  \[
    W_{H_{\bm{p}}}(\bm{p})(S)= \frac{1}{2}\left( \bm{p}(\mathrm{in}(S) \cup \mathrm{out}(S)) + \bm{p}(\mathrm{in} (S)\cap \mathrm{out}(S)) \right),
  \]
  where $\bm{p}(\cdot)$ is defined using the graph $H_{\bm{p}}$.
\end{lem}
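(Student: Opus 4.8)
The plan is to unwind the definition of the lazy random-walk operator $W_{H_{\bm{p}}}$ on the graph $H_{\bm{p}}$ and reorganize the resulting sum over ordered pairs of vertices. Recall that $W_{H_{\bm{p}}} = (I + A_{H_{\bm{p}}}D_{H_{\bm{p}}}^{-1})/2$, so the quantity $W_{H_{\bm{p}}}(\bm{p})(S)$ expands as $\frac{1}{2}\bm{p}(S) + \frac{1}{2}(A_{H_{\bm{p}}}D_{H_{\bm{p}}}^{-1}\bm{p})(S)$. First I would write each of these two pieces in the pair-notation $\bm{p}(u,v) = \frac{w_{\bm{p}}(uv)}{d_u}\bm{p}(u)$ introduced just above the statement. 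The stay-put term $\frac{1}{2}\bm{p}(S)$ becomes a sum over pairs whose second coordinate lies in $S$ but which come from the ``diagonal'' action of the identity; more usefully, I would rewrite $\bm{p}(u) = \sum_{v\in V}\frac{w_{\bm{p}}(uv)}{d_u}\bm{p}(u)$ using that $\sum_v w_{\bm{p}}(uv) = d_u$ (degrees are preserved in $H_{\bm{p}}$), turning $\bm{p}(S)$ into $\sum_{u\in S}\sum_{v\in V}\bm{p}(u,v)$, i.e.\ a sum over $\mathrm{out}(S)$.

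Next I would handle the diffusion term $\frac{1}{2}(A_{H_{\bm{p}}}D_{H_{\bm{p}}}^{-1}\bm{p})(S)$. Writing it out, $(A_{H_{\bm{p}}}D_{H_{\bm{p}}}^{-1}\bm{p})(u) = \sum_{v\in V}\frac{w_{\bm{p}}(uv)}{d_v}\bm{p}(v)$, so summing over $u\in S$ gives $\sum_{u\in S}\sum_{v\in V}\frac{w_{\bm{p}}(uv)}{d_v}\bm{p}(v)$. Since $w_{\bm{p}}$ is symmetric ($H_{\bm{p}}$ is undirected), swapping the roles of the summation indices rewrites this as $\sum_{v\in S}\sum_{u\in V}\frac{w_{\bm{p}}(vu)}{d_v}\bm{p}(v) = \sum_{(u,v):v\in S}\bm{p}(u,v)$, which is exactly $\bm{p}(\mathrm{in}(S))$. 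So the key identities I expect to use are $\frac{1}{2}\bm{p}(S) = \frac{1}{2}\bm{p}(\mathrm{out}(S))$ and $\frac{1}{2}(A_{H_{\bm{p}}}D_{H_{\bm{p}}}^{-1}\bm{p})(S) = \frac{1}{2}\bm{p}(\mathrm{in}(S))$, giving $W_{H_{\bm{p}}}(\bm{p})(S) = \frac{1}{2}\left(\bm{p}(\mathrm{in}(S)) + \bm{p}(\mathrm{out}(S))\right)$.

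Finally I would reconcile this with the claimed right-hand side via the inclusion–exclusion identity on the sets of ordered pairs: $\bm{p}(\mathrm{in}(S)) + \bm{p}(\mathrm{out}(S)) = \bm{p}(\mathrm{in}(S)\cup\mathrm{out}(S)) + \bm{p}(\mathrm{in}(S)\cap\mathrm{out}(S))$, which holds since $\bm{p}(\cdot)$ is additive over disjoint unions of pair-sets. Substituting this into the previous line yields precisely the stated formula. The step I expect to be the main (if modest) obstacle is the index-swap that converts the adjacency term into $\bm{p}(\mathrm{in}(S))$: it relies crucially on the symmetry of $w_{\bm{p}}$ and on the fact that $\bm{p}(u,v)$ is normalized by $d_u$ (the \emph{first} coordinate), so one must be careful that the relabeling genuinely lands on $\mathrm{in}(S)$ rather than reproducing $\mathrm{out}(S)$. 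Everything else is bookkeeping, and the non-negativity of $w_{\bm{p}}$ (assumed in Section~\ref{subsec:pagerank-hypergraph}) ensures the pair-sums are well-defined.
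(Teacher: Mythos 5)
Your proof is correct and takes essentially the same route as the paper, which in fact omits the proof entirely and defers to Andersen et al.~\cite{Andersen:2007tsa}: expanding $W_{H_{\bm{p}}}=(I+A_{H_{\bm{p}}}D_{H_{\bm{p}}}^{-1})/2$, identifying the identity term with $\bm{p}(\mathrm{out}(S))$ via degree preservation in $H_{\bm{p}}$ and the adjacency term with $\bm{p}(\mathrm{in}(S))$ via symmetry of $w_{\bm{p}}$, and finishing by inclusion--exclusion over the pair sets is exactly that computation. One cosmetic slip: after the index swap the adjacency term should read $\sum_{v\in S}\sum_{u\in V}\frac{w_{\bm{p}}(vu)}{d_u}\bm{p}(u)$ (degree and probability at the \emph{first} coordinate of the pair), not $\frac{w_{\bm{p}}(vu)}{d_v}\bm{p}(v)$, but the conclusion you draw from it, namely $\bm{p}(\mathrm{in}(S))$, is the correct one.
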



\subsection{Lov\'asz--Simonovic function for weighted graphs}
Next, we introduce a weighted version of the Lov\'asz--Simonovic function $\bm{p}[x]$~\cite{Lovasz90,Lovasz93}.
For a distribution $\bm{p} \in \RR^V$, we define $\bm{p}[x]$ on $x\in [0, \vol(V)]$ as follows:
If $x = \vol(S_j^{\bm{p}})$ for some $j = 0,1,\ldots,n$, we define
$\bm{p}[x] := \bm{p}(S_j^{\bm{p}})$.
For general $x \in [0, \vol(V)]$, we extend $\bm{p}[x]$ as a piecewise
 linear function. More specifically, if $x$ satisfies
 $\vol(S_j^{\bm{p}}) \leq x < \vol(S_{j+1}^{\bm{p}})$, then $\bm{p}[x]$ is defined as
\begin{align*}
 \bm{p}[x]
  = \bm{p}(S_j^{\bm{p}}) + \frac{\bm{p}(v_{j+1})}{d_{v_{j+1}}} (x - \vol(S_j^{\bm{p}})).
\end{align*}
We remark that the function $\bm{p}[x]$ is concave.
Then, an argument similar to that of~\cite{Andersen:2007tsa} with Lemma~\ref{lem:lazy-in-out} implies the following lemma:

\begin{lem}\label{lem:ineq-Lovasz-Simonovitz}
  For $\bm{p} = \bm{\pr}_\alpha(\bm{s})$ and $j = 1, 2, \dots, n-1$,
we have
  \begin{align*}
    & \bm{p}[\vol(S_j^{\bm{p}})]
    \leq \alpha \bm{s} \left[\vol(S_j^{\bm{p}})\right] + \\
    & \qquad + (1-\alpha)\frac{1}{2}\left(\bm{p}\left[\vol(S_j^{\bm{p}}) - |\partial(S_j^{\bm{p}})|\right] + \bm{p}\left[\vol(S_j^{\bm{p}}) + |\partial(S_j^{\bm{p}})|\right]\right).
  \end{align*}
\end{lem}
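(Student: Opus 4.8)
The plan is to establish this weighted Lov\'asz--Simonovits inequality by tracking how the PPR distribution $\bm{p}=\bm{\pr}_\alpha(\bm{s})$ redistributes mass under the lazy random walk on the auxiliary graph $H_{\bm{p}}$. Since $\bm{p}$ is a stationary solution, it satisfies $\bm{p} = \alpha\bm{s} + (1-\alpha)W_{H_{\bm{p}}}(\bm{p})$, so evaluating on the sweep set $S_j^{\bm{p}}$ gives $\bm{p}(S_j^{\bm{p}}) = \alpha\bm{s}(S_j^{\bm{p}}) + (1-\alpha)W_{H_{\bm{p}}}(\bm{p})(S_j^{\bm{p}})$. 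The first task is to bound the walk term $W_{H_{\bm{p}}}(\bm{p})(S_j^{\bm{p}})$ from above using the function $\bm{p}[\cdot]$; the seed term is immediately handled since $\bm{p}(S_j^{\bm{p}}) = \bm{p}[\vol(S_j^{\bm{p}})]$ and $\bm{s}(S_j^{\bm{p}}) \le \bm{s}[\vol(S_j^{\bm{p}})]$ by definition of the piecewise-linear extension.

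The key step is to apply Lemma~\ref{lem:lazy-in-out}, which expresses $W_{H_{\bm{p}}}(\bm{p})(S_j^{\bm{p}})$ as $\tfrac{1}{2}\bigl(\bm{p}(\mathrm{in}(S_j^{\bm{p}})\cup\mathrm{out}(S_j^{\bm{p}})) + \bm{p}(\mathrm{in}(S_j^{\bm{p}})\cap\mathrm{out}(S_j^{\bm{p}}))\bigr)$. I would then bound each of the two ordered-pair masses by values of $\bm{p}[\cdot]$. The idea, following Andersen~\emph{et~al.}, is that $\bm{p}(\mathrm{in}(S)\cap\mathrm{out}(S))$ counts pairs with both endpoints in $S$, so it is at most the total outgoing mass from the slightly shrunk set and is bounded by $\bm{p}[\vol(S_j^{\bm{p}}) - |\partial(S_j^{\bm{p}})|]$; symmetrically $\bm{p}(\mathrm{in}(S)\cup\mathrm{out}(S))$ counts pairs with at least one endpoint in $S$ and is bounded by $\bm{p}[\vol(S_j^{\bm{p}}) + |\partial(S_j^{\bm{p}})|]$. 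Here the quantity $|\partial(S_j^{\bm{p}})|$, the number (or total weight) of boundary edges of $H_{\bm{p}}$, plays the role of the edge-count shift because each boundary edge contributes mass that is either gained or lost relative to the exact volume $\vol(S_j^{\bm{p}})$. The concavity of $\bm{p}[\cdot]$, noted in the setup, is what makes these shifted evaluations meaningful upper and lower controls.

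The main obstacle I anticipate is verifying the two boundary bounds on $\bm{p}(\mathrm{in}(S)\cap\mathrm{out}(S))$ and $\bm{p}(\mathrm{in}(S)\cup\mathrm{out}(S))$ with the correct shift $|\partial(S_j^{\bm{p}})|$, rather than some other graph-dependent quantity. This requires a careful accounting of how much probability mass flows across the cut of $H_{\bm{p}}$: one must argue that the total flow mass $\bm{p}(\mathrm{out}(S)\setminus\mathrm{in}(S))$ leaving $S$ and $\bm{p}(\mathrm{in}(S)\setminus\mathrm{out}(S))$ entering $S$ are each at most the boundary size, exploiting that the walk is lazy (transition probabilities along any edge are at most $1/2$, scaled by $w_{\bm{p}}(uv)/d_u$) so that the mass pushed across a single edge is controlled. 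Since $H_{\bm{p}}$ is an ordinary weighted graph, this is exactly the classical Lov\'asz--Simonovits estimate, and the sweep cuts $S_j^{\bm{p}}$ are ordered by the same normalized values $\bm{p}(v)/d_v$ used to define $\bm{p}[\cdot]$, which guarantees that the extremal configurations of mass are attained by the shifted volumes; I would lean on the cited argument of~\cite{Andersen:2007tsa} for this part and only check that nothing in the hypergraph-to-graph reduction invalidates it.

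Finally, I would assemble the pieces: substituting the walk bound back yields
\begin{align*}
\bm{p}[\vol(S_j^{\bm{p}})] &\le \alpha\bm{s}[\vol(S_j^{\bm{p}})] \\
&\quad + (1-\alpha)\tfrac{1}{2}\Bigl(\bm{p}[\vol(S_j^{\bm{p}}) - |\partial(S_j^{\bm{p}})|] + \bm{p}[\vol(S_j^{\bm{p}}) + |\partial(S_j^{\bm{p}})|]\Bigr),
\end{align*}
which is precisely the claimed inequality. The only substantive work beyond invoking Lemma~\ref{lem:lazy-in-out} and the stationarity equation is the boundary-shift estimate, so I expect the proof to be short modulo that step.
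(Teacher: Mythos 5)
Your proposal is correct and takes essentially the same route as the paper: the paper proves this lemma exactly by combining the stationarity identity $\bm{p}=\alpha\bm{s}+(1-\alpha)W_{H_{\bm{p}}}(\bm{p})$ on the auxiliary graph $H_{\bm{p}}$ with Lemma~\ref{lem:lazy-in-out}, and then deferring the boundary-shift (pair-mass) estimates to the Lov\'asz--Simonovits argument of Andersen~\emph{et~al.}~\cite{Andersen:2007tsa}, which is precisely your plan. One minor remark: the shift bounds $\bm{p}(\mathrm{in}(S)\cap\mathrm{out}(S))\le\bm{p}[\vol(S)-|\partial(S)|]$ and $\bm{p}(\mathrm{in}(S)\cup\mathrm{out}(S))\le\bm{p}[\vol(S)+|\partial(S)|]$ come from the defining property that $\bm{p}(A)\le\bm{p}[w(A)]$ for any set of ordered pairs $A$ (via the sweep ordering and concavity), not from laziness of the walk, but this does not affect the validity of your argument.
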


\subsection{A mixing result for PPR and the proof of Lemma~\ref{lem:upperbound-of-sweep-conductance-local-2}}
Next, we show a local version of a mixing result for PPR\@.
A global version for graphs was proven in~\cite{Andersen:2007tsa}.
\begin{thm}\label{thm:mixing}
  Let $\bm{s}$ be a distribution and $\phi \in [0,1]$,
  $\mu \in (0,1/2]$ be any constants.
  We set $\bm{p} = \pr_\alpha(\bm{s})$. Let $\ell_\mu \in \{1,2,\dots, n\}$ be the unique integer such that
$
    \vol(S_{\ell_\mu-1}^{\bm{p}}) < \mu\cdot\vol(V) \leq \vol(S_{\ell_\mu}^{\bm{p}}).$
 We assume that $\phi_H(S_j^{\bm{p}})\geq \phi$ holds for any $j = 1,2,\dots,\ell_\mu$. Then, for any $S \subseteq V$ such that $\vol(S)/\vol(V)\leq \mu$ and any $t\in \ZZ_+$, the following  holds:
\begin{align}
   \bm{p}(S) -\bm{\pi}_V(S) \leq \alpha t + \sqrt{\vol(S)} {\left( 1- \frac{\phi^2}{8} \right)}^t.  \label{eq:mixing}
\end{align}
\end{thm}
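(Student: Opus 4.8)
The plan is to translate everything into the language of the Lov\'asz--Simonovits curve $\bm{p}[\cdot]$ of $\bm{p}=\pr_\alpha(\bm{s})$ and to show that this curve is pushed down towards the diagonal $x\mapsto x/\vol(V)$ at a geometric rate governed by $\phi$. First I would reduce the statement to a bound on the curve: since the sweep sets $S_j^{\bm{p}}$ are exactly the volume-constrained maximizers of $\bm{p}(\cdot)$, any $S$ with $\vol(S)=x$ satisfies $\bm{p}(S)\le \bm{p}[x]$, while $\bm{\pi}_V(S)=x/\vol(V)$ by definition of $\bm{\pi}_V$. Hence it suffices to prove, for $x=\vol(S)\le \mu\vol(V)\le\vol(V)/2$,
\[
\bm{p}[x]-\frac{x}{\vol(V)} \le \alpha t + \min\{\sqrt{x},\sqrt{\vol(V)-x}\}\Bigl(1-\frac{\phi^2}{8}\Bigr)^t,
\]
which for such $x$ gives the claimed bound because $\min\{\sqrt{x},\sqrt{\vol(V)-x}\}=\sqrt{x}$.

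I would prove this inequality by induction on $t$, carrying it as an invariant for every $x\in[0,\vol(V)]$ and not only at $x=\vol(S)$. For the base case $t=0$ the bound is straightforward from concavity of $\bm{p}[\cdot]$ together with $\bm{p}[0]=0$, $\bm{p}[\vol(V)]=1$, and $\bm{p}[x]\le 1$. For the inductive step I would invoke Lemma~\ref{lem:ineq-Lovasz-Simonovitz}, which at each breakpoint $x_j=\vol(S_j^{\bm{p}})$ bounds $\bm{p}[x_j]$ by $\alpha\,\bm{s}[x_j]+\tfrac{1-\alpha}{2}\bigl(\bm{p}[x_j-c_j]+\bm{p}[x_j+c_j]\bigr)$ with $c_j=|\partial(S_j^{\bm{p}})|$. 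Bounding $\bm{s}[x_j]\le 1$ produces the additive $\alpha t$ term, and feeding the induction hypothesis into $\bm{p}[x_j\pm c_j]$ reduces the decay estimate to a statement about the concave profile $f(x)=\min\{\sqrt{x},\sqrt{\vol(V)-x}\}$: the linear part averages exactly, while $\tfrac12\bigl(f(x_j-c_j)+f(x_j+c_j)\bigr)\le f(x_j)$ by concavity. The extra factor $1-\phi^2/8$ then comes from the cut lower bound $c_j\ge \phi\min\{\vol(S_j^{\bm{p}}),\vol(V)-\vol(S_j^{\bm{p}})\}$ supplied by $\phi_H(S_j^{\bm{p}})\ge\phi$, via the elementary inequality $\tfrac12(\sqrt{1-\phi}+\sqrt{1+\phi})\le 1-\phi^2/8$; finally concavity lets me pass from the breakpoints to all $x$ by linear interpolation.

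The hard part will be the \emph{locality} of the conductance hypothesis: $\phi_H(S_j^{\bm{p}})\ge\phi$ is assumed only for $j\le\ell_\mu$, i.e.\ for breakpoints with $x_j\le\mu\vol(V)$, whereas the recursion at such a breakpoint reaches the point $x_j+c_j$, which may cross into the uncontrolled region $x>\mu\vol(V)$ where no decay is available. To close the induction I would keep the invariant over the whole interval $[0,\vol(V)]$ but establish it in two regimes: on the controlled range I extract the factor $1-\phi^2/8$ as above, while on the uncontrolled range I rely on the trivial bound $\bm{p}[x]\le1$ and verify that the target envelope already exceeds $1$ there. This is possible precisely because the profile $f$ is symmetric about $\vol(V)/2$ and decays to $0$ at the endpoints, so the linear term $x/\vol(V)$ together with the accumulated $\alpha t$ keeps the envelope above $\bm{p}[x]$ without needing the conductance bound. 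Verifying that these two regimes match up consistently across the threshold $x=\mu\vol(V)$ — so that the shifted evaluations $\bm{p}[x_j\pm c_j]$ from controlled breakpoints are always dominated by the $(t-1)$-level envelope — is the delicate bookkeeping that localizes the global Andersen--Chung--Lang argument.
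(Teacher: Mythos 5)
Your plan follows the paper's proof almost exactly in outline: the same reduction $\bm{p}(S)\le\bm{p}[\vol(S)]$, the same envelope $f_t(x)=\alpha t+\sqrt{\min\{x,\vol(V)-x\}}\,(1-\phi^2/8)^t$, and the same induction on $t$ driven by Lemma~\ref{lem:ineq-Lovasz-Simonovitz}, concavity of the profile, and the cut bound $\mathrm{cut}(S_j^{\bm{p}})\ge\phi\min\{\vol(S_j^{\bm{p}}),\vol(V)-\vol(S_j^{\bm{p}})\}$. The genuine gap sits exactly at the point you yourself flag as the hard part, and your proposed resolution of it is false. To run the inductive step at a controlled breakpoint $x_j\le\mu\vol(V)$ you must invoke the level-$t$ invariant at $x_j+c_j$, which can lie in the uncontrolled region (already for $\mu=1/2$, since $c_j$ can be as large as $\min\{\vol(S_j^{\bm{p}}),\vol(V)-\vol(S_j^{\bm{p}})\}$); you propose to cover that region by the trivial bound $\bm{p}[x]\le 1$ after ``verifying that the target envelope already exceeds $1$ there.'' It does not. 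On the uncontrolled region the envelope equals $x/\vol(V)+\alpha t+\sqrt{\min\{x,\vol(V)-x\}}(1-\phi^2/8)^t$; for $x$ just above $\mu\vol(V)$ this is at most $\mu+\alpha t+\sqrt{\vol(V)}(1-\phi^2/8)^t$, and since the invariant must hold for \emph{every} $t\in\ZZ_+$, you may take $t=\lceil(8/\phi^2)\log(8\sqrt{\vol(V)})\rceil$ (making the decaying term at most $1/8$) while $\alpha$ is small enough that $\alpha t\le 1/8$; then the envelope is at most $\mu+1/4\le 3/4<1$. This is not an exotic regime---it is precisely the regime in which the theorem is later applied, where Lemma~\ref{lem:upperbound-of-sweep-conductance-local} takes $t$ of order $(9/\phi^2)\log(4\sqrt{\vol(S)}/\delta)$ and needs $\alpha t$ of order $\delta$. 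So your two-regime invariant fails at level $t$ and the induction does not close. The paper's proof, by contrast, never asserts any domination of $\bm{p}[\cdot]$ beyond $\mu\vol(V)$: it carries the invariant only at the breakpoints $x_0,\dots,x_{\ell_\mu}$ and only at the end extends it, via concavity of $f_t$ and piecewise linearity of $\bm{p}[x]-x/\vol(V)$, to all of $[0,\mu\vol(V)]$; the treatment of the shifted points is left implicit in its sketch, but it does not rest on the claim your argument needs.

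A second, smaller soft spot is your base case. Concavity of $\bm{p}[\cdot]$ together with $\bm{p}[0]=0$, $\bm{p}[\vol(V)]=1$, and $\bm{p}[x]\le 1$ does \emph{not} imply $\bm{p}[x]-x/\vol(V)\le\sqrt{\min\{x,\vol(V)-x\}}$: the curve $\bm{p}[x]=\min\{1,x/\varepsilon\}$ satisfies all four properties yet violates the inequality at $x=\varepsilon$ once $\varepsilon$ is small. What is additionally required is control of the slope $\max_v\bm{p}(v)/d_v$ of the Lov\'asz--Simonovits curve; in Andersen~et~al.'s unweighted setting this comes for free from $d_v\ge 1$, whereas in the weighted hypergraph setting it is a genuine normalization issue---the same one that forces the paper to prove the scale-dependent Lemma~\ref{lem:upperbound-of-sweep-conductance-local} first and then remove the normalization by a rescaling argument when deducing Lemma~\ref{lem:upperbound-of-sweep-conductance-local-2}. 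Your write-up should either impose such a normalization explicitly or carry it through the induction.
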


\begin{proof}
We can show this theorem by an argument similar to that of~\cite{Andersen:2007tsa}. Here, we show a sketch of the proof.

We define a function $f_t(x)$ on $x \in [0, \vol(S_{\ell_\mu}^{\bm{p}})]$ by
\[
 f_t(x) := \alpha t + \sqrt{ \min\{x, \vol(V) - x\}}
 {\left( 1- \frac{\phi^2}{8} \right)}^t.
\]
Because $\bm{p}(S) \leq \bm{p}[\vol(S)]$ for any vertex set $S \subseteq V$, if for a vertex set $S \subseteq V$ such that
$\vol(S)/ \vol(V) \leq \mu$,
\begin{align}
 \bm{p}[x] - \frac{x}{\vol(V)} \leq f_t(x) \label{eq:mixing-Lovasz}
\end{align}
holds for $x = \vol(S)$, then the inequality~\eqref{eq:mixing} for $S$
follows.

To prove Theorem~\ref{thm:mixing}, by induction on $t$,
we show that
if a sweep cut $S_j^{\bm{p}}$ satisfies $\phi_H(S_j^{\bm{p}})\geq \phi$,
then the inequality~\eqref{eq:mixing-Lovasz} holds for
 $x_j=\vol(S_j^{\bm{p}})$ and any $t\in\ZZ_+$ as follows:
\begin{enumerate}
  \leftskip=-5pt
 \item We prove that the inequality~\eqref{eq:mixing-Lovasz} holds for $t=0$ and any $S\subseteq V$.
 \item We assume that the inequality~\eqref{eq:mixing-Lovasz}
 holds for $t = t_0$ and $x_j = \vol(S_j^{\bm{p}})$.
 Then, by Lemma~\ref{lem:ineq-Lovasz-Simonovitz}, the concavity of $f_t(x_j)$, and the piecewise linearity of $\bm{p}[x_j]-x_j/\vol(V)$,
 we prove that if $S_j^{\bm{p}}$ satisfies
 $\phi_H(S_j^{\bm{p}})\geq \phi$,
then the inequality~\eqref{eq:mixing-Lovasz} holds for
 $x_j = \mathrm{vol}(S_j^{\bm{p}})$ and $t = t_0 + 1$.
 \end{enumerate}
This argument implies that if any sweep cut $S_j^{\bm{p}}$ ($j=1,2,\dots, \ell_\mu$) satisfies $\phi_H(S_j^{\bm{p}})\geq \phi$,
the inequality~\eqref{eq:mixing-Lovasz} holds for
$x_j=\vol(S_j^{\bm{p}})$ for all $j \in \{0,1,\dots, \ell_\mu \}$ and any $t\in \ZZ_+$.
Then, by the concavity of $f_t(x)$ and the piecewise
 linearity of $p[x] - x/\vol(V)$ again, the inequality~\eqref{eq:mixing-Lovasz} holds for any $x \in [0, \mu\cdot \vol(V)]$ and any $t\in \ZZ_+$.
 Because $\bm{p}(S) \leq \bm{p}[\vol(S)]$, we obtain the  inequality~\eqref{eq:mixing} for any
 $S\subseteq V$ with $\vol(S)/\vol(V)\leq \mu$. \qedhere
\end{proof}

We next prove the following  
as a consequence of Theorem~\ref{thm:mixing}.

\begin{lem}\label{lem:upperbound-of-sweep-conductance-local}
  If there is a vertex set $S \subseteq V$ and
  a constant
  $\delta \geq 4/\sqrt{\vol(V)}$ such that
  $\vol(S)/\vol(V) \leq \mu$ and $\bm{\pr}_\alpha(\bm{s}) (S) -\bm{\pi}_V(S) > \delta$,
  then the following inequality holds:
  \begin{align}
    \phi_H^\mu(\bm{\pr}_{\alpha}(\bm{s})) <
    \sqrt{\frac{12 \alpha \log(\vol(V))}{\delta}}.\label{eq:sweep-bounded-by-alpha}
  \end{align}
\end{lem}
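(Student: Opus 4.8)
The plan is to read off the bound directly from the mixing estimate in Theorem~\ref{thm:mixing}. Write $\bm{p}=\bm{\pr}_\alpha(\bm{s})$ and set $\phi=\phi_H^\mu(\bm{p})=\min\{\phi_H(S_j^{\bm{p}})\mid 1\le j\le \ell_\mu\}$. By the definition of $\phi_H^\mu$, every sweep cut $S_j^{\bm{p}}$ with $j\le \ell_\mu$ satisfies $\phi_H(S_j^{\bm{p}})\ge \phi$, so the hypothesis of Theorem~\ref{thm:mixing} holds with this value of $\phi$. The hypotheses also give $4/\sqrt{\vol(V)}\le \delta<\bm{p}(S)-\bm{\pi}_V(S)\le 1$, so $\vol(V)>16$ and in particular $\log\vol(V)>0$; hence if $\phi=0$ the inequality~\eqref{eq:sweep-bounded-by-alpha} holds trivially and I may assume $\phi>0$. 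Since conductance never exceeds $1$ (as $\mathrm{cut}(S)\le \min\{\vol(S),\vol(\ol{S})\}$), I also have $\phi\le 1$.

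Next I would feed the assumption $\bm{p}(S)-\bm{\pi}_V(S)>\delta$ into Theorem~\ref{thm:mixing}, using $\vol(S)\le \vol(V)$ and $1-x\le e^{-x}$, to obtain for every $t\in\ZZ_+$
\[
  \delta \;<\; \alpha t + \sqrt{\vol(S)}\Bigl(1-\tfrac{\phi^2}{8}\Bigr)^t \;\le\; \alpha t + \sqrt{\vol(V)}\,e^{-t\phi^2/8}.
\]
The heart of the argument is to pick the integer $t$ that makes the exponential term a small fraction of $\delta$. Taking $t=\bigl\lceil \tfrac{8}{\phi^2}\log\tfrac{4\sqrt{\vol(V)}}{\delta}\bigr\rceil$ forces $\sqrt{\vol(V)}\,e^{-t\phi^2/8}\le \delta/4$, so the displayed inequality collapses to $\alpha t>\tfrac34\delta$. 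Rearranging and then using the standing hypothesis $\delta\ge 4/\sqrt{\vol(V)}$, which yields $\tfrac{4\sqrt{\vol(V)}}{\delta}\le \vol(V)$ and hence $\log\tfrac{4\sqrt{\vol(V)}}{\delta}\le \log\vol(V)$, converts the bound into the claimed form $\phi^2<\tfrac{12\alpha\log\vol(V)}{\delta}$.

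The step that needs care is the constant-chasing around the integrality of $t$: the ceiling contributes an additive $+1$ that must be absorbed without pushing the leading constant past $12$. The clean way is to note that the hypotheses force $t$ to be large. Since $\delta<1$ gives $\tfrac{4\sqrt{\vol(V)}}{\delta}\ge 16/\delta^2>16$, the real number $t^{\ast}=\tfrac{8}{\phi^2}\log\tfrac{4\sqrt{\vol(V)}}{\delta}$ exceeds $\tfrac{8}{\phi^2}\log 16>16$ (here $\phi\le 1$ is used), so $\lceil t^{\ast}\rceil\le \tfrac{17}{16}t^{\ast}$. Carrying this factor through $\alpha t>\tfrac34\delta$ together with $t\le\tfrac{17}{16}t^{\ast}$ gives $\phi^2<\tfrac{34}{3}\cdot\tfrac{\alpha}{\delta}\log\tfrac{4\sqrt{\vol(V)}}{\delta}$, and since $\tfrac{34}{3}<12$ and $\log\tfrac{4\sqrt{\vol(V)}}{\delta}\le \log\vol(V)$, inequality~\eqref{eq:sweep-bounded-by-alpha} follows. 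The main obstacle is thus bookkeeping rather than conceptual: all the work is in choosing the threshold ($\delta/4$) and verifying that the largeness of $t^{\ast}$ lets the rounding loss be swallowed within the target constant.
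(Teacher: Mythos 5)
Your proposal is correct and follows essentially the same route as the paper's proof: both apply Theorem~\ref{thm:mixing} with $\phi=\phi_H^\mu(\bm{\pr}_\alpha(\bm{s}))$, choose $t$ as the ceiling of $\tfrac{8}{\phi^2}\log\bigl(4\sqrt{\cdot}/\delta\bigr)$ so the exponential term drops below $\delta/4$, rearrange, and invoke $\delta\geq 4/\sqrt{\vol(V)}$ to replace the logarithm by $\log\vol(V)$. The only differences are bookkeeping choices (the paper keeps $\vol(S)$ inside the logarithm and absorbs the ceiling via $\lceil 8x\rceil\leq 9x$ for $x\geq 1$, while you relax to $\vol(V)$ early and absorb it via the factor $17/16$), which do not change the argument.
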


\begin{proof}
We apply Theorem~\ref{thm:mixing} for $\phi = \phi_H^\mu(\pr_\alpha(\bm{s}))$. Then,
because every sweep cut $S_j^{\bm{p}}$
($j = 1,2, \dots, \ell$) satisfies the assumption 
$\phi_H(S_j^{\bm{p}}) \geq \phi$ in Theorem~\ref{thm:mixing},
 any vertex set $S\subseteq V$ with $\vol(S)/\vol(V)\leq \mu$ satisfies
 the inequality~\eqref{eq:mixing} for any $t \in \ZZ_+$.
As in the argument of~\cite{Andersen:2007tsa}, we take $t$ as
\[
 t = \left\lceil \frac{8}{\phi^2}\log\left(\frac{4\sqrt{\vol(S)}}{\delta}
 \right) \right\rceil \leq \frac{9}{\phi^2}\log\left(\frac{4\sqrt{\vol(S)}}{\delta}
 \right).
\]
Here, $\lceil x \rceil = \min\{ a \in \ZZ \mid x \leq a \}$.
Then by the inequality~\eqref{eq:mixing}, we have
\[
 \bm{p}(S) -\bm{\pi}_V(S) \leq \alpha \frac{9}{\phi^2}\log\left(\frac{4\sqrt{\vol(S)}}{\delta}
 \right)  + \frac{\delta}{4}.
\]
By the assumption $\delta < \bm{p}(S) -\bm{\pi}_V(S)$
and rearranging the obtained inequality, we have the desired inequality~\eqref{eq:sweep-bounded-by-alpha}. \qedhere
\end{proof}

Finally, we deduce Lemma~\ref{lem:upperbound-of-sweep-conductance-local-2} from Lemma~\ref{lem:upperbound-of-sweep-conductance-local}. 
\begin{proof}
[Proof of Lemma~\ref{lem:upperbound-of-sweep-conductance-local-2}]
The statement of Lemma~\ref{lem:upperbound-of-sweep-conductance-local} is independent of the normalization of the weight function $w\colon E \to \RR_{\geq 0}$, except for the factor $\vol(V)$. Hence, for the hypergraph $H^c=(V,E,cw)$ scaled by  $c >0$, the similar statement to Lemma~\ref{lem:upperbound-of-sweep-conductance-local} obtained by replacing $\vol(V)$ with 
$c \cdot \vol(V)$ also holds. 
This means that the right hand side of 
\eqref{eq:sweep-bounded-by-alpha} can be decreased as long as 
the assumption $\delta \geq 4/\sqrt{c\cdot\vol(V)}$ holds. 
Then, we remark that the obtained conductance 
$\phi_{H^c}^\mu (\pr_\alpha(\bm{s}))$ is independent of $c$. 
 As a consequence of this argument, if a hypergraph $H$ satisfies 
 the assumption of Lemma~\ref{lem:upperbound-of-sweep-conductance-local}, then, Lemma~\ref{lem:upperbound-of-sweep-conductance-local} holds for $H^c$ for any $c$ that satisfies the assumption $\delta \geq 4/\sqrt{c \cdot \vol(V)}$, hence 
also for $c\cdot\vol(V) = 4^2/\delta^2$. 
However, the left hand side of 
\eqref{eq:sweep-bounded-by-alpha} for $H$ is the same as that 
for $H^c$. 
This implies the inequality \eqref{eq:sweep-bounded-by-alpha-2}.
\end{proof}


\section{Proof of Lemma~\ref{lem:lowerbound-of-pagerank}}
\label{sec:proof-lowerbound-of-pagerank}
\begin{proof}[Proof of Lemma~\ref{lem:lowerbound-of-pagerank}]
Let $\bm{p}_v = \bm{\pr}_\alpha(\bm{\chi}_v)$ for $v\in V$.
We consider a random variable $X := \bm{p}_v(\ol{C})$, where $v \in V$ is sampled according to the distribution $\bm{\pi_C}$.
Then by Lemma~\ref{lem:bound-by-conductance}, we have
\begin{align*}
 \mathrm{E}_{\bm{\pi}_C}[X] & = \sum_{v \in V} \bm{\pi}_C(v) \bm{p}_v(\ol{C})
 \leq \frac{\phi_H(C)}{2\alpha}.
\end{align*}
By Markov's inequality, we have
\[
 \Pr[v \not\in C_\alpha] \leq  \Pr\left[ X > 2
 \mathrm{E}_{\bm{\pi}_C}[X]\right] \leq \frac{1}{2}.
\]
This implies that $\vol(C_\alpha)$ is larger than $\vol(C)/2$.\qedhere
\end{proof}

\section{Proofs of sufficient conditions}\label{sec:proof-sufficient-conditions}

\begin{proof}[Proof of Lemma~\ref{lem:criterion-1}]
Let $\bm{p}_v = \bm{\pr}_\alpha(\bm{\chi}_v)$ for $v \in V$.
For a vertex $v \in V$, we have
\[
  \bm{p}_v(v) \leq 
  \frac{1}{2} 
  \leq 1- \frac{\vol(C)}{\vol(V)}.
\]
Hence by multiplying both sides by $\bm{\pi}_C(v) = d_v/\vol(C)$ and 
transposing a term, we have 
\begin{align}
 \bm{\pi}_C(v)\bm{p}_v(v) +
  \bm{\pi}_V(v)  \leq \bm{\pi}_C(v). 
  \label{eq:inequality-pr-unif}
\end{align}
Now, we have
\begin{align*}
 & \sum_{u\in C}\bm{\pi}_C(u)\bm{p}_u(v)
 = \bm{\pi}_C(v)\bm{p}_v(v) +
 \sum_{u\in C\setminus \{v \}}\bm{\pi}_C(u)
 \bm{p}_u(v) \\
 &\leq \bm{\pi}_C(v)\bm{p}_v(v) +
 \sum_{u\in C\setminus \{v \}}\bm{\pi}_C(u)
 \bm{\pi}_V(v) \tag{by Proposition~\ref{prop:bounds-of-values-of-PPR}}\\
 &\leq \bm{\pi}_C(v)\bm{p}_v(v) +
 \bm{\pi}_V(v) \leq \bm{\pi}_C(v) \tag{by~\eqref{eq:inequality-pr-unif}}.
\end{align*}
This concludes the proof.
\end{proof}


\begin{proof}[Proof of Lemma~\ref{lem:criterion-2}]
  Let $\bm{p} = \bm{\pr}_\alpha(\bm{\chi}_v)$.
  By an argument similar to the proof of Lemma~\ref{lem:bound-by-conductance} for the case $C = \{v \}$, we have
  \begin{align*}
  & \alpha(1 - \bm{p}(v)) = (1-\alpha) \bm{p}(v) - (1-\alpha)
  \sum_{u \in \ol{C}, u\neq v} \frac{w_{\bm{p}}(vu)}{d_{u}} \bm{p}(u).
  \end{align*}
  Then, we have
  \[
    \bm{p}(v)
    = \alpha + (1-\alpha) \sum_{u \in \ol{C}, u\neq v} \frac{w_{\bm{p}}(vu)}{d_{u}} \bm{p}(u).
  \]
  Because $\bm{p}(u) \leq d_u/\vol(V)$ holds for any $u\neq v$ by Proposition~\ref{prop:bounds-of-values-of-PPR},
  we have
  \begin{align*}
  \bm{p}(v) \leq \alpha + (1-\alpha)
  \sum_{u \in \ol{C}, u\neq v} \frac{w_{\bm{p}}(vu)}{d_{u}} \frac{d_u}{\vol(V)}
  \leq \alpha + (1-\alpha)\frac{d_{\max}}{\vol(V)}
  \end{align*}
  By combining this inequality with Lemma~\ref{lem:criterion-1}, if $\alpha + (1-\alpha)d_{\max}/\vol(V) \leq 1/2$, then $\alpha$ satisfies
  the condition~\eqref{eq:assumption} for any
$v\in C\!\setminus\! C^\circ$.
\end{proof}


\end{document}